\DeclareSymbolFont{largesymbolsstix}{LS2}{stixex}{m}{n}
\DeclareMathDelimiter{\lbrbrak}{\mathopen}{largesymbolsstix}{"EE}{largesymbolsstix}{"14}
\DeclareMathDelimiter{\rbrbrak}{\mathclose}{largesymbolsstix}{"EF}{largesymbolsstix}{"15}
\theoremstyle{definition}
\newtheorem{defin}{Definition}[subsection]
\newtheorem{rem}[defin]{Remark}
\newtheorem{rems}[defin]{Remarks}
\newtheorem{ex}[defin]{Example}
\newtheorem{exs}[defin]{Examples}
\theoremstyle{plain}
\newtheorem{theor}[defin]{Theorem}
\newtheorem{prop}[defin]{Proposition}
\newtheorem{cor}[defin]{Corollary}
\def\N{{\mathbb{N}}}
\def\Z{{\mathbb{Z}}}
\def\im{{\mbox{im}}}
\def\length{{\textsc{l}}}
\def\A{{\mathcal{A}}}
\def\B{{\mathcal{B}}}
\def\C{{\mathcal{C}}}
\def\star{{\rm{star}}}
\author{Thomas Kahl}
\title{Weak equivalence of higher-dimensional automata\thanks{This research was partially supported by FCT (\emph{Funda\c c\~ao para a Ci\^encia e a Tecnologia}, Portugal) through project UID/MAT/00013/2013.}}
\affiliation{
  Centro de Matem\'atica, Universidade do Minho, Braga,
  Portugal}
\keywords{Higher-dimensional automata, weak equivalence, trace language, homology language}
\begin{document}
\publicationdetails{23}{2021}{1}{12}{5884}
\maketitle
\begin{abstract}
  This paper introduces a notion of equivalence for higher-dimensional automata, called weak equivalence. Weak equivalence focuses mainly on a traditional trace language and a new homology language, which captures the overall independence structure of an HDA. It is shown that weak equivalence is compatible with both the tensor product and the coproduct of HDAs and that, under certain conditions, HDAs may be reduced to weakly equivalent smaller ones by merging and collapsing cubes.	
\end{abstract}

\section{Introduction}

\subsection{Higher-dimensional automata} A higher-dimensional automaton (HDA) is an automaton with a supplementary structure consisting of two- and higher-dimensional cubes linking its states and transitions. The underlying automaton of an HDA represents a concurrent system. An \(n\)-cube in an HDA indicates that the \(n\) actions starting at its origin are independent in the sense that they may be executed in any order, or even simultaneously,  without any observable difference. The notion of higher-dimensional automaton goes back to Pratt \cite{Pratt}. The concept used in this paper is essentially a generalization of the one defined by van Glabbeek \cite{vanGlabbeek}. Our definition differs from the one of van Glabbeek in that we consider HDAs over concurrent alphabets and allow labels to be words (see Section \ref{HDAdef}).

\subsection{Weak equivalence} The purpose of this paper is to introduce a concept of equivalence for higher-dimensional automata, called \emph{weak equivalence}. The adjective \emph{weak} is meant to emphasize that the structure two HDAs must have in common to be considered equivalent is reduced to a few essential features. More precisely, two HDAs must satisfy three conditions to be weakly equivalent. The first condition guarantees that two HDAs are \emph{not} weakly equivalent if one of them has unreachable states but the other does not or if one has bad features such as deadlocks but the other does not. 

The second requirement is that weakly equivalent HDAs must have the same \emph{trace language} and the same \emph{fundamental monoid}. These are defined along traditional lines as subsets of the trace monoid associated with the concurrent alphabet of the HDAs (see Section \ref{sectrace}). Higher-dimensional automata with the same trace language behave the same with respect to safety properties that are compatible with the congruence relation induced by the independence relation of the concurrent alphabet. 

The third and last condition for weak equivalence concerns primarily the higher-dimensional structure of HDAs. In \cite{labels}, it has been shown that the cubical homology of an HDA can be equipped with a labeling. In Section \ref{secHL}, this labeling is used to define the \emph{homology language} of an HDA, which reflects its global independence structure. Weakly equivalent HDAs are required to have the same homology language.

\subsection{Weak implementation} 

Weak equivalence is the symmetric closure of a preorder, which we call \emph{weak implementation}. The definition of this preorder is obtained from the one of weak equivalence essentially by replacing equalities by inclusions (see Section \ref{secwe}). Weak implementation is related to morphisms of HDAs in the following way: if \(\A\) and \(\B\) are two HDAs over the same concurrent alphabet and there exists a morphism from \(\A\) to \(\B\) that respects the concurrent alphabet and preserves unreachable and uncoreachable states, then \(\A\) weakly implements \(\B\). This still holds for the more flexible cubical dimaps of HDAs, which have been introduced in \cite{labels}. Cubical dimaps permit one to compare HDAs of different atomicity levels, which is not possible with morphisms: if an HDA is constructed from another one by merging cubes and edge labels, then there will exist a cubical dimap but no morphism between the two HDAs. Cubical dimaps will be discussed in Section \ref{seccubdi}.

\subsection{Parallel composition and nondeterministic sum}

Higher-dimensional automata may be used in different ways to model concurrent systems. For HDAs modeling shared-variable systems, two categorical constructions are particularly important: the tensor product of HDAs, which models the parallel composition of independent concurrent systems, and the coproduct of HDAs, which corresponds to the nondeterministic sum of concurrent systems. We show that the relations of weak equivalence and weak implementation are compatible with the tensor product and, for coaccessible HDAs, with the coproduct (see Section \ref{wedef}).

\subsection{Reduction of HDAs}

In view of the state explosion problem, it is desirable to be able to reduce HDAs to weakly equivalent smaller ones. In \cite{topabs}, conditions have been established under which a so-called \emph{topological abstraction} of an HDA can be constructed by collapsing and merging cubes. Since the relation of topological abstraction is normally stronger than weak equivalence, it is possible to adapt the results of \cite{topabs} to obtain reduction operations that yield weakly equivalent HDAs. This is done in Sections \ref{collapse} and \ref{merging}.

\subsection{Background and related work}

Higher-dimensional automata have been devised by Pratt and van Glabbeek (see \cite{Pratt, vanGlabbeek}). A bibliography on HDAs can be found in \cite{vanGlabbeek}. Descriptions of how HDAs can be used to model concurrent systems are contained in \cite{FGHMR, GaucherProcess, vanGlabbeek, GoubaultMimram, transhda}. 

This paper adopts Winskel and Nielsen's categorical perspective on models for concurrency, according to which the morphisms in a category of objects modeling concurrent systems represent simulations and categorical constructions correspond to composition operators \cite{WinskelNielsen}. From this point of view, weak equivalence is coarser than a kind of simulation equivalence. As we note in Remark \ref{bisimrem}, this does not remain true for history-preserving bisimilarity in the sense of \cite{vanGlabbeek}. A categorical theory of bisimulation, which may be used to define a notion of bisimilarity that is stronger than weak equivalence, is developed in \cite{JoyalNielsenWinskel}. A comparison of different approaches to  simulation is provided in \cite{LynchVaandrager}.  

Weak implementation may be considered a coarse precongruence in the spirit of \cite{vanGlabbeekCoarsest}. The main properties preserved by weak implementation and weak equivalence are the trace language and the homology language. The trace language fits within the framework of 
Mazurkiewicz trace theory. The fundamental material on this subject is contained in \cite{AalbersbergRozenberg, Diekert, DiekertMetivier, DiekertMuscholl, Mazurkiewicz, Mazurkiewicz2}. The definition of the homology language is based on concepts from algebraic topology. Two of the many textbooks in this area are \cite{Dold, Hatcher}. 

The existence of connections between concurrency theory and algebraic topology is at the origin of the field of directed algebraic topology \cite{FGHMR, GrandisBook}. Since the homology language is invariant under cubical dimaps that are homotopy equivalences (see Proposition \ref{HLinc}), it may be considered a directed homotopy invariant of HDAs. However, since it depends on the labeling structure of HDAs rather than on their directed topology, it is not a concept of directed homology as those considered in the literature (see, \textit{e.g.}, \cite{GrandisBook, hgraph}). A brief account of work on directed homology is given in \cite[p. 153]{FGHMR}.

Our results on the reduction of HDAs are inspired by but not directly related to work on partial order reduction \cite{ Godefroid, Peled} and discrete Morse theory \cite{FormanMorseTheory}.

\section{Higher-dimensional automata}

This section presents basic material on precubical sets, concurrent alphabets, and  higher-dimensional automata. The definition of higher-dimensional automata is essentially the one of van Glabbeek \cite{vanGlabbeek}, with the difference that we consider HDAs over concurrent alphabets and allow labels to be words. It is not our intention to provide a comprehensive introduction to the material of this section. For more details, explanations, and examples, the reader is referred to, \textit{e.g.}, \cite{Diekert, Mazurkiewicz, FGHMR, vanGlabbeek}.

\subsection{Precubical sets} \label{precubs}

A \emph{precubical set} is a graded set \(P = (P_n)_{n \geq 0}\) with  \emph{boundary} or \emph{face operators} \(d^k_i\colon P_n \to P_{n-1}\) \(({n>0,}\;k\in\{0,1\},\; i \in \{1, \dots, n\})\) satisfying the relations \(d^k_i\circ d^l_{j}= d^l_{j-1}\circ d^k_i\) \((k,l \in \{0,1\},\; i<j)\). If \(x\in P_n\), we say that \(x\) is of  \emph{degree} or \emph{dimension} \(n\). The elements of degree \(n\) are called the \emph{\(n\)-cubes} of \(P\). The elements of degree \(0\) are also called the \emph{vertices} of \(P\), and the \(1\)-cubes are also called the \emph{edges} of \(P\). A face \(d^k_ix\) is called a \emph{front face} of \(x\) if \(k=0\) and a \emph{back face} of \(x\) if \(k=1\). A \emph{precubical subset} of a precubical set \(P\) is a graded subset of \(P\) that is stable under the boundary operators. A \emph{morphism} of precubical sets is a morphism of graded sets that is compatible with the boundary operators. 

The category of precubical sets can be seen as the presheaf category \({\mathsf{Set}}^{{\square}^{\mbox{\tiny op}}}\) where \(\square\) is the small subcategory of the category of topological spaces whose objects are the standard \(n\)-cubes \([0,1]^n\) \((n \geq 0)\) and whose nonidentity morphisms are composites of the maps \(\delta^k_i\colon [0,1]^n\to [0,1]^{n+1}\) (\(k \in \{0,1\}\), \(n \geq 0\), \(i \in  \{1, \dots, n+1\}\)) given by \(\delta_i^k(u_1,\dots, u_n)= (u_1,\dots, u_{i-1},k,u_i \dots, u_n)\).

\subsection{Tensor product of precubical sets} \label{tensor}

The \emph{tensor product} of two graded sets \(P\) and \(Q\) is the graded set \(P\otimes Q\) given by \[(P\otimes Q)_n = \coprod \limits_{p+q = n} P_p\times Q_q. \] 
If \(P\) and \(Q\) are precubical sets, then \(P\otimes Q\) is a precubical set. For an \(n\)-cube \((x,y) \in P_p\times Q_q\) \((p+q=n)\), the boundary operators are defined by 
\[d_i^k(x,y) = \left\{ \begin{array}{ll} (d_i^kx,y), & 1\leq i \leq p,\\
(x,d_{i-p}^ky), & p < i \leq n.
\end{array}\right.\] 
With respect to this tensor product, the category \({\mathsf{Set}}^{{\square}^{\mbox{\tiny op}}}\) is a monoidal category.

\subsection{Precubical intervals} 

Let \(k\) and \(l\) be two integers such that \(k \leq l\). The \emph{precubical interval}  \(\lbrbrak k,l \rbrbrak\) is the  precubical set defined by \(\lbrbrak k,l \rbrbrak_0 = \{k,\dots , l\}\), \(\lbrbrak k,l \rbrbrak_1 =  \{{[k,k+ 1]}, \dots , {[l- 1,l]}\}\), \(d_1^0[j-1,j] = j-1\), \(d_1^1[j-1,j] = j\), and \(\lbrbrak k,l \rbrbrak_{n} = \emptyset\) for \(n > 1\).

\subsection{Precubical cubes}
	
	The \emph{precubical \(n\mbox{-}\)cube} is the \(n\)-fold tensor product \({\lbrbrak 0,1\rbrbrak^ {\otimes n}}\). Here, we use the convention that \({\lbrbrak 0,1\rbrbrak^ {\otimes 0}}\) is the precubical set \(\lbrbrak 0, 0 \rbrbrak = \{0\}\). The only element of degree \(n\) in \(\lbrbrak 0,1\rbrbrak^ {\otimes n}\) will be denoted by \(\iota_n\). We thus have \(\iota_0 = 0\) and \(\iota_n = (\underbracket[0.5pt]{ [0,1] ,\dots , [ 0,1]}_{n\; {\text{times}}})\) for \(n>0\). Given an element \(x\) of degree \(n\) of a precubical set \(P\), there exists a unique morphism of precubical sets \(\lbrbrak 0,1\rbrbrak ^{\otimes n}\to P\) that sends \(\iota_n\) to \(x\). This morphism will be denoted by \(x_{\sharp}\). We say that \(x\) is \emph{regular} if \(x_{\sharp}\) is injective and that \(x\) is \emph{weakly regular} if the restrictions of \(x_{\sharp}\) to the graded subsets \((\lbrbrak 0,1 \rbrbrak\setminus\{1\})^ {\otimes n}\) and \((\lbrbrak 0,1 \rbrbrak\setminus \{0\})^ {\otimes n}\) of \(\lbrbrak 0,1 \rbrbrak^ {\otimes n}\) are injective. If all elements of \(P\) are (weakly) regular, we say that \(P\) is \emph{(weakly) regular}.

\subsection{Paths} \label{paths}
A \emph{path} of \emph{length \(l\)} \((l \geq 0)\) from a vertex \(v\) of a precubical set \(P\) to a vertex \(w\) is a morphism of precubical sets \(\omega \colon \lbrbrak 0,l \rbrbrak \to P\) such that \(\omega(0) = v\) and \(\omega(l) = w\). If \(\omega\) is a path of length \(l\), we write \(\length_\omega = l\). The set of paths in \(P\) is denoted by \(P^{\mathbb I}\). The \emph{concatenation} \(\omega \cdot\nu\)  of two paths \(\omega \colon \lbrbrak 0,k \rbrbrak \to P\) and \(\nu \colon \lbrbrak 0,l \rbrbrak \to P\) with \(\omega (k) = \nu (0)\) is defined in the obvious way. Note that every path in \(P\) of positive length can be uniquely written as a finite concatenation of paths of the form \(x_{\sharp}\) where \(x \in P_{1}\).

\subsection{Dihomotopy} Two paths \(\omega\) and \(\nu \) in a precubical set \(P\) are said to be \emph{elementarily dihomotopic} if there exist paths \(\alpha, \beta \in P^ {\mathbb I}\) and an element \(z \in P_2\) such that \(d_1^0d_1^0z = \alpha (\length_\alpha)\), \(d_1^1d_1^1z = \beta (0)\) and \[\{\omega ,\nu \} = \{{\alpha \cdot (d_1^0z)_{\sharp} \cdot (d_2^ 1z)_{\sharp} \cdot \beta}, {\alpha \cdot (d_2^0z)_{\sharp} \cdot (d_1^ 1z)_{\sharp} \cdot \beta} \}.\]  The \emph{dihomotopy} relation, denoted by \(\sim\), is the equivalence relation generated by elementary dihomotopy \cite{FGHMR} (see Figure \ref{dihompaths} for a picture).

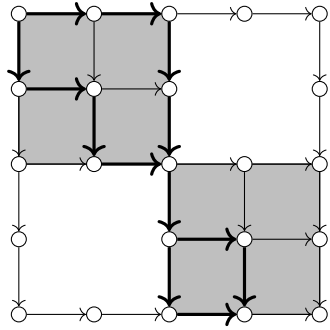
\begin{figure} 
	\center
	\begin{tikzpicture}[initial text={},on grid]

	\path[draw, fill=lightgray] (0,2)--(2,2)--(2,4)--(0,4)--cycle
	(2,0)--(4,0)--(4,2)--(2,2)--cycle;

	\node[state,minimum size=0pt,inner sep =2pt,fill=white] (q_0)   {}; 
	
	\node[state,minimum size=0pt,inner sep =2pt,fill=white] (q_1) [right=of q_0,xshift=0cm] {};
	
	\node[state,minimum size=0pt,inner sep =2pt,fill=white] (q_2) [right=of q_1,xshift=0cm] {};   
	
	\node[state,minimum size=0pt,inner sep =2pt,fill=white] [above=of q_0, yshift=0cm] (q_3)   {};

	\node[state,minimum size=0pt,inner sep =2pt,fill=white] (q_5) [above=of q_2,xshift=0cm] {}; 
	
	\node[state,minimum size=0pt,inner sep =2pt,fill=white] (q_6) [above=of q_3,yshift=0cm] {};
	
	\node[state,minimum size=0pt,inner sep =2pt,fill=white] (q_7) [right=of q_6,yshift=0cm] {};

	\node[state,minimum size=0pt,inner sep =2pt,fill=white] (q_4) [above=of q_7,yshift=0cm] {};

	\node[state,minimum size=0pt,inner sep =2pt,fill=white] (q_8) [above=of q_5,yshift=0cm] {};
	
	\node[state,minimum size=0pt,inner sep =2pt,fill=white] (q_9) [above=of q_6,yshift=0cm] {};
	
	\node[state,minimum size=0pt,inner sep =2pt,fill=white] (q_10) [above=of q_8,yshift=0cm] {};
	
	\node[state,minimum size=0pt,inner sep =2pt,fill=white] (q_11) [above=of q_9,yshift=0cm] {};
	
	\node[state,minimum size=0pt,inner sep =2pt,fill=white] (q_12) [right=of q_11,xshift=0cm] {};
	
	\node[state,minimum size=0pt,inner sep =2pt,fill=white] (q_13) [above=of q_10,yshift=0cm] {};

	\node[state,minimum size=0pt,inner sep =2pt,fill=white] (p_0) [right=of q_2,xshift=0cm]  {}; 
	
	\node[state,minimum size=0pt,inner sep =2pt,fill=white] (p_1) [right=of p_0,xshift=0cm] {};

	\node[state,minimum size=0pt,inner sep =2pt,fill=white] (p_3) [above=of p_0,xshift=0cm] {};

	\node[state,minimum size=0pt,inner sep =2pt,fill=white] (p_4) [above=of p_1,xshift=0cm] {};

	\node[state,minimum size=0pt,inner sep =2pt,fill=white] (p_6) [above=of p_0,yshift=1cm] {};
	
	\node[state,minimum size=0pt,inner sep =2pt,fill=white] (p_7) [above=of p_4,yshift=0cm] {};
	
	\node[state,minimum size=0pt,inner sep =2pt,fill=white] [above=of p_7, yshift=0cm] (p_8)   {};

	\node[state,minimum size=0pt,inner sep =2pt,fill=white] (p_11) [right=of q_13,yshift=0cm] {};
	
	\node[state,minimum size=0pt,inner sep =2pt,fill=white] (p_12) [right=of p_11,xshift=0cm] {};

	\path[->] 
	(q_0) edge[above]  node {} (q_1)
	(q_1) edge[above]  node {} (q_2)
	(q_2) edge[very thick,above]  node {} (p_0)    
	(p_0) edge[above]  node {} (p_1)
	
	(q_3) edge[left]  node {} (q_0)
	(q_5) edge[very thick,right]  node {} (q_2)
	(p_3) edge[very thick,right]  node {} (p_0)
	(p_4) edge[left]  node {} (p_1)
	
	(q_5) edge[very thick,above]  node {} (p_3)
	(p_3) edge[above]  node {} (p_4)
	
	(q_6) edge[left]  node {} (q_3)
	(q_8) edge[very thick,left]  node {} (q_5)    
	(p_6) edge[left]  node {} (p_3)	
	(p_7) edge[left]  node {} (p_4)
	
	(q_6) edge[above]  node {} (q_7)
	(q_7) edge[very thick,above]  node {} (q_8)
	(q_8) edge[above]  node {} (p_6)
	(p_6) edge[above]  node {} (p_7)
	
	(q_9) edge[left]  node {} (q_6)
	(q_4) edge[very thick,left]  node {} (q_7)    
	(q_10) edge[very thick,right]  node {} (q_8)
	(p_8) edge[left]  node {} (p_7)

	(q_9) edge[very thick,above]  node {} (q_4)
	(q_4) edge[above]  node {} (q_10)
	
	(q_11) edge[very thick,left]  node {} (q_9)
	(q_12) edge[left]  node {} (q_4)   
	(q_13) edge[very thick,right]  node {} (q_10)
	(p_12) edge[left]  node {} (p_8)
	
	(q_11) edge[very thick,above]  node {} (q_12)
	(q_12) edge[very thick,above]  node {} (q_13)
	(q_13) edge[above]  node {} (p_11)
	(p_11) edge[above]  node {} (p_12)
	;
	\end{tikzpicture}
	\caption{Dihomotopic paths}\label{dihompaths} 	
\end{figure}

\subsection{Free monoids} Let \(\Sigma\) be an alphabet, \textit{i.e.}, a set. The free monoid over \(\Sigma\) will be denoted by \(\Sigma^*\). The unit element of \(\Sigma^*\), which is the empty word, will be denoted by \(1\). Given a string \(m \in \Sigma^*\), we will write \(|m|\) to denote its \emph{length}, \textit{i.e.}, the unique integer \(n\) such that \(m \in \Sigma ^n\). We say that a string \(m\in \Sigma^*\) \emph{contains} an element \(a \in \Sigma\) if \(m \notin (\Sigma \setminus \{a\})^*\).

\subsection{Concurrent alphabets} A \emph{concurrent alphabet} is a pair \((\Sigma, D)\) where \(\Sigma\) is an alphabet and \(D\) is a reflexive and symmetric relation on \(\Sigma\) (see, \textit{e.g.}, \cite{Diekert, Mazurkiewicz2}). The relation \(D\) is called the \emph{dependence relation} of the concurrent alphabet, and the complement of \(D\) is called the associated \emph{independence relation}. A \emph{morphism} of concurrent alphabets \((\Sigma, D) \to (\Sigma', D')\) is a map \(\sigma \colon \Sigma \to \Sigma'\) such that \({(\sigma\times \sigma)^{-1}(D')} \subseteq D\).

The category of concurrent alphabets is a symmetric monoidal category with respect to the \emph{tensor product} defined by \({(\Sigma_1, D_1) \otimes (\Sigma_2,D_2)} = (\Sigma_1 \amalg \Sigma_2, D_\otimes)\) where \(D_\otimes\) is the union of the images of the canonical maps \[D_i \hookrightarrow \Sigma_i \times \Sigma_i \to (\Sigma_1 \amalg \Sigma_2) \times (\Sigma_1 \amalg \Sigma_2).\]
The tensor product is different from the \emph{coproduct} \({(\Sigma_1, D_1) \amalg (\Sigma_2,D_2)}\), which is the concurrent alphabet \((\Sigma_1 \amalg \Sigma_2, D_\amalg)\) where \(D_\amalg\) is the union of \(D_\otimes\) and the subsets \(\Sigma_1 \times \Sigma_2\) and  \(\Sigma_2 \times \Sigma_1\) of \((\Sigma_1 \amalg \Sigma_2) \times (\Sigma_1 \amalg \Sigma_2)\). While an element of \(\Sigma_1\) and an element of \(\Sigma_2\) are independent in the tensor product, they are dependent in the coproduct.

\subsection{Trace monoids}

Let \((\Sigma , D)\) be a concurrent alphabet, and let \(\equiv\) denote the congruence relation induced by the associated independence relation, \textit{i.e.}, the smallest congruence relation in \(\Sigma^*\) such that \(ab \equiv ba\) for all \({(a,b) \in (\Sigma \times \Sigma) \setminus D}\). The quotient monoid \(\Sigma ^*/\equiv\) is called the \emph{trace monoid} of \((\Sigma,D)\) and is denoted by \(M(\Sigma,D)\) (see, \textit{e.g.}, \cite{Diekert, Mazurkiewicz2}).  Congruent elements of $\Sigma^*$ have the same length. The \emph{length} of an element $m \in M(\Sigma,D)$ may thus be defined by $|m| = |x|$ where $x \in m$. A morphism of concurrent alphabets \(\sigma \colon (\Sigma, D) \to (\Sigma', D')\) induces a monoid homomorphism \(M(\sigma) \colon M(\Sigma, D) \to M(\Sigma', D')\).

Given two concurrent alphabets \((\Sigma_1,D_1)\) and \((\Sigma_2,D_2)\), the homomorphisms induced by the canonical inclusions \((\Sigma_i,D_i) \to (\Sigma_1, D_1) \otimes (\Sigma_2, D_2)\) embed \(M(\Sigma_1,D_1)\) and \(M(\Sigma_2,D_2)\) as submonoids in \(M((\Sigma_1,D_1) \otimes (\Sigma_2,D_2))\). The multiplication \({(m_1, m_2) \mapsto m_1m_2}\) defines a natural isomorphism of monoids \[M(\Sigma_1,D_1) \times M(\Sigma_2,D_2) \to M((\Sigma_1,D_1) \otimes (\Sigma_2,D_2)).\]
The homomorphisms induced by the inclusions \((\Sigma_i,D_i) \to {(\Sigma_1, D_1) \amalg (\Sigma_2, D_2)}\) induce an isomorphism from the free product  \(M(\Sigma_1,D_1) \ast M(\Sigma_2,D_2)\) (which is the coproduct in the category of monoids) to \(M((\Sigma_1,D_1) \amalg (\Sigma_2,D_2))\). 

\subsection{Higher-dimensional automata} \label{HDAdef}
A \emph{higher-dimensional automaton (over a concurrent alphabet)}  is a tuple \[\A = (P_{\A},I_{\A},F_{\A}, \Sigma_\A, D_\A, \lambda_{\A})\] where \(P_{\A}\) is a precubical set, \(I_{\A} \in (P_{\A})_0\) is an \emph{initial state}, \(F_{\A} \subseteq (P_{\A})_0\) is a (possibly empty) set of \emph{final states}, \((\Sigma_\A, D_\A)\) is a concurrent alphabet, and \(\lambda_\A \colon (P_{\A})_1 \to \Sigma_\A^*\) is a \emph{labeling function}. These data are subject to the following two conditions:
\begin{enumerate}[\;\;(1)]
	\item For all \(x \in (P_{\A})_2\) and \(i \in \{1,2\}\), \(\lambda_{\A} (d_i^0x) = \lambda_{\A} (d_i^1x)\). 	
	\item For all \(x \in (P_{\A})_2\) and \((a, b) \in D_\A\), \(\lambda_\A(d^0_1x)\) does not contain \(a\) or \(\lambda_\A(d^0_2x)\) does not contain \(b\). 
\end{enumerate}
We say that an HDA \(\B\) is a \emph{sub-HDA} of an HDA \(\A\) and write \(\B \subseteq \A\) if \(P_\B\) is a precubical subset of \(P_\A\), \(I_\B = I_\A\), \(F_\B = F_\A\cap (P_\B)_0\), \(\Sigma_\B \subseteq \Sigma _\A\), \(D_\B = D_\A\cap(\Sigma _\B \times \Sigma_\B)\), and \(\lambda_\B = \lambda_\A|_{(P_\B)_1}\). An HDA \(\A\) is said to be \emph{(weakly) regular} if the precubical set \(P_\A\) is (weakly) regular. A \emph{morphism} from an HDA \(\A\) to an HDA \(\B\) is a pair \((f,\sigma)\) consisting of a morphism of precubical sets \(f\colon P_\A \to P_\B\) and a morphism of concurrent alphabets \(\sigma \colon (\Sigma_\A, D_\A) \to (\Sigma_\B, D_\B)\) such that \(f(I_{\A}) = I_{\B}\), \(f(F_{\A}) \subseteq F_{\B}\), and \(\lambda_{\B}(f(x)) = \sigma^*(\lambda_{\A}(x))\) for all \(x \in (P_\A)_1\).

Our definition of higher-dimensional automata differs in two points from the one of van Glabbeek \cite{vanGlabbeek}: first, we consider HDAs over concurrent alphabets, and second, we allow labels to be words. Condition (2) above is introduced in the definition of HDAs to guarantee that the independence relation represented by the cubes of an HDA is compatible with the one associated with the concurrent alphabet. An HDA in the sense of \cite{vanGlabbeek} can be seen as an HDA in our sense, at least if it does not admit squares where all edges have the same label. Indeed, given such an HDA, one can define a canonical dependence relation on the alphabet by declaring two actions dependent if there is no square having both of them on its boundary. Condition (2) is then automatically satisfied. We allow labels to be words in order to be able to declare sequences of actions atomic. Another possibility opened up by this modification of van Glabbeek's definition of HDAs is to use the unit of the free monoid on the alphabet to label invisible actions.

\subsection{Labels of paths} 

Let \(\A\) be an HDA. The \emph{extended labeling function} of \(\A\) is the map \(\overline{\lambda}_\A \colon P_\A^{\mathbb I} \to \Sigma_\A^*\) defined as follows: If \(\omega = {x_{1\sharp} \cdot \cdots \cdot x_{k\sharp}}\) for a sequence \((x_1, \dots , x_k)\) of elements of \((P_\A)_1\) such that \(d_1^0x_{j+1} = d_1^1x_j\) \({(1\leq j < k)}\), then we set \(\overline{\lambda}_\A (\omega) = \lambda_\A (x_1) \cdot \cdots \cdot \lambda_\A (x_k)\); if \(\omega\) is a path of length \(0\), then we set  \(\overline{\lambda}_\A (\omega ) = 1\). By conditions (1) and (2) in the definition of HDAs, dihomotopic paths have congruent labels.

\subsection{Tensor product of HDAs} The \emph{tensor product} of two HDAs \(\A\)  and \(\B\) is the HDA \(\A\otimes \B\) defined by \(P_{\A\otimes \B}  = P_{\A}\otimes P_{\B}\), \(I_{\A \otimes \B} = (I_{\A},I_{\B})\), \(F_{\A \otimes \B} = F_{\A}\times F_{\B}\), \((\Sigma_{\A \otimes \B}, D_{\A \otimes \B}) =  (\Sigma_{\A}, D_{\A}) \otimes (\Sigma_{\B},D_{\B})\), and
\[\lambda_{\A\otimes \B} (x,y) = \left \{\begin{array}{ll}
\lambda_{\A}(x), & (x,y) \in (P_{\A})_1\times (P_{\B})_0,\\
\lambda_{\B}(y), & (x,y) \in (P_{\A})_0\times (P_{\B})_1.
\end{array}\right.\]
With respect to this tensor product, the category of HDAs is a (nonsymmetric) monoidal category. The tensor product of HDAs models the parallel composition of independent concurrent systems (for a detailed discussion, see \cite{transhda}).

\subsection{Coproduct of HDAs} 
\begin{sloppypar}
The \emph{coproduct} of two HDAs \(\A\)  and \(\B\) is the HDA \(\A + \B\) where \(P_{\A+ \B}\) is the precubical subset \({P_\A \otimes \{I_\B\} \cup \{I_\A\}\otimes P_\B}\) of \(P_{\A}\otimes P_{\B}\), \(I_{\A + \B} = (I_{\A},I_{\B})\), \(F_{\A + \B} = F_{\A}\times \{I_\B\} \cup \{I_\A\}\times  F_{\B}\), \((\Sigma_{\A + \B}, D_{\A + \B}) =  (\Sigma_{\A}, D_{\A}) \amalg (\Sigma_{\B},D_{\B})\), and
\[\lambda_{\A + \B} (x,y) = \left \{\begin{array}{ll}
\lambda_{\A}(x), & (x,y) \in (P_{\A})_1\times \{I_\B\},\\
\lambda_{\B}(y), & (x,y) \in \{I_\A\} \times (P_{\B})_1.
\end{array}\right.\]
The coproduct of HDAs models the nondeterministic sum of concurrent systems, \textit{i.e.}, the combined system where initially one of the constituent systems is chosen nondeterministically (for more details, see \cite{transhda}). 
\end{sloppypar}

\section{Cubical dimaps} \label{seccubdi}

In their categorical approach to models of concurrency, Winskel and Nielsen \cite{WinskelNielsen} emphasize the importance of the morphisms in a category of objects modeling concurrent systems as a means to express relationships between systems. Unfortunately, the morphisms of HDAs defined in the previous section are too rigid for this purpose in at least two respects. First, they do not permit one to relate HDAs of different atomicity levels. For example, although the HDAs \({\circ \xrightarrow{a} \circ \xrightarrow{b} \circ}\) and \({\circ \xrightarrow{ab} \circ}\) clearly model the same system, there does not exist any morphism between them. Second, there is normally no morphism and, in particular, no isomorphism between the tensor products \(\A\otimes \B\) and \(\B \otimes \A\). This is inconsistent with the fact that the tensor product of HDAs models the parallel composition of independent systems, which is a symmetric operation. In order to address these problems, cubical dimaps (directed maps) have been introduced in \cite{labels}. Roughly speaking, a cubical dimap between two HDAs is a continuous map between their geometric realizations that sends cubes in an order-preserving way to subdivided cubes and that preserves labels of paths. There exists a cubical dimap from \({\circ \xrightarrow{ab} \circ}\) to \({\circ \xrightarrow{a} \circ \xrightarrow{b} \circ}\), and the category of HDAs and cubical dimaps is a symmetric monoidal category. In this section, we collect the main facts about cubical dimaps. More details can be found in \cite{labels}. All topological spaces considered are compactly generated Hausdorff spaces, and constructions such as products are performed in the category of these spaces (see \cite{SteenrodConvenient}).

\subsection{Geometric realization} \label{geomreal}
\begin{sloppypar}
The \emph{geometric realization} of a precubical set \(P\) is the quotient space \[|P|=\left(\coprod _{n \geq 0} P_n \times [0,1]^n\right)/\sim\] where the sets \(P_n\) are given the discrete topology and the equivalence relation is generated by
\[(d^k_ix,u) \sim (x,\delta_i^k(u)), \quad  x \in P_{n+1},\; u\in [0,1]^n,\; i \in  \{1, \dots, n+1\},\; k \in \{0,1\}.\] 
The geometric realization of a morphism of precubical sets \({f\colon P \to Q}\) is the continuous map \({|f|\colon |P| \to |Q|}\) given by \(|f|([x,u])= [f(x),u]\).
\end{sloppypar}
The geometric realization of a precubical set \(P\) is a CW complex. The \(n\)-skeleton of \(|P|\) is the geometric realization of the precubical subset \(P_{\leq n}\) of \(P\) defined by \((P_{\leq n})_i = P_i\) for \(i \leq n\) and \((P_{\leq n})_i = \emptyset\) for \(i > n\). The geometric realization of the precubical interval \(\lbrbrak k,l \rbrbrak\) may be identified with the closed interval \([k,l]\) by means of the homeomorphism \(|\lbrbrak k,l \rbrbrak| \to [k,l]\) given by \([j,()] \mapsto j\) and \([[j-1,j],t] \mapsto j-1+t\). The natural homeomorphism  \({|P\otimes Q| \to |P| \times |Q|}\) given by
\[[(x,y),u] \mapsto  ([x,(u_1,\dots , u_{p})],[y,(u_{p+1}, \dots, u_{p+q})]),\quad (x,y) \in P_p\times Q_q,\,u \in [0,1]^{p+q}\]
permits us to identify the spaces \(|P\otimes Q|\) and \(|P| \times |Q|\).

\subsection{Cubical dimaps of precubical sets} \label{cubdi}
\begin{sloppypar}
An \emph{elementary cubical dimap} from a precubical set \(P\) to a precubical set \(Q\) is a continuous map \({f\colon |P| \to |Q|}\) such that the following two conditions hold:
\end{sloppypar}
\begin{enumerate}[\;\;(1)]
	\item For every vertex \(x\in P_0\), there exists a (necessarily unique) vertex \(y\in Q_0\) such that \(f([x,()]) = [y,()]\).
	\item For every element \(x\in P_n\) \((n>0)\), there exist integers \(l_1, \dots, l_n\geq 1\), a morphism of precubical sets \(\chi \colon \bigotimes_{i=1}^n \lbrbrak 0,l_i\rbrbrak \to Q\), a permutation \(\theta \in S_n\), and increasing homeomorphisms \({\phi_i \colon [0,1] \to  [0,l_i ]}\) \((i \in \{1, \dots, n\})\) such that the following diagram, in which \(t_{\theta}\) is given by \(t_{\theta}(x_1, \dots, x_n) = (x_{\theta(1)}, \dots , x_{\theta(n)})\), is commutative: 	
	\[\xymatrix{
		|\lbrbrak	0, 1\rbrbrak^{\otimes n}| \ar[r]^(0.55){\approx} \ar[d]_{|x_{\sharp}|} & [0,1]^n \ar[r]^{t_{\theta}} & [0,1]^n \ar[rr]^{\phi_1 \times \cdots \times \phi_n} && \prod \limits_{i=1}^n [0,l_i]  \ar[r]^(0.45){\approx} & |\bigotimes \limits_{i=1}^n \lbrbrak 0,l_i\rbrbrak | \ar[d]^{|\chi|}\\
		|P| \ar[rrrrr]_{f} &&&&& |Q|
	}\]

\end{enumerate}
By \cite[Prop. 6.2.4]{labels}, the objects in condition (2) are uniquely determined by \(f\) and \(x\). A \emph{cubical dimap} of precubical sets is a finite composite of elementary cubical dimaps. It can be shown that not all cubical dimaps are elementary. For example, there exists a nonelementary cubical dimap from \(\lbrbrak 0,1 \rbrbrak^{\otimes 2}\) to the precubical set composed of two squares \(a\) and \(b\) such that \(d^1_1a = d^0_2b\). By construction, a cubical dimap is a cellular map. It is clear that condition (1) above holds for arbitrary cubical dimaps and not only for elementary ones. Therefore a cubical dimap \(f\colon |P| \to |Q|\) induces a \emph{vertex map} \(f_0\colon P_0 \to Q_0\), which sends a vertex \(x \in P_0\) to the unique vertex \(y \in Q_0\) such that \(f([x,()]) = [y,()]\). 

The geometric realization of a morphism of precubical sets is an elementary cubical dimap. Hence the presheaf category of precubical sets can be seen as a wide subcategory of the category of precubical sets and cubical dimaps.  Another important class of cubical dimaps is given by subdivisions in the sense of \cite{weakmor}: A subdivision of a precubical set \(P\) consists of a precubical set \(Q\) and a homeomorphism \(|P| \to |Q|\) that is an elementary cubical dimap such that the permutation in condition (2) of the definition is always the identity. In this situation we may, of course, also view \(P\) as obtained from \(Q\) by merging cubes.

The category of precubical sets and cubical dimaps is a symmetric monoidal category with respect to the usual tensor product of precubical sets. The tensor product of two cubical dimaps \(f\colon |P| \to |P'|\) and \(g \colon |Q| \to |Q'|\) is the composite
\[f\otimes g\colon |P\otimes Q| \xrightarrow{\approx} |P|\times |Q| \xrightarrow{f\times g}  |P'| \times |Q'|\xrightarrow{\approx} |P'\otimes Q'|, \]
which is indeed a cubical dimap. The braiding of the symmetric monoidal structure is the homeomorphism 
\[|P\otimes Q| \xrightarrow{\approx} |P| \times |Q| \xrightarrow{\approx} |Q| \times |P| \xrightarrow{\approx} |Q\otimes P|,\]	
which is an elementary cubical dimap.

\subsection{Cubical dimaps and paths} 

Let \(f\colon |P| \to |Q|\) be a cubical dimap of precubical sets, and let \(\omega \colon \lbrbrak 0,k\rbrbrak \to P\) be a path. By \cite[Prop. 6.5.1]{labels}, there exist a unique integer \(l\), a unique  path \(f^{\mathbb I}(\omega) \colon \lbrbrak 0, l\rbrbrak \to Q\), and a unique increasing homeomorphism \(\phi \colon  [0,k] \to [0,l]\) such that the following diagram commutes:	
\[\xymatrix{
	|\lbrbrak	0, k\rbrbrak| \ar[r]^(0.55){\approx} \ar[d]_{|\omega|} & [0,k] \ar[r]^{\phi} & [0,l] \ar[r]^(0.45){\approx}  &  |\lbrbrak 0,l\rbrbrak | \ar[d]^{|f^{\mathbb I}(\omega)|}\\
	|P| \ar[rrr]_{f} &&& |Q|
}\]
We remark that \(\length_{f^{\mathbb I}(\omega)} = 0\) if \(\length_\omega = 0\) and that \(f^{\mathbb I}(\omega)\) is a path from \(f_0(\omega(0))\) to \(f_0(\omega(\length_\omega))\). Note also that if \(f\) is the geometric realization of a morphism of precubical sets \(h\colon P \to Q\), then \(f^{\mathbb I}(\omega) = h\circ \omega\). By adapting  the arguments given in \cite{weakmor} in the context of weak morphisms, it is easily seen that the construction of \(f^{\mathbb I}(\omega)\) is compatible with composition of cubical dimaps, concatenation of paths, and dihomotopy.

\subsection{Cubical dimaps of HDAs}

An \emph{elementary cubical dimap} from an HDA \(\A\) to an HDA \(\B\) is a pair \((f,\sigma)\) consisting of an elementary cubical dimap \(f\colon |P_\A| \to |P_\B|\) and a morphism of concurrent alphabets \(\sigma \colon (\Sigma_\A, D_\A) \to (\Sigma_\B, D_\B)\) such that \(f_0(I_\A) = I_\B\), \(f_0(F_\A) \subseteq F_\B\), and \(\overline{\lambda}_\B\circ f^{\mathbb I} = \sigma^* \circ \overline{\lambda}_\A\). If \((g,\sigma) \colon \A \to \B\) is a morphism of HDAs, then its geometric realization \((|g|,\sigma)\) is an elementary cubical dimap from \(\A\) to \(\B\). A \emph{cubical dimap} of HDAs is a finite componentwise composite of elementary cubical dimaps. Note that if \((f,\sigma)\) is a cubical dimap of HDAs, then \(f\) is a cubical dimap of precubical sets and the above conditions for elementary cubical dimaps hold. 

The coproduct of HDAs is the coproduct in the category of HDAs and cubical dimaps. The tensor product of HDAs turns this category into a  symmetric monoidal category. The tensor product of cubical dimaps and the natural isomorphisms of the symmetric monoidal structure are defined componentwise.

\section{The trace language of an HDA} \label{sectrace}

The trace language of an HDA, which is defined in this section, describes its possible finite behavior. It contains the information necessary to decide whether an HDA satisfies a given saturated safety property, \textit{i.e.}, a safety property that is compatible with the congruence relation of the concurrent alphabet of the HDA. In addition to the trace language of an HDA, we define its fundamental monoid. We show that the trace language and the fundamental monoid behave well with respect to cubical dimaps and establish formulas to compute them for tensor products and coproducts. The trace language of an HDA is a trace language in the sense of Mazurkiewicz trace theory. References on this subject are \cite{AalbersbergRozenberg, Diekert, DiekertMetivier, DiekertMuscholl, Mazurkiewicz, Mazurkiewicz2}.

\subsection{Saturated safety properties} 
 
	Let \(\Sigma\) be an alphabet. Following van Glabbeek \cite{vanGlabbeekCoarsest}, we say that a \emph{safety property} is given by a set \(B \subseteq \Sigma^*\). An HDA $\A$ with \(\Sigma_\A = \Sigma\) \emph{satisfies} this safety property, \(\A \models \mathit{safety}(B)\), when for every  path \(\omega \in P_\A^{\mathbb I}\) with \(\omega (0) = I_\A\), \(\overline{\lambda}_\A(\omega)\notin B\Sigma^*\). Note that since we allow labels to be words, it is not enough to require \(\overline{\lambda}_\A(\omega)\notin B\). Note also that \(B\) and \(B\Sigma^*\) define the same safety property: for any HDA \(\A\) with \(\Sigma_\A = \Sigma\), \(\A \models \mathit{safety}(B) \Leftrightarrow \A \models \mathit{safety}(B\Sigma^*)\). If \((\Sigma, D)\) is a concurrent alphabet, a safety property given by a subset \(B \subseteq \Sigma^*\) is called \emph{saturated} if for any two congruent elements \(m, m' \in \Sigma^*\), \(m \in B\Sigma^* \Leftrightarrow m' \in B\Sigma^*\).

\subsection{Prefixes} Let \(M\) be a monoid. We say that an element \(v \in M\) is a \emph{prefix} of an element \(u \in M\) and write \(v \preceq u\) if there exists an element \(w \in M\) such that \(u = vw\). The relation \(\preceq\) is a preorder on  \(M\). If \(M\) is free or a trace monoid, then \(\preceq\) is a partial order.

\subsection{Trace language} The \emph{trace language} of an HDA \(\A\) is the set 
\[TL(\A) = \{v \in M(\Sigma_\A,D_\A)\, |\, \exists \,\omega \in P_\A^{\mathbb I} : \omega(0) = I_\A,\,  v \preceq [\overline{\lambda}_\A(\omega)]\}.\]
Note that although we use the same notation, \(TL(\A)\) is different from the trace language defined in \cite{weakmor}. By the next two propositions, the trace language contains exactly the information needed to determine which saturated safety properties are satisfied by an HDA.

\begin{prop} \label{satsafe}
	Let \((\Sigma, D)\) be a concurrent alphabet, and let \(B \subseteq \Sigma^*\) define a saturated safety property. Then for any HDA \(\A\) such that \((\Sigma_\A,D_\A) = (\Sigma,D)\),  
	\(\A \models \mathit{safety}(B)\) if and only if \(TL(\A) \subseteq \{[s]\;|\; s \in \Sigma^* \setminus (B\Sigma^*)\}\). 	
\end{prop}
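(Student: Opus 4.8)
The plan is to reduce everything to a single consequence of saturation and then read off both implications from the definitions. Write \(G\) for the target set \(\{[s]\;|\; s \in \Sigma^* \setminus (B\Sigma^*)\}\). The first and key step is to check that saturation makes membership in \(G\) testable on any chosen representative: for every \(s \in \Sigma^*\) one has \([s] \in G\) if and only if \(s \notin B\Sigma^*\). One direction is trivial, and for the other, if \([s] \in G\) then \([s] = [s']\) for some \(s' \notin B\Sigma^*\), whence \(s \equiv s'\) and saturation (\(s \in B\Sigma^* \Leftrightarrow s' \in B\Sigma^*\)) forces \(s \notin B\Sigma^*\). Thus \(G\) is precisely the set of traces no (equivalently, by saturation, every) representative of which lies in \(B\Sigma^*\). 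The whole argument then amounts to transporting the string-level condition ``\(\overline{\lambda}_\A(\omega) \notin B\Sigma^*\)'' across the congruence \(\equiv\).

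For the implication \(TL(\A) \subseteq G \Rightarrow \A \models \mathit{safety}(B)\), I would argue directly. Given a path \(\omega \in P_\A^{\mathbb I}\) with \(\omega(0) = I_\A\), reflexivity of \(\preceq\), i.e. \([\overline{\lambda}_\A(\omega)] \preceq [\overline{\lambda}_\A(\omega)]\), shows that the full trace \([\overline{\lambda}_\A(\omega)]\) itself lies in \(TL(\A)\), hence in \(G\). By the reformulation of the first step this says exactly \(\overline{\lambda}_\A(\omega) \notin B\Sigma^*\), which is precisely what \(\A \models \mathit{safety}(B)\) requires. This direction is essentially immediate once the characterization of \(G\) is available.

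For the converse, I would assume \(\A \models \mathit{safety}(B)\), take an arbitrary \(v \in TL(\A)\), so that \(v \preceq [\overline{\lambda}_\A(\omega)]\) for some path \(\omega\) with \(\omega(0) = I_\A\), and aim to show \(v \in G\). Arguing by contradiction, suppose some representative \(s \in v\) satisfies \(s \in B\Sigma^*\). Writing \([\overline{\lambda}_\A(\omega)] = v\,w\) in \(M(\Sigma,D)\) and choosing a representative \(s_w \in w\), the product \(s\,s_w\) represents \([\overline{\lambda}_\A(\omega)]\), so \(s\,s_w \equiv \overline{\lambda}_\A(\omega)\); moreover \(s\,s_w \in B\Sigma^*\) since \(B\Sigma^*\) is a left ideal (\(B\Sigma^*\,\Sigma^* = B\Sigma^*\)). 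Saturation then yields \(\overline{\lambda}_\A(\omega) \in B\Sigma^*\), contradicting \(\A \models \mathit{safety}(B)\). Hence no representative of \(v\) lies in \(B\Sigma^*\), that is, \(v \in G\).

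The routine bookkeeping aside, the only real point to watch is that the prefix relation \(\preceq\) lives in the trace monoid \(M(\Sigma,D)\) while satisfaction of \(\mathit{safety}(B)\) is stated at the level of strings in \(\Sigma^*\). Saturation is exactly the hypothesis that lets a bad prefix of \emph{one} representative be promoted to a statement about the label \(\overline{\lambda}_\A(\omega)\) itself, and the left-ideal property of \(B\Sigma^*\) is what lets a bad prefix be extended to a bad full word. I expect these two observations to be the crux; everything else follows by unwinding the definitions of \(TL(\A)\) and of the satisfaction relation.
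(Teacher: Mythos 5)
Your proposal is correct and follows essentially the same route as the paper: both proofs hinge on saturation transporting membership in \(B\Sigma^*\) across the congruence \(\equiv\) and on the absorption property \(B\Sigma^*\Sigma^* = B\Sigma^*\) to pass between a prefix and the full label. The only notable (and harmless) difference is that for the direction \(TL(\A) \subseteq G \Rightarrow \A \models \mathit{safety}(B)\) you apply the hypothesis directly to the full trace \([\overline{\lambda}_\A(\omega)] \in TL(\A)\), which is slightly more streamlined than the paper's detour through string-level prefixes; your labelling of \(B\Sigma^*\) as a \emph{left} ideal should read \emph{right} ideal, but the property you actually use is the correct one.
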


\begin{proof}
	Suppose first that \(\A \models \mathit{safety}(B)\). Consider \(v = [x] \in TL(\A)\). Let \(\omega\) be a path in \(\A\) such that \(\omega(0) = I_\A\) and \(v \preceq [\overline{\lambda}_\A(\omega)]\), and let \(w = [y] \in M(\Sigma, D)\) such that \([\overline{\lambda}_\A(\omega)] = vw = [xy]\). Since \(\A \models \mathit{safety}(B)\), \(\overline{\lambda}_\A(\omega) \notin B\Sigma^*\). Since \(xy \equiv \overline{\lambda}_\A(\omega)\), also \(xy \notin B\Sigma^*\). Thus \(x \notin B\Sigma^*\). Hence \(v = [x] \in \{[s]\;|\; s \in \Sigma^* \setminus (B\Sigma^*)\}\).

	Suppose now that \(TL(\A) \subseteq \{[s]\;|\; s \in \Sigma^* \setminus (B\Sigma^*)\}\). Let \(\omega\) be a path in \(\A\) such that \(\omega(0) = I_\A\). Consider a prefix \(x \preceq \overline{\lambda}_\A(\omega)\). Then \([x] \preceq [\overline{\lambda}_\A(\omega)]\) and therefore \([x] \in TL(\A)\subseteq \{[s]\;|\; s \in \Sigma^* \setminus (B\Sigma^*)\}\). Hence \(x \equiv y\) for some \(y \in \Sigma^* \setminus (B\Sigma^*)\). This implies \(x \in \Sigma^* \setminus (B\Sigma^*)\). Since \(B \subseteq B\Sigma^*\), \(x \notin B\). It follows that \(\overline{\lambda}_\A(\omega) \notin B\Sigma^*\). 	
\end{proof}

\begin{prop} \label{TLsatsafe}
	Two HDAs \(\A\) and \(\B\) over the same concurrent alphabet \((\Sigma, D)\) satisfy the same saturated safety properties if and only if \(TL(\A) = TL(\B)\).
\end{prop}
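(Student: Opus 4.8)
The plan is to reduce everything to Proposition \ref{satsafe}, which already translates satisfaction of a saturated safety property into an inclusion of trace languages. The direction $(\Leftarrow)$ is immediate: if $TL(\A) = TL(\B)$, then for every $B \subseteq \Sigma^*$ defining a saturated safety property, Proposition \ref{satsafe} yields the chain $\A \models \mathit{safety}(B) \Leftrightarrow TL(\A) \subseteq \{[s]\mid s \in \Sigma^*\setminus(B\Sigma^*)\} \Leftrightarrow TL(\B) \subseteq \{[s]\mid s \in \Sigma^*\setminus(B\Sigma^*)\} \Leftrightarrow \B \models \mathit{safety}(B)$, so the two HDAs satisfy exactly the same saturated safety properties.

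For the converse I would argue by contraposition. Suppose $TL(\A) \neq TL(\B)$; by the symmetry of the hypothesis I may assume there is a trace $v \in TL(\A)\setminus TL(\B)$, and the goal is to produce a single saturated safety property that separates the two HDAs. The candidate I would use is $B = \{s \in \Sigma^* \mid v \preceq [s]\}$, the set of words whose trace has $v$ as a prefix. I then have to check that this $B$ does the job.

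The verification splits into three routine observations. First, $B$ is saturated and satisfies $B\Sigma^* = B$: membership of $s$ in $B$ depends only on the trace $[s]$, which gives saturation, and if $v \preceq [s]$ then $v \preceq [s][t] = [st]$ for every $t \in \Sigma^*$, so $B$ is closed under right concatenation; consequently the complement occurring in Proposition \ref{satsafe} simplifies to $\{[s]\mid s \in \Sigma^*\setminus B\} = \{u \in M(\Sigma,D) \mid v \not\preceq u\}$. Second, trace languages are downward closed for $\preceq$: if $u \in TL(\B)$ and $v \preceq u$, then, choosing a path $\omega$ with $\omega(0)=I_\B$ and $u \preceq [\overline{\lambda}_\B(\omega)]$, transitivity of $\preceq$ gives $v \preceq [\overline{\lambda}_\B(\omega)]$, so $v \in TL(\B)$, contrary to assumption; hence no element of $TL(\B)$ has $v$ as a prefix, i.e.\ $TL(\B) \subseteq \{u \mid v \not\preceq u\}$, and Proposition \ref{satsafe} gives $\B \models \mathit{safety}(B)$. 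Third, since $v \in TL(\A)$ and $v \preceq v$, the trace $v$ is not in $\{u \mid v \not\preceq u\}$, so $TL(\A) \not\subseteq \{u \mid v \not\preceq u\}$, and Proposition \ref{satsafe} gives $\A \not\models \mathit{safety}(B)$. Thus $B$ is satisfied by $\B$ but not by $\A$, contradicting the assumption that they satisfy the same saturated safety properties; the symmetric case $v \in TL(\B)\setminus TL(\A)$ is handled by interchanging the roles of $\A$ and $\B$.

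The only genuinely delicate point — the one I would regard as the crux — is the choice of $B$: it must be simultaneously saturated, closed under right extension (so that $B\Sigma^*$ collapses cleanly), and sharp enough to detect the single missing trace $v$. Once $B = \{s \mid v \preceq [s]\}$ is fixed, the rest is bookkeeping around $B\Sigma^*$ together with the prefix-closedness of the trace language, both of which are straightforward.
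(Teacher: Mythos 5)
Your proof is correct and takes essentially the same route as the paper: both directions use Proposition \ref{satsafe} together with the separating set \(B = \{s \in \Sigma^* \mid v \preceq [s]\}\) for a trace \(v\) in one language but not the other, and the same verifications that \(B\) is saturated with \(B\Sigma^* = B\). The only cosmetic difference is that you route the final separation through the equivalence of Proposition \ref{satsafe} (using prefix-closedness of the trace language), whereas the paper argues \(\A \not\models \mathit{safety}(B)\) and \(\B \models \mathit{safety}(B)\) directly from the definition of satisfaction.
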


\begin{proof}
	By Proposition \ref{satsafe}, \(\A\) and \(\B\) satisfy the same saturated safety properties if they have the same trace language. Suppose that \(TL(\A) \not= TL(\B)\). Then we may suppose that there exists an element \(v\in TL(\A)\) such that \(v \notin TL(\B)\). Consider the safety property given by the set \[B = \{x \in \Sigma^* \,|\, v \preceq [x]\}.\]
	This is a saturated safety property. Indeed, let \(m \in B\Sigma^*\) and \(m' \equiv m\). Then there exists an element \(x \in B\) such that \(x \preceq m\). Hence \(v \preceq [x] \preceq [m] = [m']\) and therefore \(m' \in B \subseteq B\Sigma^*\). Note that the same argument shows that \(B = B\Sigma^*\). Since \(v \in TL(\A)\), there exists a path \(\omega \in P_\A^{\mathbb I}\) such that \(\omega(0) = I_\A\) and \(v \preceq [\overline{\lambda}_\A(\omega)]\). Hence \(\overline{\lambda}_\A(\omega) \in B = B\Sigma^*\). Thus \(\A \not \models \mathit{safety}(B)\). On the other hand, \(\B \models \mathit{safety}(B)\). Indeed, if this was not the case, there would exist a path \(\nu \in P_\B^{\mathbb I}\) such that \(\nu(0) = I_\B\) and \(\overline{\lambda}_\B(\nu) \in B\Sigma^* = B\).  But then we would have \(v \preceq [\overline{\lambda}_\B(\nu)]\) and therefore \(v \in TL(\B)\), which is not the case.
\end{proof}

\begin{prop} \label{TLinc}
		Let  \((f,\sigma)\colon \A \to \B\) be a cubical dimap of HDAs.  Then \(M(\sigma)(TL(\A)) \subseteq TL(\B)\). In particular, if \((\Sigma_\A, D_\A) = (\Sigma_\B, D_\B)\) and \(\sigma = id\), then \(TL(\A) \subseteq TL(\B)\).	
\end{prop}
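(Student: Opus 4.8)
The plan is to take an arbitrary element of \(TL(\A)\), transport its witnessing path forward along \(f^{\mathbb I}\), and verify that the image element together with the transported path witnesses membership in \(TL(\B)\). So let \(v \in TL(\A)\); by definition there is a path \(\omega \in P_\A^{\mathbb I}\) with \(\omega(0) = I_\A\) and \(v \preceq [\overline{\lambda}_\A(\omega)]\). I would apply the path-transport construction \(f^{\mathbb I}\) of Section \ref{seccubdi} to obtain the path \(f^{\mathbb I}(\omega) \in P_\B^{\mathbb I}\). Since \(f^{\mathbb I}(\omega)\) runs from \(f_0(\omega(0))\) to \(f_0(\omega(\length_\omega))\) and \(f_0(I_\A) = I_\B\), this path starts at the initial state \(I_\B\) and is therefore an admissible witness for trace-language membership in \(\B\).

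The heart of the argument is the defining label-compatibility condition of a cubical dimap of HDAs, namely \(\overline{\lambda}_\B \circ f^{\mathbb I} = \sigma^* \circ \overline{\lambda}_\A\). Evaluating at \(\omega\) gives \(\overline{\lambda}_\B(f^{\mathbb I}(\omega)) = \sigma^*(\overline{\lambda}_\A(\omega))\) in \(\Sigma_\B^*\), and passing to trace classes yields \([\overline{\lambda}_\B(f^{\mathbb I}(\omega))] = [\sigma^*(\overline{\lambda}_\A(\omega))]\). I would then use the fact that the induced homomorphism \(M(\sigma)\) on trace monoids satisfies \(M(\sigma)([x]) = [\sigma^*(x)]\) for every \(x \in \Sigma_\A^*\), so that the right-hand side is exactly \(M(\sigma)([\overline{\lambda}_\A(\omega)])\).

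It remains to combine this with the prefix relation. Because \(M(\sigma)\) is a monoid homomorphism, it is monotone for \(\preceq\): writing \([\overline{\lambda}_\A(\omega)] = vw\) and applying \(M(\sigma)\) gives \(M(\sigma)([\overline{\lambda}_\A(\omega)]) = M(\sigma)(v)\,M(\sigma)(w)\), whence \(M(\sigma)(v) \preceq M(\sigma)([\overline{\lambda}_\A(\omega)]) = [\overline{\lambda}_\B(f^{\mathbb I}(\omega))]\). Together with the observation that \(f^{\mathbb I}(\omega)\) starts at \(I_\B\), this is precisely the defining condition for \(M(\sigma)(v) \in TL(\B)\), establishing \(M(\sigma)(TL(\A)) \subseteq TL(\B)\). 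The special case is then immediate: taking \((\Sigma_\A,D_\A) = (\Sigma_\B,D_\B)\) and \(\sigma = id\) makes \(M(\sigma) = id\), so the inclusion reads \(TL(\A) \subseteq TL(\B)\).

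This proof is essentially a chase through the definitions, so I do not anticipate a genuine obstacle. The only points requiring a word of care are the two standard facts about the induced homomorphism — that \(M(\sigma)([x]) = [\sigma^*(x)]\) and that \(M(\sigma)\) preserves the prefix preorder — and the availability of the path transport \(f^{\mathbb I}\) for arbitrary (not merely elementary) cubical dimaps, with its compatibility with basepoints; all of these were recorded in the preliminary sections and may be invoked directly.
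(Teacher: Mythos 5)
Your proposal is correct and follows essentially the same route as the paper's proof: transport the witnessing path via \(f^{\mathbb I}\), use \(\overline{\lambda}_\B\circ f^{\mathbb I} = \sigma^*\circ\overline{\lambda}_\A\) together with \(M(\sigma)([x])=[\sigma^*(x)]\), and apply \(M(\sigma)\) to the factorization \([\overline{\lambda}_\A(\omega)]=vw\) to preserve the prefix relation. The points you flag as needing care (the identities for \(M(\sigma)\) and the availability of \(f^{\mathbb I}\) for non-elementary cubical dimaps) are indeed covered by the preliminaries, so nothing is missing.
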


\begin{proof}
	Consider an element \(v \in TL(\A)\). Let \(\omega \in P_\A^{\mathbb I}\) be a path such that \(\omega(0) = I_\A\) and \(v \preceq [\overline{\lambda}_\A(\omega)]\). Let \(w \in M(\Sigma_\A,D_\A)\) such that \([\overline{\lambda}_\A(\omega)] = vw\). Then \(M(\sigma)(v)M(\sigma)(w) = M(\sigma)(vw) = M(\sigma)([\overline{\lambda}_\A(\omega)]) = [\sigma^*(\overline{\lambda}_\A(\omega))] = [\overline{\lambda}_\B(f^{\mathbb I}(\omega))]\). Hence \(M(\sigma)(v) \preceq [\overline{\lambda}_\B(f^{\mathbb I}(\omega))]\). Since \(f^{\mathbb I}(\omega)(0) = f_0(\omega(0)) = f_0(I_\A) = I_\B\), it follows that \(M(\sigma)(v) \in TL(\B)\). 			
\end{proof}

\subsection{The trace language of a tensor product} 

Let \(\A\) and \(\B\) be two HDAs. We view \(M(\Sigma_\A,D_\A)\) and \(M(\Sigma_\B,D_\B)\) as submonoids and \(TL(\A)\) and \(TL(\B)\) as subsets of \(M(\Sigma_{\A\otimes \B},D_{\A \otimes \B})\). 

\begin{prop} \label{TLtensor}
	\(TL(\A \otimes \B) = TL(\A)\cdot TL(\B)\). 
\end{prop}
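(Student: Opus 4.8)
The plan is to exploit the natural monoid isomorphism $\mu\colon M(\Sigma_\A,D_\A)\times M(\Sigma_\B,D_\B)\to M(\Sigma_{\A\otimes\B},D_{\A\otimes\B})$, $(m_1,m_2)\mapsto m_1m_2$, recorded earlier, together with the fact that in the tensor alphabet every letter of $\Sigma_\A$ is independent from every letter of $\Sigma_\B$, so that the two factors commute elementwise inside $M(\Sigma_{\A\otimes\B},D_{\A\otimes\B})$. I will also use that, since $\mu$ is a monoid isomorphism, the prefix relation in the codomain is componentwise: for $v_\A,m_\A\in M(\Sigma_\A,D_\A)$ and $v_\B,m_\B\in M(\Sigma_\B,D_\B)$, one has $v_\A v_\B\preceq m_\A m_\B$ if and only if $v_\A\preceq m_\A$ and $v_\B\preceq m_\B$. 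The crux is a dictionary between paths in $P_{\A\otimes\B}$ issuing from $(I_\A,I_\B)$ and pairs of paths in $P_\A$ and $P_\B$ issuing from $I_\A$ and $I_\B$.

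For the inclusion $TL(\A\otimes\B)\subseteq TL(\A)\cdot TL(\B)$, I start from a path $\omega=x_{1\sharp}\cdots x_{k\sharp}$ with $\omega(0)=(I_\A,I_\B)$ and project it. Each edge $x_m\in(P_{\A\otimes\B})_1$ is either an \emph{$\A$-edge} $(a,y)$ with $a\in(P_\A)_1$, $y\in(P_\B)_0$, for which $d_1^k x_m=(d_1^k a,y)$ freezes the second coordinate, or a \emph{$\B$-edge} $(u,b)$ with $b\in(P_\B)_1$, for which the first coordinate is frozen. Reading off the $\A$-edges in order yields a sequence of edges of $P_\A$; because the interspersed $\B$-edges leave the first coordinate unchanged, consecutive $\A$-edges match along the boundary, so the sequence concatenates to a genuine path $\omega_\A$ in $P_\A$ starting at $I_\A$, and symmetrically one gets $\omega_\B$. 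Since $\A$-edges carry labels in $\Sigma_\A^*$ and $\B$-edges labels in $\Sigma_\B^*$, the word $\overline{\lambda}_{\A\otimes\B}(\omega)$ is an interleaving of $\overline{\lambda}_\A(\omega_\A)$ and $\overline{\lambda}_\B(\omega_\B)$; commuting all $\Sigma_\A$-letters past all $\Sigma_\B$-letters then gives $[\overline{\lambda}_{\A\otimes\B}(\omega)]=[\overline{\lambda}_\A(\omega_\A)]\,[\overline{\lambda}_\B(\omega_\B)]$. Given $v\in TL(\A\otimes\B)$ with $v\preceq[\overline{\lambda}_{\A\otimes\B}(\omega)]$, writing $v=v_\A v_\B$ via $\mu$ and applying the componentwise prefix decomposition yields $v_\A\in TL(\A)$ and $v_\B\in TL(\B)$, whence $v\in TL(\A)\cdot TL(\B)$.

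For the reverse inclusion, given $v_\A\in TL(\A)$ and $v_\B\in TL(\B)$, I choose paths $\omega_\A$ from $I_\A$ and $\omega_\B$ from $I_\B$ with $v_\A\preceq[\overline{\lambda}_\A(\omega_\A)]$ and $v_\B\preceq[\overline{\lambda}_\B(\omega_\B)]$, and assemble an explicit path in the tensor product by running $\omega_\A$ first and $\omega_\B$ afterwards. Concretely, tensoring each edge of $\omega_\A$ with the fixed vertex $I_\B$ produces a path from $(I_\A,I_\B)$ to $(\omega_\A(\length_{\omega_\A}),I_\B)$ with label $\overline{\lambda}_\A(\omega_\A)$; tensoring the fixed vertex $\omega_\A(\length_{\omega_\A})$ with each edge of $\omega_\B$ continues from there with label $\overline{\lambda}_\B(\omega_\B)$. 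Their concatenation $\omega$ satisfies $\omega(0)=(I_\A,I_\B)$ and $[\overline{\lambda}_{\A\otimes\B}(\omega)]=[\overline{\lambda}_\A(\omega_\A)]\,[\overline{\lambda}_\B(\omega_\B)]$, so the componentwise prefix decomposition gives $v_\A v_\B\preceq[\overline{\lambda}_{\A\otimes\B}(\omega)]$ and hence $v_\A v_\B\in TL(\A\otimes\B)$.

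I expect the main obstacle to be the projection step in the first inclusion: one must verify carefully that discarding the $\B$-edges really leaves a well-formed path in $P_\A$ starting at $I_\A$ (this is exactly where the frozen-coordinate behaviour of the tensor boundary operators is used) and that the interleaved label collapses to a clean product in the trace monoid. Everything else — the componentwise behaviour of $\preceq$ and the label bookkeeping for the assembled path — is routine once the monoid isomorphism $\mu$ and the elementwise commutation of the two factors are in hand.
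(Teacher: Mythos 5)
Your proposal is correct and follows essentially the same route as the paper's proof: the paper likewise builds the path for $TL(\A)\cdot TL(\B)\subseteq TL(\A\otimes\B)$ by running $\alpha$ tensored with $I_\B$ and then $\beta$ tensored with the endpoint of $\alpha$, and for the reverse inclusion projects $\omega=(x_1,y_1)_\sharp\cdots(x_n,y_n)_\sharp$ onto $\alpha=x_{1\sharp}\cdots x_{n\sharp}$ and $\beta=y_{1\sharp}\cdots y_{n\sharp}$ and commutes the interleaved labels. The only cosmetic difference is that you package the prefix-splitting as an explicit ``componentwise'' lemma about $\mu$, where the paper inlines the same fact via uniqueness of the factorization $m=m_\A m_\B$.
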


\begin{proof}
	In order to show that \(TL(\A)\cdot TL(\B)\subseteq TL(\A \otimes \B)\),  consider elements \(a \in TL(\A)\) and \(b \in TL(\B)\). Let \(\alpha \in P_\A ^{\mathbb I}\) and \(\beta \in P_\B^{\mathbb I}\) be paths such that \(\alpha(0) = I_\A\), \(\beta(0) = I_\B\), \(a \preceq [\overline{\lambda}_\A(\alpha)]\), and \(b \preceq [\overline{\lambda}_\B(\beta)]\). Let \(v \in M(\Sigma_\A,D_\A)\) and \(w \in M(\Sigma_\B,D_\B)\) be elements such that \(av = [\overline{\lambda}_\A(\alpha)]\) and \(bw = [\overline{\lambda}_\B(\beta)]\). Write \(\alpha = x_{1\sharp} \cdots x_{k\sharp}\) and \(\beta = y_{1\sharp} \cdots y_{l\sharp}\) where the  elements \(x_i\) and \(y_i\) are of degree \(\leq 1\). Let \(\omega\) be the path in \(\A \otimes \B\) defined by
	\[\omega = (x_1,I_\B)_\sharp \cdots (x_k,I_\B)_\sharp \cdot (\alpha(\length_\alpha),y_1)_\sharp \cdots (\alpha(\length_\alpha),y_l)_\sharp.\]
	Since all elements of \(M(\Sigma_\A,D_\A)\) commute in \(M(\Sigma_{\A\otimes \B},D_{\A \otimes \B})\) with all elements of \(M(\Sigma_\B,D_\B)\), we have \([\overline{\lambda}_{\A\otimes \B}(\omega)] = [\overline{\lambda}_\A(\alpha)\overline{\lambda}_\B(\beta)] = [\overline{\lambda}_\A(\alpha)][\overline{\lambda}_\B(\beta)] = avbw =  abvw\). Hence \(ab \preceq [\overline{\lambda}_{\A\otimes \B}(\omega)]\) and therefore \(ab \in TL(\A\otimes \B)\).
	
	For the reverse inclusion, consider an element \(v \in TL(\A \otimes \B)\). 
	Let \(\omega \in P_{\A \otimes \B}^{\mathbb I}\) be a path such that \(v \preceq [\overline{\lambda}_{\A \otimes \B}(\omega)]\). Consider an element \(w \in M(\Sigma_{\A \otimes B},D_{\A \otimes \B})\) such that \([\overline{\lambda}_{\A \otimes \B}(\omega)] = vw\). Since \(M(\Sigma_{\A\otimes \B},D_{\A \otimes \B}) = M(\Sigma_\A,D_\A)\cdot M(\Sigma_\B,D_\B)\), we may choose elements \(a,x \in M(\Sigma_\A, D_\A)\) and \(b,y \in M(\Sigma_\B,D_\B)\) such that \(v = ab\) and \(w = xy\). We show that \(a \in TL(\A)\) and \(b \in TL(\B)\). Write 
	\(\omega = (x_1,y_1)_{\sharp}\cdots (x_n,y_n)_{\sharp}\), and consider the paths \(\alpha \in P_\A^{\mathbb I}\) and \(\beta \in P_\B^{\mathbb I}\) given by \(\alpha = x_{1\sharp} \cdots x_{n\sharp}\) and \(\beta = y_{1\sharp} \cdots y_{n\sharp}\). Since all elements of \(M(\Sigma_\A, D_\A)\) commute with all elements of \(M(\Sigma_\B, D_\B)\) in \(M(\Sigma_{\A\otimes \B}, D_{\A \otimes \B})\), we have 
	\begin{align*}
	[\overline{\lambda}_{\A\otimes \B} (\omega)] &= [\overline{\lambda}_{\A\otimes \B}((x_1,y_1)_{\sharp})]\cdots [\overline{\lambda}_{\A\otimes \B}((x_n,y_n)_{\sharp})]\\
	&= [\overline{\lambda}_{\A}(x_{1\sharp})][\overline{\lambda}_{\B}(y_{1\sharp})]\cdots [\overline{\lambda}_{\A}(x_{n\sharp})][\overline{\lambda}_{\B}(y_{n\sharp})]\\
	&= [\overline{\lambda}_{\A}(x_{1\sharp})]\cdots [\overline{\lambda}_{\A}(x_{n\sharp})][\overline{\lambda}_{\B}(y_{1\sharp})] \cdots [\overline{\lambda}_{\B}(y_{n\sharp})]\\
	&= [\overline{\lambda}_{\A}(\alpha)][\overline{\lambda}_{\B}(\beta)].
	\end{align*}
	Hence \(axby = abxy = vw = [\overline{\lambda}_{\A\otimes \B} (\omega)] = [\overline{\lambda}_{\A}(\alpha)][\overline{\lambda}_{\B}(\beta)]\). Since every element of \(M(\Sigma_{\A\otimes \B}, D_{\A \otimes \B})\) can be uniquely written as a product of an element of \(M(\Sigma_\A, D_\A)\) and an element of \(M(\Sigma_\B, D_\B)\), it follows that \(ax = [\overline{\lambda}_{\A}(\alpha)]\) and \(by = [\overline{\lambda}_{\B}(\beta)]\). Thus \(a \preceq  [\overline{\lambda}_{\A}(\alpha)]\) and \(b \preceq [\overline{\lambda}_{\B}(\beta)]\) and therefore \(a \in TL(\A)\) and \(b \in TL(\B)\). 
\end{proof}

\subsection{Fundamental monoid} The \emph{fundamental monoid} of an HDA \(\A\) is the submonoid \(\pi(\A)\) of \(M(\Sigma_\A, D_\A)\) defined by
\[\pi(\A) = \{[\overline{\lambda}_\A(\omega)] \,|\, \omega \in P_\A^{\mathbb I},\, \omega(0) = \omega(\length_\omega) = I_\A\}.\]
The term reflects an analogy with the fundamental group of a topological space. Given a cubical dimap \((f,\sigma) \colon \A \to \B\), the homomorphism \(M(\sigma) \colon M(\Sigma_\A, D_\A) \to M(\Sigma_\B, D_\B)\) restricts to a homomorphism \(\pi(f,\sigma) \colon \pi(\A) \to \pi(\B)\). In particular, we have the following proposition:

\begin{prop} \label{piinc}
	Let \(\A\) and \(\B\) be two HDAs over the same concurrent alphabet. If there exists a cubical dimap of HDAs \((f,\sigma ) \colon \A \to \B\) such that \(\sigma = id\), then \(\pi(\A)\) is a submonoid of \(\pi(\B)\). 
\end{prop}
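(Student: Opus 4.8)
The plan is to prove the slightly stronger statement that $\pi(\A) \subseteq \pi(\B)$; since $\A$ and $\B$ share the concurrent alphabet $(\Sigma,D)$, both $\pi(\A)$ and $\pi(\B)$ are submonoids of the single monoid $M(\Sigma,D)$, so this inclusion immediately yields the claim. The only tool needed is the path-transport construction $f^{\mathbb I}$ attached to the underlying cubical dimap $f$ of precubical sets, together with the label-compatibility recorded for cubical dimaps of HDAs.

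First I would take an arbitrary generator of $\pi(\A)$, that is, a class $[\overline{\lambda}_\A(\omega)]$ where $\omega \in P_\A^{\mathbb I}$ is a loop at the initial state, $\omega(0) = \omega(\length_\omega) = I_\A$. Applying $f^{\mathbb I}$ produces a path $f^{\mathbb I}(\omega)$ in $\B$. Since $f^{\mathbb I}(\omega)$ runs from $f_0(\omega(0))$ to $f_0(\omega(\length_\omega))$, and $f_0(I_\A) = I_\B$ by the definition of a cubical dimap of HDAs, the transported path is again a loop at the initial state of $\B$: both its endpoints equal $I_\B$.

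Next I would compare labels. For any cubical dimap $(f,\sigma)$ of HDAs the identity $\overline{\lambda}_\B \circ f^{\mathbb I} = \sigma^* \circ \overline{\lambda}_\A$ holds. Specializing to $\sigma = id$, so that $\sigma^*$ is the identity on $\Sigma^*$, gives $\overline{\lambda}_\B(f^{\mathbb I}(\omega)) = \overline{\lambda}_\A(\omega)$, and hence $[\overline{\lambda}_\A(\omega)] = [\overline{\lambda}_\B(f^{\mathbb I}(\omega))]$. Because $f^{\mathbb I}(\omega)$ is a loop at $I_\B$, the right-hand class lies in $\pi(\B)$ by definition. Thus every generator of $\pi(\A)$ belongs to $\pi(\B)$, so $\pi(\A) \subseteq \pi(\B)$ and $\pi(\A)$ is a submonoid of $\pi(\B)$.

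I do not anticipate any genuine obstacle: the argument is simply a transport of loops along $f^{\mathbb I}$, and each ingredient—that $f_0(I_\A) = I_\B$, that $f^{\mathbb I}(\omega)$ has endpoints $f_0(\omega(0))$ and $f_0(\omega(\length_\omega))$, and the compatibility $\overline{\lambda}_\B \circ f^{\mathbb I} = \sigma^* \circ \overline{\lambda}_\A$—is already available from the section on cubical dimaps. The single point worth stating explicitly is that the hypotheses (same concurrent alphabet and $\sigma = id$) make $M(\sigma)$ the identity on $M(\Sigma,D)$, so the inclusion $\pi(\A) \subseteq \pi(\B)$ is precisely the restriction $\pi(f,\sigma)$ of $M(\sigma)$ mentioned just before the statement.
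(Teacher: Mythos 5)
Your argument is correct and is exactly the paper's (implicit) proof: the paper derives the proposition from the remark that $M(\sigma)$ restricts to a homomorphism $\pi(f,\sigma)\colon \pi(\A)\to\pi(\B)$, and your transport of loops along $f^{\mathbb I}$ together with the identity $\overline{\lambda}_\B \circ f^{\mathbb I} = \sigma^* \circ \overline{\lambda}_\A$ is precisely what that restriction amounts to when $\sigma = id$. The only cosmetic remark is that the classes $[\overline{\lambda}_\A(\omega)]$ are not merely generators of $\pi(\A)$ but all of its elements by definition, which makes your argument complete as stated.
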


\begin{prop} \label{pitensor}
	For any two HDAs \(\A\) and \(\B\), \(\pi(\A \otimes \B) = \pi(\A) \cdot \pi(\B)\). 
\end{prop}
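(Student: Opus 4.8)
The plan is to mirror the structure of the proof of Proposition \ref{TLtensor}, establishing the two inclusions \(\pi(\A) \cdot \pi(\B) \subseteq \pi(\A \otimes \B)\) and \(\pi(\A \otimes \B) \subseteq \pi(\A) \cdot \pi(\B)\) separately, and exploiting throughout the fact that every element of \(M(\Sigma_\A, D_\A)\) commutes with every element of \(M(\Sigma_\B, D_\B)\) inside \(M(\Sigma_{\A \otimes \B}, D_{\A \otimes \B})\), together with the unique factorization of each element of the latter monoid as a product of an element of \(M(\Sigma_\A, D_\A)\) and an element of \(M(\Sigma_\B, D_\B)\).

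For the inclusion \(\pi(\A) \cdot \pi(\B) \subseteq \pi(\A \otimes \B)\), I would take a loop \(\alpha\) in \(\A\) at \(I_\A\) and a loop \(\beta\) in \(\B\) at \(I_\B\), writing \(\alpha = x_{1\sharp} \cdots x_{k\sharp}\) and \(\beta = y_{1\sharp} \cdots y_{l\sharp}\) with the \(x_i, y_i\) of degree \(\leq 1\). I would then lift them to a single path
\[\omega = (x_1, I_\B)_\sharp \cdots (x_k, I_\B)_\sharp \cdot (I_\A, y_1)_\sharp \cdots (I_\A, y_l)_\sharp\]
in \(\A \otimes \B\), using that \(\alpha(\length_\alpha) = I_\A\) so that the concatenation point is \((I_\A, I_\B) = I_{\A \otimes \B}\); this \(\omega\) is a loop at \(I_{\A \otimes \B}\). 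By the definition of the tensor product labeling and commutativity, \([\overline{\lambda}_{\A \otimes \B}(\omega)] = [\overline{\lambda}_\A(\alpha)][\overline{\lambda}_\B(\beta)]\), which is exactly the product of the given elements of \(\pi(\A)\) and \(\pi(\B)\). This is the easier direction and essentially reuses the loop-construction idea from Proposition \ref{TLtensor}.

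For the reverse inclusion, I would start with a loop \(\omega\) at \(I_{\A \otimes \B}\), write \(\omega = (x_1, y_1)_\sharp \cdots (x_n, y_n)_\sharp\), and extract the projected paths \(\alpha = x_{1\sharp} \cdots x_{n\sharp}\) and \(\beta = y_{1\sharp} \cdots y_{n\sharp}\). Using commutativity exactly as in the tensor-language proof, I would obtain \([\overline{\lambda}_{\A \otimes \B}(\omega)] = [\overline{\lambda}_\A(\alpha)][\overline{\lambda}_\B(\beta)]\). The point that needs genuine care here — and what I expect to be the main obstacle — is showing that \(\alpha\) and \(\beta\) are themselves \emph{loops} at \(I_\A\) and \(I_\B\), not merely paths: a loop in the tensor product need not project to loops, since one factor could move forward while the other moves and returns. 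I would resolve this by observing that \(\omega(0) = \omega(\length_\omega) = I_{\A \otimes \B} = (I_\A, I_\B)\); under the identification of \(|P_\A \otimes P_\B|\) with \(|P_\A| \times |P_\B|\), the two coordinate projections of \(\omega\) are (reparametrizations of) \(\alpha\) and \(\beta\), so each returns to its starting vertex, forcing \(\alpha(\length_\alpha) = I_\A\) and \(\beta(\length_\beta) = I_\B\). Hence \([\overline{\lambda}_\A(\alpha)] \in \pi(\A)\) and \([\overline{\lambda}_\B(\beta)] \in \pi(\B)\), giving \([\overline{\lambda}_{\A \otimes \B}(\omega)] \in \pi(\A) \cdot \pi(\B)\) and completing the argument.
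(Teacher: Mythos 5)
Your proposal is correct and is precisely the adaptation of the proof of Proposition \ref{TLtensor} that the paper intends when it says the details are left to the reader: the loop construction for one inclusion, the projection-plus-commutativity computation for the other, and the observation that a loop at \((I_\A,I_\B)\) projects to loops because the endpoint of \(\omega\) is exactly the pair \((\alpha(\length_\alpha),\beta(\length_\beta))\). The only simplification worth noting is that this last point follows directly from the combinatorial description of the vertices of \(P_\A\otimes P_\B\) as pairs, without needing to pass through the geometric realization.
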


\begin{proof}
	The proof is similar to the one of Proposition \ref{TLtensor}. The details are left to the reader.
\end{proof}

\begin{prop} \label{picoprod}	
		Let \(\A\) and \(\B\) be two HDAs. The isomorphism of monoids \[{M(\Sigma_\A,D_\A)\ast M(\Sigma_\B,D_\B) \to M(\Sigma_{\A+\B},D_{\A+\B})}\] restricts to an isomorphism \({\pi(\A)\ast \pi(\B) \to \pi(\A + \B)}\). 
\end{prop}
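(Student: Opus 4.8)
The plan is to denote by $\Phi \colon M(\Sigma_\A,D_\A)\ast M(\Sigma_\B,D_\B) \to M(\Sigma_{\A+\B},D_{\A+\B})$ the isomorphism of the statement and to show that it carries the submonoid $\pi(\A)\ast\pi(\B)$ bijectively onto $\pi(\A+\B)$. Here $\pi(\A)\ast\pi(\B)$ is regarded, via the universal property and the normal form of free products of monoids, as the submonoid of $M(\Sigma_\A,D_\A)\ast M(\Sigma_\B,D_\B)$ consisting of the alternating words whose letters lie in $\pi(\A)\setminus\{1\}$ and $\pi(\B)\setminus\{1\}$; this inclusion is injective because it sends normal forms to normal forms. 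Since $\Phi$ is already a monoid isomorphism, its restriction to $\pi(\A)\ast\pi(\B)$ is automatically injective, so the whole task reduces to the set equality $\Phi(\pi(\A)\ast\pi(\B)) = \pi(\A+\B)$.

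The structural observation on which the proof rests is that the edges of $P_{\A+\B}$ fall into two disjoint classes, those of the form $(x,I_\B)$ with $x\in(P_\A)_1$ and those of the form $(I_\A,y)$ with $y\in(P_\B)_1$, while the only vertex lying both in $(P_\A)_0\times\{I_\B\}$ and in $\{I_\A\}\times(P_\B)_0$ is $(I_\A,I_\B)$. First I would make precise, using the face operators of $P_\A\otimes P_\B$ restricted to $P_{\A+\B}$, that in a loop $\omega$ at $(I_\A,I_\B)$ two consecutive edges from different classes can only meet at $(I_\A,I_\B)$. This yields a canonical factorization $\omega = \omega_1\cdots\omega_r$ into maximal side-pure blocks, each of which is itself a loop at $(I_\A,I_\B)$. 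Using the identification $P_\A\otimes\{I_\B\}\cong P_\A$ together with the definition of $\lambda_{\A+\B}$, each $\A$-block is the lift of a loop at $I_\A$ in $\A$, so $[\overline{\lambda}_{\A+\B}(\omega_j)]$ is the image under the canonical map $M(\Sigma_\A,D_\A)\to M(\Sigma_{\A+\B},D_{\A+\B})$ of an element $g_j\in\pi(\A)$, and symmetrically for $\B$-blocks. Since $\overline{\lambda}_{\A+\B}$ is multiplicative over concatenation, $[\overline{\lambda}_{\A+\B}(\omega)] = \prod_{j}[\overline{\lambda}_{\A+\B}(\omega_j)] = \Phi(g_1\cdots g_r)$ with $g_1\cdots g_r\in\pi(\A)\ast\pi(\B)$, giving $\pi(\A+\B)\subseteq\Phi(\pi(\A)\ast\pi(\B))$.

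For the reverse inclusion I would run the construction backwards. Given an alternating word $g_1\cdots g_r$ with each $g_j$ in $\pi(\A)$ or $\pi(\B)$, I choose for each $g_j$ a loop in the relevant factor realizing it, lift it through $P_\A\otimes\{I_\B\}\subseteq P_{\A+\B}$ (respectively $\{I_\A\}\otimes P_\B$), and concatenate these loops into a single loop $\omega$ at $(I_\A,I_\B)$. By the same multiplicativity of $\overline{\lambda}_{\A+\B}$, one gets $[\overline{\lambda}_{\A+\B}(\omega)] = \Phi(g_1\cdots g_r)$, whence $\Phi(g_1\cdots g_r)\in\pi(\A+\B)$. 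Combining the two inclusions gives $\Phi(\pi(\A)\ast\pi(\B)) = \pi(\A+\B)$, and the restriction of $\Phi$ is therefore the desired isomorphism.

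I expect the main obstacle to be the block-decomposition step: rigorously verifying that a loop at $(I_\A,I_\B)$ cannot cross from one side of the coproduct to the other except through the basepoint, and that the resulting blocks are genuine loops whose labels pull back into $\pi(\A)$ and $\pi(\B)$. This is precisely the place where the defining feature of the coproduct (as opposed to the full tensor product) is used, and it must be argued carefully at the level of edges and their endpoints. Once this lemma is established, the label bookkeeping and the passage through the normal form of the free product are routine and run in close parallel to the arguments of Proposition \ref{TLtensor} and Proposition \ref{pitensor}.
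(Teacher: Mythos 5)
Your proposal is correct and follows essentially the same route as the paper: injectivity comes for free from restricting the ambient isomorphism, and surjectivity is obtained by decomposing a loop at $(I_\A,I_\B)$ into side-pure blocks that are themselves loops at the basepoint (using that an $\A$-edge and a $\B$-edge of $P_{\A+\B}$ can only meet at $(I_\A,I_\B)$) and then invoking multiplicativity of the extended labeling. The paper packages this via the induced homomorphisms $\pi(|j_\A|,\sigma_\A)$ and $\pi(|j_\B|,\sigma_\B)$ in a commutative square rather than working directly with $\Phi$, but the substance is identical, and the block-decomposition step you flag as the main obstacle is exactly the step the paper asserts with the same justification in mind.
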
 

\begin{proof}
	Let \((j_\A,\sigma_\A)\colon \A \to \A +\B\) and \((j_\B, \sigma_\B) \colon \B \to {\A+\B}\) be the morphisms of HDAs where the morphisms of precubical sets \(j_\A \colon P_\A \to P_{\A+\B}\) and \(j_\B \colon P_\B \to P_{\A+\B}\) are given by \(j_\A(x) = (x,I_\B)\) and \(j_\B(y) = (I_\A,y)\) and \(\sigma_\A \colon (\Sigma_\A,D_\A) \to (\Sigma_{\A+\B},D_{\A+\B})\) and \(\sigma_\B \colon (\Sigma_\B,D_\B) \to (\Sigma_{\A+\B},D_{\A+\B})\) are the inclusions. Consider the following commutative diagram of monoids:
	\[
	\begin{tikzcd}[column sep=6pc]
	\pi(\A)\ast \pi(\B) \ar[r, "(\pi(|j_\A|{,}\sigma_\A){,}\pi(|j_\B|{,}\sigma_\B))"] \ar[d, hook] & \pi(\A +\B) \ar[d, hook]\\
	M(\Sigma_\A, D_\A) \ast M(\Sigma_\B, D_\B) \ar[r, "\cong",  "(M(\sigma_\A){,}M(\sigma_\B))"'] & M(\Sigma_{\A+\B}, D_{\A+\B})
	\end{tikzcd}	
	\]
	It is clear that \((\pi(|j_\A|{,}\sigma_\A){,}\pi(|j_\B|{,}\sigma_\B))\) is injective. We show that it is surjective. A loop \(\omega\) in \(\A +\B\) with \(\omega(0) = I_{\A+\B} = (I_\A,I_\B)\) can be decomposed as a concatenation
	\[\omega = (j_\A \circ \alpha_1) (j_\B\circ \beta_1)\cdots (j_\A \circ \alpha_n) (j_\B\circ \beta_n)\]
	where the \(\alpha_i\) are loops in \(\A\) with \(\alpha_i(0) = I_\A\) and the \(\beta_i\) are loops in \(\B\) with \(\beta_i(0) = I_\B\). Therefore
	\begin{align*}
	\MoveEqLeft{[\overline{\lambda}_{\A+\B}(\omega)]}\\ &= 	[\overline{\lambda}_{\A+\B}(j_\A\circ \alpha_1)] 	[\overline{\lambda}_{\A+\B}(j_\B\circ \beta_1)]\cdots 	[\overline{\lambda}_{\A+\B}(j_\A\circ \alpha_n)]	[\overline{\lambda}_{\A+\B}(j_\B\circ \beta_n)]\\
	&= 	[\overline{\lambda}_{\A+\B}(|j_\A|^{\mathbb I}(\alpha_1))] 	[\overline{\lambda}_{\A+\B}(|j_\B|^{\mathbb I}(\beta_1))]\cdots 	[\overline{\lambda}_{\A+\B}(|j_\A|^{\mathbb I}(\alpha_n))]	[\overline{\lambda}_{\A+\B}(|j_\B|^{\mathbb I}(\beta_n))]\\
	&= 	[\sigma_\A^*(\overline{\lambda}_{\A}(\alpha_1))] 	[\sigma_\B^*(\overline{\lambda}_{\B}(\beta_1))]\cdots 	[\sigma_\A^*(\overline{\lambda}_{\A}(\alpha_n))]	[\sigma_\B^*(\overline{\lambda}_{\B}(\beta_n))]\\
	&= 	M(\sigma_\A)([\overline{\lambda}_{\A}(\alpha_1)])M(\sigma_\B)([
	\overline{\lambda}_{\B}(\beta_1)])\cdots 	M(\sigma_\A)([\overline{\lambda}_{\A}(\alpha_n)])M(\sigma_\B)([\overline{\lambda}_{\B}(\beta_n)])\\
	&= 	(M(\sigma_\A),M(\sigma_\B))([\overline{\lambda}_{\A}(\alpha_1)][
	\overline{\lambda}_{\B}(\beta_1)]\cdots 	[\overline{\lambda}_{\A}(\alpha_n)][\overline{\lambda}_{\B}(\beta_n)])\\
	&= 	(\pi(|j_\A|,\sigma_\A),\pi(|j_\B|,\sigma_\B))([\overline{\lambda}_{\A}(\alpha_1)][
	\overline{\lambda}_{\B}(\beta_1)]\cdots 	[\overline{\lambda}_{\A}(\alpha_n)][\overline{\lambda}_{\B}(\beta_n)]). 
	\end{align*}
	It follows that \((\pi(|j_\A|{,}\sigma_\A){,}\pi(|j_\B|{,}\sigma_\B))\) is surjective.   	
\end{proof}

\subsection{The trace language of a coproduct}

Let \(\A\) and \(\B\) be two HDAs. We view \(M(\Sigma_\A,D_\A)\) and \(M(\Sigma_\B,D_\B)\) as submonoids and \(TL(\A)\) and \(TL(\B)\) as subsets of \(M(\Sigma_{\A+ \B},D_{\A + \B})\). Similarly, we view paths in \(\A\) and paths in \(\B\) as paths in \(\A+\B\). 

\begin{prop} \label{TLcoprod}
	\(TL(\A+\B) = \pi(\A+\B)\cdot TL(\A) \cup \pi(\A+\B)\cdot TL(\B)\).
\end{prop}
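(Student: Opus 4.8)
The plan is to prove the two inclusions separately. Throughout I use, as recalled in the subsection on trace monoids, that the canonical inclusions identify $M(\Sigma_\A,D_\A)$ and $M(\Sigma_\B,D_\B)$ with submonoids of $M(\Sigma_{\A+\B},D_{\A+\B})$ and induce an isomorphism $M(\Sigma_\A,D_\A)\ast M(\Sigma_\B,D_\B)\cong M(\Sigma_{\A+\B},D_{\A+\B})$; in particular every element of $M(\Sigma_{\A+\B},D_{\A+\B})$ has a normal form as a product of nontrivial elements lying alternately in the two factor monoids. I also keep the notation $j_\A,j_\B$ and $\sigma_\A,\sigma_\B$ from the proof of Proposition \ref{picoprod}, so that a path $\alpha$ in $\A$ with $\alpha(0)=I_\A$ yields a path $j_\A\circ\alpha$ in $\A+\B$ from $I_{\A+\B}$ with $[\overline{\lambda}_{\A+\B}(j_\A\circ\alpha)] = [\overline{\lambda}_\A(\alpha)]$, and similarly for $\B$.

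For the inclusion $\pi(\A+\B)\cdot TL(\A)\subseteq TL(\A+\B)$ (and symmetrically for $\B$) I would take $p = [\overline{\lambda}_{\A+\B}(\mu)] \in \pi(\A+\B)$ with $\mu$ a loop at $I_{\A+\B}$, and $a \in TL(\A)$ with $a \preceq [\overline{\lambda}_\A(\alpha)]$ for some $\alpha$ with $\alpha(0)=I_\A$. Then $\omega = \mu\cdot(j_\A\circ\alpha)$ is a path from $I_{\A+\B}$ with $[\overline{\lambda}_{\A+\B}(\omega)] = p\,[\overline{\lambda}_\A(\alpha)]$, so $pa \preceq [\overline{\lambda}_{\A+\B}(\omega)]$ and hence $pa \in TL(\A+\B)$. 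This direction is routine.

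The substance is the reverse inclusion, and its geometric crux is that the two summands meet only at the initial state: the $\A$-copy $P_\A\otimes\{I_\B\}$ and the $\B$-copy $\{I_\A\}\otimes P_\B$ share only the vertex $I_{\A+\B}=(I_\A,I_\B)$, and every edge of $P_{\A+\B}$ lies in exactly one copy. I would first record that a path can therefore switch copies only at $I_{\A+\B}$: if composable edges $(x,I_\B)$ and $(I_\A,y)$ meet, their shared vertex is $(d_1^1x,I_\B)=(I_\A,d_1^0y)$, which forces it to equal $I_{\A+\B}$. Consequently, given $v \in TL(\A+\B)$ witnessed by $\omega$ from $I_{\A+\B}$ with $v \preceq [\overline{\lambda}_{\A+\B}(\omega)]$, I would decompose $\omega$ at its visits to $I_{\A+\B}$ and then merge maximal runs of blocks lying in the same copy, obtaining $\omega = \omega_1\cdots\omega_r$ where (after discarding blocks with trivial label) the $\omega_i$ lie in alternating copies, each starts at $I_{\A+\B}$, each of $\omega_1,\dots,\omega_{r-1}$ is a loop at $I_{\A+\B}$, and only $\omega_r$ may fail to return. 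Writing $\alpha_i$ for the path in $\A$ or $\B$ determined by the copy of $\omega_i$, each factor $L_i := [\overline{\lambda}_{\A+\B}(\omega_i)] = [\overline{\lambda}(\alpha_i)]$ lies in the corresponding factor monoid, and $[\overline{\lambda}_{\A+\B}(\omega)] = L_1\cdots L_r$ is a normal form.

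Finally I would invoke the description of prefixes in a free product: since $v \preceq L_1\cdots L_r$, there are an index $i$ and an element $h \preceq L_i$ in the factor containing $L_i$ with $v = L_1\cdots L_{i-1}\,h$. Here $L_1\cdots L_{i-1} = [\overline{\lambda}_{\A+\B}(\omega_1\cdots\omega_{i-1})]$ is the label of a loop at $I_{\A+\B}$ — the only possibly non-returning block $\omega_r$ lies in the final run — so $L_1\cdots L_{i-1}\in\pi(\A+\B)$; and $h \preceq L_i = [\overline{\lambda}(\alpha_i)]$ with $\alpha_i(0)$ the initial state of $\A$ or $\B$, whence $h \in TL(\A)$ or $h \in TL(\B)$ according to the copy of $\omega_i$. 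Thus $v \in \pi(\A+\B)\cdot TL(\A)\cup\pi(\A+\B)\cdot TL(\B)$. I expect the main obstacle to be exactly this last step: matching the trace-monoid prefix relation in $M(\Sigma_{\A+\B},D_{\A+\B})$ with the block decomposition, i.e. checking via normal forms that a prefix ending inside $\omega_i$ leaves all earlier blocks complete (hence genuine loops contributing to $\pi(\A+\B)$) while restricting to an honest prefix of that one block's label. The pitfall to rule out is a prefix that, through commutations, appears to straddle a copy boundary; this is precisely excluded by the non-commutation of $\Sigma_\A$ and $\Sigma_\B$ in the coproduct alphabet, i.e. by the free-product structure.
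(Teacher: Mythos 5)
Your proof is correct, and while it shares the overall architecture of the paper's argument (the easy inclusion by concatenating a loop with a path from one summand; the hard inclusion by decomposing a witness path into alternating blocks meeting only at \(I_{\A+\B}\) and exploiting the free-product structure of \(M(\Sigma_{\A+\B},D_{\A+\B})\)), you finish the hard inclusion by a genuinely different route. The paper chooses the witness \(\omega\) of \emph{minimal} length, arranges the decomposition as \(\alpha_1\cdot\beta_1\cdots\alpha_r\cdot\beta_r\cdot\gamma\), and uses minimality to force \(|v|>\length_{\alpha_1\cdot\beta_1\cdots\alpha_r\cdot\beta_r}\); this pins the ``cut point'' of the prefix \(v\) inside the final block \(\gamma\), after which non-commutation of the two alphabets immediately identifies the \(\pi(\A+\B)\)-part as the label of the whole loop part and the remainder as a prefix of \([\overline{\lambda}_\A(\gamma)]\). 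You instead take an arbitrary witness and allow the cut point to fall in any block \(\omega_i\), invoking the general fact that a prefix of a normal form \(L_1\cdots L_r\) in a free product must be \(L_1\cdots L_{i-1}h\) with \(h\preceq L_i\) in the relevant factor. Your version avoids the minimality bookkeeping and is arguably cleaner, but it shifts the weight onto that prefix lemma, which you flag but do not prove; it does hold here, and the proof is the normal-form computation you anticipate (write \(v\) and \(w\) in alternating normal form, multiply, and compare with \(L_1\cdots L_r\) by uniqueness, using that trace monoids have no nontrivial units --- lengths add --- so no cancellation cascade can occur). Two small points to tidy up if you write this in full: blocks with empty label (possible since labels may be the empty word) must be absorbed into neighbouring blocks before reading off the normal form, and one should check that after such merging the partial product \(L_1\cdots L_{i-1}\) is still the label of a genuine loop at \(I_{\A+\B}\) and \(L_i\) still the label of a path starting at the initial state of one summand; both are immediate but worth saying.
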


\begin{proof}
	We show first that \(\pi(\A+\B)\cdot TL(\A) \cup \pi(\A+\B)\cdot TL(\B) \subseteq TL(\A+\B)\). Consider an element \(v \in TL(\A)\) and a loop \(\omega\in P_{\A + \B}^{\mathbb I}\) such that \(\omega(0) = I_{\A+\B} = (I_\A,I_\B)\). Let \(\alpha\) be a path in \(\A\) such that \(\alpha(0) = I_\A\) and \(vw = [\overline{\lambda}_\A(\alpha)]\) for some element \(w \in M(\Sigma_\A,D_\A)\). We have \([\overline{\lambda}_{\A+\B}(\omega\cdot \alpha)] = [\overline{\lambda}_{\A+\B}(\omega)]vw\). Hence \([\overline{\lambda}_{\A+\B}(\omega)]v \in TL(\A+\B)\). Thus \(\pi(\A+\B)\cdot TL(\A) \subseteq TL(\A+\B)\). Similarly, \(\pi(\A+\B)\cdot TL(\B) \subseteq TL(\A+\B)\).
	
	For the reverse inclusion, consider an element \(v \in TL(\A+\B)\). Let \(\omega \in P_{\A+\B}^{\mathbb I}\) be a path of minimal length such that \(\omega(0) = I_{\A+\B}\) and \([\overline{\lambda}_{\A+\B}(\omega)] = vw\) for some element \(w \in M(\Sigma_{\A +\B}, D_{\A+\B})\). We may assume  that \(\omega\) does not lie entirely in \(\A\) or \(\B\), because in that case we would have either \(v \in TL(\A)\) or \(v \in TL(\B)\) and there would be nothing to prove. We may further suppose that the last edge of \(\omega\) is an edge of \(\A\) and leave the analogous case where it is an edge of \(\B\) to the reader. Decompose \(\omega\) as a concatenation
	\[ \omega = \alpha_1 \cdot \beta_1 \cdots \alpha_r \cdot \beta_r \cdot \gamma\]
	where the \(\alpha_i\) are loops in \(\A\) with \(\alpha_i(0) = I_\A\), the \(\beta_i\) are loops in \(\B\) with \(\beta_i(0) = I_\B\), \(\gamma\) is a path in \(\A\) with \(\gamma (0) = I_\A\), and all paths except possibly \(\alpha_1\) have positive length. Set \(l_1 = \length_{\alpha_1\cdot \beta_1 \cdots \alpha_r\cdot \beta_r}\) and \(l_2 = \length_\gamma\). Then \(|v| > l_1\). Indeed, otherwise we would have \(w = yz\) for some elements \(y,z \in  M(\Sigma_{\A +\B}, D_{\A+\B})\) such that \(|y| = l_1 - |v|\) and \(|z| = l_2\). Moreover, we would have
	\[vyz = [\overline{\lambda}_\A(\alpha_1)][\overline{\lambda}_\B(\beta_1)] \cdots [\overline{\lambda}_\A(\alpha_r)][\overline{\lambda}_\B(\beta_r)][\overline{\lambda}_\A(\gamma)].\]
	Since congruence classes of elements of \(\Sigma_\A\) do not commute with congruence classes of elements of \(\Sigma_\B\) 
	in \(M(\Sigma_{\A+\B},D_{\A+\B})\), this would imply 
	\(z = [\overline{\lambda}_\A(\gamma)]\) and 
	\[vy = [\overline{\lambda}_\A(\alpha_1)][\overline{\lambda}_\B(\beta_1)] \cdots [\overline{\lambda}_\A(\alpha_r)][\overline{\lambda}_\B(\beta_r)] = [\overline{\lambda}_{\A+\B}(\alpha_1 \cdot \beta_1 \cdots \alpha_r \cdot \beta_r)],\] 
	which would contradict the minimality of \(\omega\). So \(|v| > l_1\). Hence there exist elements \[u, x \in M(\Sigma_{\A +\B}, D_{\A+\B})\] such that \(|u| = l_1\), \(|x| = |v| -l_1\), and \(v = ux\). Since 
	\[uxw = [\overline{\lambda}_\A(\alpha_1)][\overline{\lambda}_\B(\beta_1)] \cdots [\overline{\lambda}_\A(\alpha_r)][\overline{\lambda}_\B(\beta_r)][\overline{\lambda}_\A(\gamma)],\]
	we have \[u = [\overline{\lambda}_\A(\alpha_1)][\overline{\lambda}_\B(\beta_1)] \cdots [\overline{\lambda}_\A(\alpha_r)][\overline{\lambda}_\B(\beta_r)] = [\overline{\lambda}_{\A+\B}(\alpha_1 \cdot \beta_1 \cdots \alpha_r \cdot \beta_r)]\] and \(xw = [\overline{\lambda}_\A(\gamma)]\). Therefore \(u \in \pi(\A + \B)\), \(w \in M(\Sigma_\A,D_\A)\), \(x \in TL(\A)\), and \(v = ux \in \pi(\A + \B)\cdot TL(\A)\). 	
\end{proof}

\begin{rem}
	By Propositions \ref{picoprod} and \ref{TLcoprod}, we have \(TL(\A + \B) = TL(\A'+\B')\) if \((\Sigma_\A,D_\A) = (\Sigma_{\A'},D_{\A'})\), \((\Sigma_\B,D_\B) = (\Sigma_{\B'},D_{\B'})\), \(TL(\A) = TL(\A')\), \(TL(\B) = TL(\B')\), \(\pi(\A) = \pi(\A')\), and \(\pi(\B) = \pi(\B')\). The last two assumptions are needed here, as shows the example where \(\A\) and \(\B\) have only one vertex and one edge, the one of \(\A\) labeled \(a\) and the one of \(\B\) labeled \(b\), \(\A' = \A\), and \(\B'\) has two vertices and two edges, both labeled \(b\), one leading from the initial to the other state and the other leading from the second state to itself.  
\end{rem}

\section{The homology language of an HDA} \label{secHL}

A higher-dimensional automaton is an ordinary automaton with information on independence of actions. We have used the independence relation associated with the concurrent alphabet of an HDA, and the induced congruence relation, to define its trace language and its fundamental monoid. The higher-dimensional structure of an HDA contains further information on independence. An overall picture of the independence structure of an HDA is given by its labeled homology, as introduced in \cite{labels}. Here, we use the labeling on the homology of an HDA to define its homology language. As in the case of the trace language and the fundamental monoid, we show that the homology language is compatible with cubical dimaps and establish formulas to compute it for tensor products and coproducts. We also give examples of how the homology language of an HDA can be used to reason about the independence of subsystems or components of a concurrent system. We work over a fixed principal ideal domain, which we suppress from the notation. 

\subsection{Chain complexes and homology} A \emph{chain complex} is a graded module \(C = (C_n)_{n\geq 0}\) with  \emph{boundary operators} \(d\colon C_n \to C_{n-1}\) \((n \geq 1)\) satisfying \(d\circ d = 0\). A \emph{chain map} between two chain complexes is a morphism of graded modules that commutes with the boundary operators. The \emph{homology} of a chain complex \(C\) is the graded module \(H_*(C) = (H_n(C))_{n\geq 0}\) defined by \(H_0(C) = C_0/\im (d \colon C_{1} \to C_0)\) and 
\[H_n(C) = \ker (d \colon C_n \to C_{n-1})/ \im (d \colon C_{n+1} \to C_n) \quad (n\geq 1).\] A chain map \(f\colon C \to D\) induces a morphism of graded modules \(f_*\colon H_*(C) \to H_*(D)\), defined by \(f_*([z]) = [f(z)]\), and this makes \(H_*\) a functor from the category of chain complexes to the category of graded modules. 

The \emph{direct sum} of two chain complexes is the direct sum of the underlying graded modules, with boundary operators defined componentwise. The homology functor preserves direct sums.
The \emph{tensor product} of two graded modules \(A\) and \(B\) is the graded module \(A\otimes B\) defined by \[(A\otimes B)_n = \bigoplus \limits_{0 \leq i \leq n} A_i\otimes B_{n-i}.\] The \emph{tensor product} of two chain complexes \(C\) and \(D\) is the tensor product of the underlying graded modules with the boundary operators given by 
\[d(x\otimes y) = dx \otimes y + (-1)^i x \otimes dy, \quad x \in C_i, \, y \in D_{n-i}.\]
Over a field, the homology functor is compatible with tensor products. For the general case and further results in homological algebra, see, \textit{e.g.}, \cite{Dold, Hatcher}.

\subsection{Cubical chains and cubical homology} Let \(P\) be a precubical set. The \emph{cubical chain complex} of \(P\) is the  chain complex \(C_*(P)\) where \(C_n(P)\) is the free module generated by \(P_n\) and the boundary operator \(d\colon C_n(P) \to C_{n-1}(P)\) is given by \[dx = \sum \limits_{i=1}^{n}(-1)^i(d^0_ix -d^1_ix), \quad x \in P_n.\]  
The chain map induced by a morphism of precubical sets is defined in the obvious way. The \emph{cubical homology} of \(P\), denoted  by \(H_*(P)\), is the homology of \(C_*(P)\). The cubical chain complex $C_*(P)$ is naturally isomorphic to the cellular chain complex of $|P|$ (cf. \cite[Thm. 3.3.1]{labels}). Since a cubical dimap of precubical sets is a cellular map, it follows that the functors $C_*$ and $H_*$ extend to the category of precubical sets and cubical dimaps and, moreover, that a cubical dimap which is a homotopy equivalence induces an isomorphism in cubical homology. An explicit description of the chain map induced by an elementary cubical dimap is given in \cite[Prop. 7.4.1]{labels}. 

\begin{ex} \label{toy}
	Throughout this section, we will consider the example 
	HDA \(\A\) where \((P_\A)_0 = \{I_\A\}\), \((P_\A)_1 = \{x_1,x_2,x_3\}\), \((P_\A)_2= \{y\}\), \((P_\A)_n = \emptyset\)  \((n \geq 3)\), \(d^k_1x_i  = I_\A\) \((k \in \{0,1\}, i \in \{1,2,3\})\),  \(d_1^ky = x_2\), \(d_2^ky = x_1\) \((k \in \{0,1\})\), \(F_\A = \{I_\A\}\), \(\Sigma_\A = \{a_1,a_2,a_3\}\), \(D_\A = (\Sigma_\A \times \Sigma_\A) \setminus  \{(a_1,a_2),(a_2,a_1)\}\), and  \(\lambda_\A(x_i) = a_i\) \(({i \in \{1,2,3\}})\). We suppose, of course, that the \(x_i\) and the \(a_i\) are pairwise different. Geometrically, \(\A\) is a wedge (one-point union) of a torus and a circle. By definition of the cubical chain complex, \(C_0(P_\A)\) is the free module generated by \(I_\A\),  \(C_1(P_{\A})\) is the free module generated by the \(x_i\), \(C_2(P_{\A})\) is the free module generated by
	\(y\), and all other \(C_n(P_{\A})\) are \(0\). Since \(d^0_1x_i = d^1_1x_i\) and \(d^0_iy = d^1_iy\), all boundary operators of \(C_*(P_\A)\) are \(0\). Hence \(H_*(P_\A)\) has \(1\) generator in degrees \(0\) and \(2\) and \(3\) generators in degree \(1\). As this example illustrates, homology may be seen as an algebraic tool to count holes in geometric objects such as precubical sets or topological spaces. 	
\end{ex}

\subsection{The edge \texorpdfstring{\(e_ix\)}{}} \label{eki}

Let \(x\) be an element of degree \(n > 0\) of a precubical set \(P\), and let \(i \in \{1, \dots, n\}\). We define the \emph{\(i\)th starting edge} of \(x\) to be the element \(e_ix \in P_1\) given by 
\[e_ix = \left\{\begin{array}{ll} x,& n = 1,\\d_1^{0} \cdots d_{i-1}^ {0}d_{i+1}^{0}\cdots d_n^{0}x, & n > 1. \end{array} \right.\]  
The edge \(e_ix\) leads from the initial vertex of \(x\) to the initial vertex of the face \(d^1_ix\), \textit{i.e.}, we have \(d_1^ 0e_ix = {d_1^ 0\cdots d_1^0x}\) and \(d_1^1e_ix = d_1^ 0\cdots d_1^0d_i^ 1x\). An illustration is given in Figure \ref{edges}.  

\begin{figure}
	\center
	\begin{tikzpicture}[initial text={},on grid]

	\path[draw, fill=lightgray] (0,1)--(1,1)--(1,0)--(0,0)--cycle;

	\node[state,minimum size=0pt,inner sep =2pt,fill=white] (p_0) at (0,0)  {}; 
	
	\node[state,minimum size=0pt,inner sep =2pt,fill=white,label={[label distance = 0.3cm]315:\scalebox{0.85}{$x$}}] (q_0) [above=of p_0,xshift=0cm] {}; 	   
	
	\node[state,minimum size=0pt,inner sep =2pt,fill=white] (p_2) [right=of p_0,xshift=0cm] {};
	
	\node[state,minimum size=0pt,inner sep =2pt,fill=white] (p_1) [above=of p_2,xshift=0cm] {};

	\path[->] 
	(q_0) edge[above] node {\scalebox{0.85}{$e_1x = d^0_2x$}} (p_1)
	(q_0) edge[left]  node {\scalebox{0.85}{$e_2x = d^0_1x$}} (p_0)
	(p_1) edge[above] node {} (p_2)
	(p_0) edge[below] node {} (p_2)
	;
	\end{tikzpicture}
	\caption{A $2$-cube and its starting edges }\label{edges}
\end{figure}
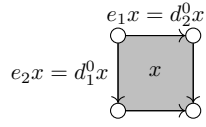

\subsection{Strings} Let \(\Sigma\) be an alphabet. Given a string \(m\) of length \(n\geq 1\), we will write \(m_1, \dots, m_n\) to denote the uniquely determined elements of \(\Sigma\) such that \(m = m_1\cdots m_n\).

\subsection{Labeling chain map}

Let \(\A\) be an HDA. Consider the exterior algebra on the free module generated by \(\Sigma_\A\), \(\Lambda (\Sigma_\A)\). Recall that this is the quotient of the tensor algebra on the free module on \(\Sigma_\A\) by the two-sided ideal generated by all elements of the form \(x\otimes x\) where \(x \in \Sigma_\A\) (see \cite{BourbakiAlgI} for more details). The exterior algebra \(\Lambda (\Sigma_\A)\) is canonically graded by the exterior powers of the free module generated by \(\Sigma_\A\). We view the graded module \(\Lambda(\Sigma_\A)\) as a chain complex with \(d=0\) and define the \emph{labeling chain map} \[\mathfrak{l}_{\A} \colon C_*(P_\A) \to \Lambda (\Sigma_\A)\] on basis elements \(x \in (P_\A)_n\) by 
\[\mathfrak{l}_{\A}(x) = \left \{\begin{array}{ll}
1_{\Lambda(\Sigma_\A)}, & n = 0, \vspace{0.2cm}\\
\sum \limits _{j_1 = 1}^{|\lambda_{\A}(e_1x)|} \dots \sum \limits _{j_n = 1}^{|\lambda_{\A}(e_nx)|} \lambda_{\A}(e_1x)_{j_1}\wedge \dots \wedge  \lambda_{\A}(e_nx)_{j_n},&  n > 0.
\end{array} \right.\]
By \cite[Prop. 4.4.5]{labels}, the labeling chain map is indeed a chain map, i.e, we have \(\mathfrak{l}_\A(dx) = d\mathfrak{l}_\A(x) = 0\) for all \(x \in C_*(P_\A)\). 

\begin{ex} \label{toylabels}
	Consider the HDA \(\A\) of Example \ref{toy}. The exterior algebra \(\Lambda (\Sigma_\A)\) is the graded module freely generated by \(1_{\Lambda(\Sigma_\A)}\) in degree 0, \(a_1\), \(a_2\), and \(a_3\) in degree \(1\), \(a_1 \wedge a_2\), \(a_1\wedge a_3\), and \(a_2\wedge a_3\) in deegre \(2\), and \(a_1\wedge a_2\wedge a_3\) in degree \(3\). In degrees \(\geq 4\), \(\Lambda (\Sigma_\A)\) is \(0\). We have \(e_1x_i = x_i\), \(e_1y = d^0_2y = x_1\), and \(e_2y = d^0_1y = x_2\). Hence the labeling chain map of \(\A\) is given by \(\mathfrak{l}_\A(I_\A) = 1_{\Lambda(\Sigma_\A)}\), \(\mathfrak{l}_\A(x_1) = a_1\), \(\mathfrak{l}_\A(x_2) = a_2\), \(\mathfrak{l}_\A(x_3) = a_3\), and \(\mathfrak{l}_\A(y) = a_1\wedge a_2\).
\end{ex}

\begin{prop} \label{fraknat}
	Let \((f, \sigma) \colon \A \to \B\) be a cubical dimap of HDAs, and let \(f_*\colon C_*(P_\A) \to C_*(P_\B)\) be the chain map induced by $f$.  Then \(\mathfrak{l}_\B \circ f_* = \Lambda(\sigma) \circ \mathfrak{l}_\A\).
\end{prop}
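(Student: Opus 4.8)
The plan is to first reduce to the case of an elementary cubical dimap and then to compute on basis elements, after rewriting the labeling chain map in a way that turns the combinatorics of subdivision into pure multilinear algebra. Since a cubical dimap of HDAs is by definition a finite componentwise composite of elementary ones, since $f_*$ is functorial in $f$ (chain maps compose) and $\Lambda(\sigma)$ is functorial in $\sigma$ (the exterior algebra is a functor), the asserted identity propagates along composites: given $(f_1,\sigma_1)\colon\A\to\C$ and $(f_2,\sigma_2)\colon\C\to\B$ for which the identity holds, one has $\mathfrak{l}_\B\circ (f_2)_*\circ(f_1)_* = \Lambda(\sigma_2)\circ\mathfrak{l}_\C\circ(f_1)_* = \Lambda(\sigma_2)\circ\Lambda(\sigma_1)\circ\mathfrak{l}_\A$. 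So I may assume from now on that $(f,\sigma)$ is elementary.

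Next I would reformulate $\mathfrak{l}$. For a word $w\in\Sigma^*$ write $\ell(w)=\sum_{j=1}^{|w|}w_j$ for the sum of its letters, an element of the degree-one part of $\Lambda(\Sigma)$. Because the wedge product is multilinear, the defining formula for $\mathfrak{l}_\A$ factors as $\mathfrak{l}_\A(x)=\ell(\lambda_\A(e_1x))\wedge\cdots\wedge\ell(\lambda_\A(e_nx))$ for $x\in(P_\A)_n$ with $n>0$. The map $\ell$ is additive over concatenation, $\ell(ww')=\ell(w)+\ell(w')$, and natural, $\ell(\sigma^*(w))=\Lambda(\sigma)(\ell(w))$. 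Hence $\Lambda(\sigma)(\mathfrak{l}_\A(x))=\ell(\sigma^*(\lambda_\A(e_1x)))\wedge\cdots\wedge\ell(\sigma^*(\lambda_\A(e_nx)))$, and by the cubical dimap condition $\overline{\lambda}_\B\circ f^{\mathbb I}=\sigma^*\circ\overline{\lambda}_\A$ applied to the path $(e_ix)_\sharp$, the $i$-th factor equals $\ell(\overline{\lambda}_\B(f^{\mathbb I}((e_ix)_\sharp)))$, the sum of $\ell$ over the labels of the successive edges of the image path. The degree-zero case is immediate, since $f_*$ sends a vertex $x$ to $f_0(x)$ and $\Lambda(\sigma)(1_{\Lambda(\Sigma_\A)})=1_{\Lambda(\Sigma_\B)}$.

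To compute $\mathfrak{l}_\B(f_*(x))$ for $n>0$, I would invoke the explicit chain map of an elementary cubical dimap \cite[Prop. 7.4.1]{labels}: with the data $l_1,\dots,l_n$, $\chi\colon\bigotimes_{i=1}^n\lbrbrak 0,l_i\rbrbrak\to P_\B$, and $\theta$ attached to $x$, one has $f_*(x)=\mathrm{sgn}(\theta)\sum_J\chi(c_J)$, where $c_J$ ranges over the top cubes of the subdivided cube $\bigotimes_{i}\lbrbrak 0,l_i\rbrbrak$, indexed by multi-indices $J=(J_1,\dots,J_n)$ with $1\le J_i\le l_i$. As $\chi$ is a precubical morphism it commutes with the starting-edge operators, so $e_i(\chi(c_J))=\chi(e_ic_J)$ and $\mathfrak{l}_\B(\chi(c_J))=\bigwedge_{i=1}^n\ell(\lambda_\B(\chi(e_ic_J)))$. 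The crucial observation is that $\lambda_\B(\chi(e_ic_J))$ depends only on $J_i$: two parallel direction-$i$ edges of the grid differing in a single coordinate $k\ne i$ are opposite faces of a $2$-cube of $\bigotimes_i\lbrbrak 0,l_i\rbrbrak$, so condition (1) in the definition of an HDA forces their $\chi$-images to carry equal labels; iterating over all coordinates $k\ne i$ identifies this label with that of the corresponding edge on the $i$-th spine through the initial vertex.

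Writing $w_{i,J_i}$ for this common label, multilinearity of the wedge then gives $\mathfrak{l}_\B(f_*(x))=\mathrm{sgn}(\theta)\sum_J\bigwedge_i\ell(w_{i,J_i})=\mathrm{sgn}(\theta)\bigwedge_{i}\bigl(\sum_{J_i=1}^{l_i}\ell(w_{i,J_i})\bigr)$, and the $i$-th inner sum is exactly $\ell(\overline{\lambda}_\B(f^{\mathbb I}((e_ix)_\sharp)))=\ell(\sigma^*(\lambda_\A(e_ix)))$ by the spine computation, so up to the sign $\mathrm{sgn}(\theta)$ the product equals $\Lambda(\sigma)(\mathfrak{l}_\A(x))$. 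The hard part will be the bookkeeping of the permutation $\theta$: one must match the grid directions of $\bigotimes_i\lbrbrak 0,l_i\rbrbrak$ to the starting edges $e_1x,\dots,e_nx$ in the correct order and check that the orientation sign $\mathrm{sgn}(\theta)$ produced by \cite[Prop. 7.4.1]{labels} is precisely cancelled by the reordering of the wedge factors that $\theta$ induces relative to the standard order used in $\mathfrak{l}_\A$, so that the two sides agree exactly and not merely up to sign. The label-independence step via condition (1), though conceptually simple, likewise requires the face-index bookkeeping to be carried out carefully.
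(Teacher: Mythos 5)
Your proposal is correct in substance but follows a genuinely different route from the paper. Both arguments begin by reducing to an elementary cubical dimap, but from there the paper simply factors \((f,\sigma)\) through an auxiliary HDA \(\C\) with \(P_\C = P_\A\) and \(\lambda_\C = \sigma^*\circ\lambda_\A\), observes that \(\mathfrak{l}_\C = \Lambda(\sigma)\circ\mathfrak{l}_\A\) holds by inspection of the definition, and cites \cite[Thm. 7.5.1]{labels} for the remaining identity \(\mathfrak{l}_\B\circ f_* = \mathfrak{l}_\C\); the whole proof is three lines. You instead unfold the content of that cited theorem: starting from the explicit chain-level formula of \cite[Prop. 7.4.1]{labels}, you show that the label of a direction-\(i\) grid edge depends only on its \(i\)-th coordinate (via condition (1) in the definition of an HDA, propagated across the \(2\)-cubes of the grid, which is a sound argument since \(\chi\) sends grid \(2\)-cubes to \(2\)-cubes of \(P_\B\)), and then use multilinearity of the wedge to collapse the sum over top cells into a product of spine labels. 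This is essentially a re-derivation of \cite[Thm. 7.5.1]{labels} itself, so your argument is more self-contained but duplicates work the paper outsources. The one piece you defer --- matching grid directions to the starting edges \(e_1x,\dots,e_nx\) via \(\theta\) and cancelling \(\mathrm{sgn}(\theta)\) against the sign produced by reordering the wedge factors --- is exactly where the real bookkeeping lies; it does close up, since \(f\) carries \(e_ix\) to the spine in grid direction \(\theta^{-1}(i)\) and anticommutativity of degree-one elements of \(\Lambda(\Sigma_\B)\) yields precisely the compensating \(\mathrm{sgn}(\theta)\), but the paper's proof never has to confront this. Your inline treatment of the alphabet change via naturality of \(w\mapsto\sum_j w_j\) is equivalent to the paper's factorization through \(\C\).
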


\begin{proof}
	We may suppose that $(f,\sigma)$ is an elementary cubical dimap of HDAs. Consider the HDA \(\C\) given by \(P_\C = P_\A\), \(I_\C = I_\A\), \(F_\C = F_\A\), \(\Sigma_\C = \Sigma_\B\), \(D_\C = D_\B\), and \(\lambda_\C = \sigma^*\circ \lambda_\A\). Then \((f,\sigma)\) decomposes as the composite of elementary cubical dimaps of HDAs 
	\[\A \xrightarrow{(id_{|P_\A|}, \sigma)} \C \xrightarrow{(f,id_{(\Sigma_\B,D_\B)})}  \B.\]
	We have \(\mathfrak{l}_\C = \Lambda(\sigma) \circ \mathfrak{l}_\A\) and, by \cite[Thm. 7.5.1]{labels}, \(\mathfrak{l}_\B \circ f_* =  \mathfrak{l}_\C\). Hence \(\mathfrak{l}_\B \circ f_* = \Lambda(\sigma) \circ \mathfrak{l}_\A\).	
\end{proof}

\subsection{Labeled homology}

Let \(\A\) be an HDA. The labeling chain map \(\mathfrak{l}_{\A}\) induces 
a morphism of graded modules
\[ \ell_{\A}\colon H_*(P_\A) \to H_*(\Lambda (\Sigma_\A)) \cong \Lambda (\Sigma_\A).\]
Explicitly, \(\ell_\A([z]) = \mathfrak{l}_\A(z)\). The pair \((H_*(P_\A),\ell_\A)\) is called the \emph{labeled homology} of \(\A\). 

\begin{ex} \label{toyell}
	For the HDA \(\A\) considered in Examples \ref{toy} and \ref{toylabels}, we have \(\ell_\A([I_\A]) = 1_{\Lambda(\Sigma_\A)}\), \(\ell_\A([x_i]) = a_i\) \((i \in \{1,2,3\})\), and \(\ell_\A([y]) = a_1\wedge a_2\). Further  examples can be found in \cite{labels}.
\end{ex}

Proposition \ref{fraknat} immediately implies the following fact:

\begin{prop} \label{ellnat}
	Let \((f, \sigma) \colon \A \to \B\) be a cubical dimap of HDAs. Then the morphism of graded modules \(f_*\colon H_*(P_\A) \to H_*(P_\B)\) satisfies \(\ell_\B \circ f_* = \Lambda(\sigma) \circ \ell_\A\).
\end{prop}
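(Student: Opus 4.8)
The plan is simply to pass to homology in the chain-level identity established in Proposition \ref{fraknat}. Recall that $\Lambda(\Sigma_\A)$ is regarded as a chain complex with zero differential, so that $H_*(\Lambda(\Sigma_\A)) \cong \Lambda(\Sigma_\A)$ canonically, and that under this identification the morphism $\ell_\A$ is given on a homology class $[z] \in H_*(P_\A)$ by $\ell_\A([z]) = \mathfrak{l}_\A(z)$, where $z$ is any cycle representing $[z]$. The whole content of the proposition is that this pointwise description is compatible with the chain-level naturality already proved.

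Concretely, I would take an arbitrary class $[z] \in H_*(P_\A)$, represented by a cycle $z \in C_*(P_\A)$. Since $f_*\colon C_*(P_\A) \to C_*(P_\B)$ is a chain map, $f_*(z)$ is a cycle representing the homology class $f_*([z]) \in H_*(P_\B)$. Unwinding the explicit description of $\ell_\B$ gives $\ell_\B(f_*([z])) = \ell_\B([f_*(z)]) = \mathfrak{l}_\B(f_*(z))$. Proposition \ref{fraknat} then applies at the chain level to yield $\mathfrak{l}_\B(f_*(z)) = \Lambda(\sigma)(\mathfrak{l}_\A(z))$, and a final appeal to the description of $\ell_\A$ gives $\Lambda(\sigma)(\mathfrak{l}_\A(z)) = \Lambda(\sigma)(\ell_\A([z]))$. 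Chaining these equalities produces $\ell_\B \circ f_* = \Lambda(\sigma) \circ \ell_\A$. Equivalently, and more conceptually, one applies the homology functor $H_*$ to the commutative square of chain maps asserted by Proposition \ref{fraknat}, whose horizontal arrows are $f_*$ and $\Lambda(\sigma)$ and whose vertical arrows are $\mathfrak{l}_\A$ and $\mathfrak{l}_\B$; functoriality of $H_*$ turns this into a commutative square on homology, and the identifications $H_*(\Lambda(\Sigma_\A)) \cong \Lambda(\Sigma_\A)$ and $H_*(\Lambda(\Sigma_\B)) \cong \Lambda(\Sigma_\B)$ turn the vertical arrows into $\ell_\A$ and $\ell_\B$.

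The only point that requires a moment's care — and the closest thing here to an obstacle — is checking that the induced map $H_*(\Lambda(\sigma))$ really coincides with $\Lambda(\sigma)$ itself under the canonical identification $H_*(\Lambda(\Sigma)) \cong \Lambda(\Sigma)$. This is immediate once one observes that $\Lambda(\Sigma_\A)$ and $\Lambda(\Sigma_\B)$ carry the zero differential, so that every element is a cycle, no nonzero element is a boundary, and the map induced on homology is literally the map on chains. Beyond this bookkeeping the statement is an element-free consequence of Proposition \ref{fraknat}, the genuine work having already been carried out at the chain level there.
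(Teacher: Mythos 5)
Your argument is correct and is exactly the route the paper takes: the paper states that Proposition \ref{fraknat} ``immediately implies'' Proposition \ref{ellnat}, and your write-up simply spells out that implication (passing the chain-level identity to homology, using that $\Lambda(\Sigma)$ has zero differential so that $H_*(\Lambda(\sigma))$ is $\Lambda(\sigma)$ itself). No discrepancy to report.
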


\subsection{The homology language}

We define the \emph{homology language} of an HDA \(\A\) to be the graded module
\[HL(\A) = \im \; (\ell_\A \colon H_*(P_\A) \to \Lambda(\Sigma_\A)).\]
Thus, by definition, the homology language of an HDA can be read off its labeled homology.

\begin{ex} \label{toyHL}
	The homology language of the HDA \(\A\) of Examples \ref{toy}, \ref{toylabels}, and \ref{toyell} is the graded submodule of \(\Lambda (\Sigma_\A)\) generated by the unit and the elements \(a_1\), \(a_2\), \(a_3\), and \(a_1\wedge a_2\).
\end{ex}

\begin{prop} \label{HLinc}
	Let \((f, \sigma) \colon \A \to \B\) be a cubical dimap of HDAs. Then \(\Lambda(\sigma) (HL(\A)) \subseteq HL(\B)\). In particular, if \((\Sigma_\A, D_\A) = (\Sigma_\B, D_\B)\) and \(\sigma = id\), then \(HL(\A) \subseteq HL(\B)\).	If, furthermore, \(f\) is a homotopy equivalence, then \(HL(\A) = HL(\B)\).
\end{prop}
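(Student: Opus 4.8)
The plan is to exploit the naturality statement already available in Proposition~\ref{ellnat}, which says that for a cubical dimap \((f,\sigma)\colon \A \to \B\) the induced map \(f_*\colon H_*(P_\A) \to H_*(P_\B)\) on cubical homology satisfies \(\ell_\B \circ f_* = \Lambda(\sigma) \circ \ell_\A\). This single identity does essentially all the work, so the proof should be a short diagram chase together with one topological input.

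First I would prove the general inclusion \(\Lambda(\sigma)(HL(\A)) \subseteq HL(\B)\). Take an arbitrary element of \(HL(\A) = \im(\ell_\A)\), that is, an element of the form \(\ell_\A(h)\) for some \(h \in H_*(P_\A)\). Applying \(\Lambda(\sigma)\) and using Proposition~\ref{ellnat}, I get \(\Lambda(\sigma)(\ell_\A(h)) = \ell_\B(f_*(h))\), which lies in \(\im(\ell_\B) = HL(\B)\). This proves the first assertion. For the ``in particular'' clause, I would simply specialize to \((\Sigma_\A,D_\A) = (\Sigma_\B,D_\B)\) and \(\sigma = id\): then \(\Lambda(\sigma) = \Lambda(id)\) is the identity on \(\Lambda(\Sigma_\A) = \Lambda(\Sigma_\B)\), and the inclusion becomes \(HL(\A) \subseteq HL(\B)\).

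For the final statement, suppose in addition that \(\sigma = id\) and that \(f\) is a homotopy equivalence. The key topological input, recorded in the excerpt's discussion of cubical chains and cubical homology, is that a cubical dimap which is a homotopy equivalence induces an \emph{isomorphism} in cubical homology; hence \(f_*\colon H_*(P_\A) \to H_*(P_\B)\) is an isomorphism. I already have \(HL(\A) \subseteq HL(\B)\) from the previous paragraph, so it remains to show the reverse inclusion. Given an element \(\ell_\B(h') \in HL(\B)\) with \(h' \in H_*(P_\B)\), I can write \(h' = f_*(h)\) for a unique \(h \in H_*(P_\A)\) by surjectivity of \(f_*\); then, again by Proposition~\ref{ellnat} with \(\sigma = id\), \(\ell_\B(h') = \ell_\B(f_*(h)) = \ell_\A(h) \in HL(\A)\). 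This gives \(HL(\B) \subseteq HL(\A)\), and combined with the earlier inclusion yields \(HL(\A) = HL(\B)\).

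The only nonroutine ingredient is the assertion that a cubical dimap which is a homotopy equivalence induces an isomorphism on cubical homology; this is exactly the property stated in the paragraph on cubical chains (using the natural isomorphism between \(C_*(P)\) and the cellular chain complex of \(|P|\), together with the fact that homotopy-equivalent spaces have isomorphic homology). I expect this to be the main ``obstacle'' only in the bookkeeping sense that one must invoke it correctly; the algebraic part of the argument is purely a diagram chase through Proposition~\ref{ellnat}, with surjectivity of \(f_*\) supplying the reverse inclusion in the homotopy-equivalence case.
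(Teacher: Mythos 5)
Your proposal is correct and follows exactly the paper's own argument: the paper's proof is a one-line appeal to Proposition \ref{ellnat} together with the fact that a homotopy equivalence induces an isomorphism in homology, and your write-up simply spells out the same diagram chase in detail.
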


\begin{proof}
	This follows from Proposition \ref{ellnat} and the fact that a homotopy equivalence induces an isomorphism in homology. 
\end{proof}

\subsection{The homology language of a tensor product}

Let \(\A\) and \(\B\) be two HDAs. We view \(HL(\A)\) and \(HL(\B)\) as graded submodules of the exterior algebra \(\Lambda(\Sigma_{\A\otimes \B}) = \Lambda(\Sigma_\A \amalg \Sigma_\B)\). 

\begin{prop} \label{HLtensor}
	\(HL(\A \otimes \B) = HL(\A) \wedge HL(\B)\).
\end{prop}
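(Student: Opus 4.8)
The plan is to pass to the chain level, identify the labeling chain map \(\mathfrak{l}_{\A\otimes \B}\) with the tensor product \(\mathfrak{l}_\A\otimes \mathfrak{l}_\B\), and then read off the homology languages by means of the Künneth theorem. First I would set up two identifications. On the one hand, the assignment \((x,y)\mapsto x\otimes y\) is a natural isomorphism of chain complexes \(C_*(P_{\A\otimes \B}) = C_*(P_\A\otimes P_\B)\xrightarrow{\cong} C_*(P_\A)\otimes C_*(P_\B)\); this is the cubical Eilenberg--Zilber identity and follows by a direct check that the cubical boundary of \((x,y)\) with \(x\in (P_\A)_p\) equals \(dx\otimes y+(-1)^p x\otimes dy\), using the face formula for the tensor product of precubical sets. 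On the other hand, wedge multiplication gives the standard isomorphism of graded algebras \(\Phi\colon \Lambda(\Sigma_\A)\otimes \Lambda(\Sigma_\B)\xrightarrow{\cong}\Lambda(\Sigma_\A\amalg \Sigma_\B)=\Lambda(\Sigma_{\A\otimes \B})\). The computational heart is then the identity \(\mathfrak{l}_{\A\otimes \B}(x,y)=\mathfrak{l}_\A(x)\wedge \mathfrak{l}_\B(y)\) for \((x,y)\in (P_\A)_p\times (P_\B)_q\). To prove it I would compute the starting edges of \((x,y)\): the face formula for the tensor product shows that \(e_i(x,y)\) equals \((e_ix,\ast)\in (P_\A)_1\times (P_\B)_0\) for \(i\le p\) and \((\ast,e_{i-p}y)\in (P_\A)_0\times (P_\B)_1\) for \(i>p\), so that \(\lambda_{\A\otimes \B}(e_i(x,y))\) is \(\lambda_\A(e_ix)\) or \(\lambda_\B(e_{i-p}y)\) accordingly; substituting into the defining formula for \(\mathfrak{l}_{\A\otimes \B}\) makes the iterated sum factor as \(\mathfrak{l}_\A(x)\wedge \mathfrak{l}_\B(y)\). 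Under the two identifications this says exactly \(\mathfrak{l}_{\A\otimes \B}=\Phi\circ(\mathfrak{l}_\A\otimes \mathfrak{l}_\B)\).

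Next I would record the consequence for homology classes. If \(z\in C_*(P_\A)\) and \(w\in C_*(P_\B)\) are cycles, then \(z\otimes w\) is a cycle in \(C_*(P_{\A\otimes \B})\) and, by the identity just established, \(\ell_{\A\otimes \B}([z\otimes w])=\mathfrak{l}_{\A\otimes \B}(z\otimes w)=\mathfrak{l}_\A(z)\wedge \mathfrak{l}_\B(w)=\ell_\A([z])\wedge \ell_\B([w])\). Since such classes \([z\otimes w]\) are precisely the images of the Künneth cross product and since \(\ell_\A([z])\) and \(\ell_\B([w])\) range over \(HL(\A)\) and \(HL(\B)\), this gives at once the inclusion \(HL(\A)\wedge HL(\B)\subseteq HL(\A\otimes \B)\).

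For the reverse inclusion I would exploit that the target \(\Lambda(\Sigma_{\A\otimes \B})\) is free, hence torsion-free, so that \(\ell_{\A\otimes \B}\) annihilates the torsion submodule of \(H_*(P_{\A\otimes \B})\). By the Künneth theorem over the fixed principal ideal domain \(R\) there is a short exact sequence \(0\to \bigoplus_{i+j=n}H_i(P_\A)\otimes H_j(P_\B)\to H_n(P_{\A\otimes \B})\to \bigoplus_{i+j=n-1}\mathrm{Tor}(H_i(P_\A),H_j(P_\B))\to 0\) whose left-hand map is the cross product, where I use \(H_*(P_{\A\otimes \B})=H_*(C_*(P_\A)\otimes C_*(P_\B))\). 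Because every \(\mathrm{Tor}^R_1\)-term over a PID is a torsion module, a short computation shows that \(H_*(P_{\A\otimes \B})\) is generated modulo its torsion submodule by the cross products \([z\otimes w]\). Consequently any homology class \(\xi\) differs from a finite sum \(\sum_k[z_k\otimes w_k]\) by a torsion class, and applying \(\ell_{\A\otimes \B}\) gives \(\ell_{\A\otimes \B}(\xi)=\sum_k\ell_\A([z_k])\wedge \ell_\B([w_k])\in HL(\A)\wedge HL(\B)\). Hence \(HL(\A\otimes \B)\subseteq HL(\A)\wedge HL(\B)\), which together with the previous paragraph yields the asserted equality.

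The main obstacle is the behaviour over a general PID rather than a field: the cross product need not be surjective on homology, so a priori there could be classes in \(H_*(P_{\A\otimes \B})\), namely the \(\mathrm{Tor}\)-classes, that are not of the form \([z\otimes w]\). The resolution, and the key point of the argument, is that these extra classes are torsion and are therefore killed by \(\ell_{\A\otimes \B}\), which lands in the torsion-free module \(\Lambda(\Sigma_{\A\otimes \B})\); this reduces the computation of \(HL(\A\otimes \B)\) to cross products, where the chain-level identity of the first step applies. I expect the starting-edge bookkeeping in that first step, keeping track of which tensor factor each face operator acts on, to be the other place demanding care.
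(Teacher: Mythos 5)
Your proof is correct and follows essentially the same route as the paper's: the paper likewise reduces everything to the identity \(\ell_{\A\otimes\B}\circ\times = \wedge\circ(\ell_\A\otimes\ell_\B)\) (citing \cite[Thm. 5.3.2]{labels} for the commutative diagram that you rederive by hand via the starting-edge computation \(e_i(x,y)\)) and then uses the K\"unneth theorem together with the freeness of \(\Lambda(\Sigma_{\A\otimes\B})\) to discard the torsion. One caution on your ``short computation'': the mere fact that \(H_*(P_{\A\otimes\B})/\im\,\times\) is a torsion module does \emph{not} imply that every class is a sum of cross products plus a torsion class (consider \(2\Z\subseteq\Z\), whose quotient is torsion while \(\Z\) is torsion-free); you need the splitting of the K\"unneth sequence, which is exactly what the paper invokes when it writes \(H_*(P_\A\otimes P_\B)=U\oplus\im\,\times\) with \(U\) a graded torsion module.
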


\begin{proof}
	Consider the homology cross product
	\[ \times \colon  H_*(P_\A) \otimes H_*(P_\B) \to  H_*(P_\A\otimes P_\B) = H_*(P_{\A\otimes \B}),\]
	\textit{i.e.}, the composite \(\zeta_*\circ \kappa\) where \(\kappa\) is the homomorphism of graded modules 
	\[H_*(P_\A)\otimes H_*(P_\B) \to H_*(C_*(P_\A)\otimes C_*(P_\B)), \quad [x]\otimes [y] \mapsto [x\otimes y]\] 
	and \(\zeta\) is the isomorphism of chain complexes \(C_*(P_\A)\otimes C_*(P_\B) \to C_*(P_\A\otimes P_\B)\) given by \[x\otimes y \mapsto (x,y), \quad x\in P_\A,\, y\in P_\B.\]
	By \cite[Thm. 5.3.2]{labels}, we have the following commutative diagram of graded modules:	
	\[
	\begin{tikzcd}
	H_*(P_\A) \otimes H_*(P_\B) \ar[r, "\times"] \ar[d, "\ell_\A\otimes \ell_\B"' ] &  H_*(P_\A\otimes P_\B) \ar[r, "="] & H_*(P_{\A\otimes \B}) \ar[d, "\ell_{\A\otimes \B}"]\\
	\Lambda(\Sigma_\A) \otimes  \Lambda(\Sigma_\B) \ar[r, hook] & \Lambda(\Sigma_{\A\otimes \B}) \otimes  \Lambda(\Sigma_{\A\otimes \B}) \ar[r, "\wedge"'] & \Lambda(\Sigma_{\A\otimes \B})
	\end{tikzcd}	
	\]
	By the Künneth theorem, there exists a graded torsion module \(U\subseteq H_*(P_\A\otimes P_\B)\) such that \[{H_*(P_\A\otimes P_\B)} = U \oplus \im \times.\] Since \(\Lambda(\Sigma_{\A \otimes \B})\) is a free module, \(\ell_{\A\otimes \B}(U) = 0\). Hence  \[HL(\A \otimes \B) = \ell_{\A\otimes \B}(U \oplus \im \times) = \ell_{\A\otimes \B} (\im \times) = \im (\ell_{\A\otimes \B} \circ \times).\] By the commutativity of the above diagram, \(\im (\ell_{\A\otimes \B} \circ \times) =  HL(\A) \wedge HL(\B)\). Thus \(HL(\A \otimes \B)  =  HL(\A) \wedge HL(\B)\).
\end{proof}

\begin{ex} \label{toytensor}
	Let \(\A_i\) be the sub-HDA of the HDA \(\A\) of Example \ref{toy} defined by \((P_{\A_i})_0 = \{I_\A\}\), \((P_{\A_i})_1 = \{x_i\}\), and \(\Sigma_{\A_i} = \{a_i\}\). Geometrically, \(\A_i\) is a circle, and one easily computes that \(HL(\A_i) = \Lambda(\{a_i\})\). By Proposition \ref{HLtensor}, for \(i \not=j\), \(HL(\A_i\otimes \A_j) = HL(\A_i)\wedge HL(\A_j) = \Lambda (\{a_i,a_j\})\). 
\end{ex}

\subsection{The homology language of a coproduct}

Let \(\A\) and \(\B\) be two HDAs. We view \(HL(\A)\) and \(HL(\B)\) as graded submodules of the exterior algebra \(\Lambda(\Sigma_{\A+\B}) = \Lambda(\Sigma_\A \amalg \Sigma_\B)\). 

\begin{prop} \label{HLcoprod}
	\(HL(\A + \B) = HL(\A) + HL(\B)\). 
\end{prop}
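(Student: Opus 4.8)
The plan is to prove the two inclusions separately, getting the easy one from the functoriality already established and the harder one from a wedge-type decomposition of the cubical homology of the coproduct.

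For the inclusion \(HL(\A) + HL(\B) \subseteq HL(\A+\B)\) I would invoke the morphisms of HDAs \((j_\A,\sigma_\A)\colon \A \to \A+\B\) and \((j_\B,\sigma_\B)\colon \B \to \A+\B\) from the proof of Proposition \ref{picoprod}, where \(j_\A(x) = (x,I_\B)\), \(j_\B(y)=(I_\A,y)\), and \(\sigma_\A,\sigma_\B\) are the canonical inclusions of concurrent alphabets. Their geometric realizations are cubical dimaps, so Proposition \ref{HLinc} gives \(HL(\A) = \Lambda(\sigma_\A)(HL(\A)) \subseteq HL(\A+\B)\) and likewise \(HL(\B) \subseteq HL(\A+\B)\); here I use that \(\Lambda(\sigma_\A)\) is the inclusion \(\Lambda(\Sigma_\A)\hookrightarrow \Lambda(\Sigma_{\A+\B})\) under which \(HL(\A)\) is viewed as a submodule. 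Adding the two yields the inclusion.

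For the reverse inclusion the key point is the chain-level observation that \(P_{\A+\B}\) is a wedge of \(P_\A\) and \(P_\B\) along the common basepoint \((I_\A,I_\B)\). Concretely, for every \(n\geq 1\) one has \((P_{\A+\B})_n = j_\A((P_\A)_n)\sqcup j_\B((P_\B)_n)\), so \(C_n(P_{\A+\B}) = (j_\A)_*C_n(P_\A)\oplus (j_\B)_*C_n(P_\B)\), and the boundary operator respects this splitting in degrees \(\geq 2\). I would then check that cycles and boundaries split accordingly, obtaining \(H_n(P_{\A+\B}) = (j_\A)_*H_n(P_\A)\oplus (j_\B)_*H_n(P_\B)\) for all \(n\geq 1\). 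Granting this, an arbitrary homology class in positive degree has the form \((j_\A)_*(a)+(j_\B)_*(b)\), and applying \(\ell_{\A+\B}\) together with Proposition \ref{ellnat} (\(\ell_{\A+\B}\circ (j_\A)_* = \Lambda(\sigma_\A)\circ \ell_\A\), and similarly for \(\B\)) places its image in \(HL(\A)+HL(\B)\). In degree \(0\) there is nothing to prove: \(\ell\) sends the class of any vertex to the unit of the exterior algebra, so \(HL(\A)_0\), \(HL(\B)_0\) and \(HL(\A+\B)_0\) all coincide with the ground-ring copy generated by that unit.

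The main obstacle is the splitting of homology in degree \(1\), since the boundary \(d\colon C_1\to C_0\) does \emph{not} preserve the decomposition of \(C_0(P_{\A+\B})\): the two summands \(C_0(P_\A)\) and \(C_0(P_\B)\) overlap in the basepoint. To handle this I would use the augmentation \(\epsilon\colon C_0\to R\) sending each vertex to \(1\): if \(c = c_\A + c_\B\) is a \(1\)-cycle with \(c_\A\in C_1(P_\A)\) and \(c_\B\in C_1(P_\B)\), then \(dc_\A = -dc_\B\) lies in \(C_0(P_\A)\cap C_0(P_\B) = R\cdot(I_\A,I_\B)\), while \(\epsilon(dc_\A)=0\) forces \(dc_\A = 0 = dc_\B\); hence \(Z_1\) splits, and \(B_1\) splits because it is the image of \(C_2\), where the decomposition is already respected. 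The remaining degrees are routine. This is precisely the cubical analogue of the standard fact that reduced homology turns wedges into direct sums, and it is the only step requiring genuine care.
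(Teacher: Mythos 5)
Your proposal is correct and follows essentially the same route as the paper: both arguments reduce the statement to the surjectivity of \((j_{\A*},j_{\B*})\colon H_*(P_\A)\oplus H_*(P_\B)\to H_*(P_{\A+\B})\) in positive degrees and then conclude via the naturality of the labeling (Proposition \ref{ellnat}). The only difference is how that surjectivity is obtained --- the paper applies the long exact sequence of the short exact sequence \(0 \to C_*(\{I\}) \to C_*(P_\A)\oplus C_*(P_\B)\to C_*(P_{\A+\B})\to 0\) coming from the pushout of precubical sets, whereas you verify the wedge decomposition of cycles and boundaries by hand (with the augmentation handling the degree-\(1\) subtlety, which you treat correctly); this is a legitimate elementary substitute for the same key fact.
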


\begin{proof}
	Let \((j_\A,\sigma_\A)\colon \A \to \A +\B\) and \((j_\B, \sigma_\B) \colon \B \to {\A+\B}\) be the morphisms of HDAs where the morphisms of precubical sets \(j_\A \colon P_\A \to P_{\A+\B}\) and \(j_\B \colon P_\B \to P_{\A+\B}\) are given by \(j_\A(x) = (x,I_\B)\) and \(j_\B(y) = (I_\A,y)\) and \(\sigma_\A \colon (\Sigma_\A,D_\A) \to (\Sigma_{\A+\B},D_{\A+\B})\) and \(\sigma_\B \colon (\Sigma_\B,D_\B) \to (\Sigma_{\A+\B},D_{\A+\B})\) are the inclusions. Consider the morphisms of precubical sets \(I \mapsto I_\A\) and \(I\mapsto I_\B\) from a precubical set with one vertex \(I\) to \(P_\A\) and \(P_\B\), respectively. Then we have the following push out of precubical sets:
	\[
	\begin{tikzcd}
	\{I\} \ar[r] \ar[d] & P_\A \ar[d, "j_\A"] \\
	P_\B \ar[r, "j_\B"'] &P_{\A+\B}
	\end{tikzcd}
	\] 
	Applying cubical chains to this push out, we obtain a push out of chain complexes, which yields a short exact sequence
	\[0 \to C_*(\{I\}) \to C_*(P_\A) \oplus C_*(P_\B) \xrightarrow{(j_{\A*},j_{\B*})} C_*(P_{\A+\B}) \to 0.\]
	The induced long exact sequence in homology shows that the upper map
	in the following commutative diagram of graded modules is surjective: 
	\[
	\begin{tikzcd}[column sep=4.5pc]
	H_*(P_\A) \oplus H_*(P_\B) \ar[r,"(j_{\A*}{,}j_{\B*})"] \ar[d, "\ell_\A \oplus \ell_\B"'] & H_*(P_{\A+\B}) \ar[d, "\ell_{\A+\B}"]\\
	\Lambda(\Sigma_\A) \oplus \Lambda(\Sigma_\B) \ar[r, "(\Lambda(\sigma_\A){,}\Lambda(\sigma_\B))"'] & \Lambda(\Sigma_{\A+ \B})
	\end{tikzcd}
	\]
	Hence \(HL(\A+\B) = \im (\ell_{\A+\B} \circ (j_{\A*}{,}j_{\B*})) = \im ((\Lambda(\sigma_\A){,}\Lambda(\sigma_\B))\circ (\ell_\A \oplus \ell_\B)) = HL(\A) + HL(\B)\).
\end{proof}

\begin{ex}
	Consider the HDA \(\A\) of Example \ref{toy} and the sub-HDAs \(\A_i\) of \(\A\) defined in Example \ref{toytensor}. By Proposition \ref{HLcoprod} and Examples \ref{toyHL} and \ref{toytensor}, we have \(HL((\A_1\otimes \A_2) + \A_3) = {HL(\A_1\otimes \A_2) + HL(\A_3)} = \Lambda (\{a_1,a_2\}) + \Lambda (\{a_3\}) = HL(\A)\), which is no surprise since \(\A \cong (\A_1 \otimes \A_2) + \A_3\). 
\end{ex}

\subsection{Independence} Let \(\A\) be an HDA, and let \(\A_1, \dots, \A_n\) \((n \geq 2)\) be HDAs with disjoint alphabets, each contained in $\Sigma_\A$. We say that the \(\A_i\) are \emph{independent in $\A$} if there exist a sub-HDA \(\B \subseteq \A\) and an isomorphism \((f,\sigma) \colon \A_1 \otimes \cdots \otimes \A_n \to \B\) in the category of HDAs and cubical dimaps such that \(\Sigma_\B = \bigcup_{i=1}^n \Sigma_{\A_i}\) and \(\sigma\) is induced by the inclusions \(\Sigma_{\A_i} \hookrightarrow \Sigma_{\B}\). Since \(\Sigma_{\A_i} \subseteq \Sigma_\B \subseteq \Sigma_\A\), we may view \(HL(\A_i)\) and \(HL(\B)\) as graded submodules of both \(\Lambda(\Sigma_\B)\) and \(\Lambda(\Sigma_\A)\). 

\begin{prop} \label{HLindep}
	If the HDAs \(\A_1, \dots, \A_n\) are independent in \(\A\), then \[HL(\A_1)\wedge \cdots \wedge HL(\A_n) \subseteq HL(\A).\]
\end{prop}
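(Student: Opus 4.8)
The plan is to combine the tensor-product formula of Proposition \ref{HLtensor} with the functoriality of the homology language under cubical dimaps (Proposition \ref{HLinc}), passing through the sub-HDA \(\B\) witnessing independence. First I would establish the \(n\)-fold version of Proposition \ref{HLtensor} by a straightforward induction on \(n\), using the associativity of both the tensor product of HDAs and the wedge product. This gives the equality of graded submodules
\[HL(\A_1 \otimes \cdots \otimes \A_n) = HL(\A_1) \wedge \cdots \wedge HL(\A_n)\]
inside \(\Lambda(\Sigma_{\A_1 \otimes \cdots \otimes \A_n}) = \Lambda(\Sigma_{\A_1} \amalg \cdots \amalg \Sigma_{\A_n})\).

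Next I would transport this equality across the isomorphism \((f,\sigma) \colon \A_1 \otimes \cdots \otimes \A_n \to \B\). Since the \(\Sigma_{\A_i}\) are pairwise disjoint subsets of \(\Sigma_\A\), the canonical map \(\Sigma_{\A_1} \amalg \cdots \amalg \Sigma_{\A_n} \to \bigcup_i \Sigma_{\A_i} = \Sigma_\B\) is a bijection, and by hypothesis \(\sigma\) is exactly this bijection read off the inclusions \(\Sigma_{\A_i} \hookrightarrow \Sigma_\B\). Applying the inclusion \(\Lambda(\sigma)(HL(-)) \subseteq HL(-)\) of Proposition \ref{HLinc} to \((f,\sigma)\), I obtain
\[\Lambda(\sigma)\bigl(HL(\A_1) \wedge \cdots \wedge HL(\A_n)\bigr) \subseteq HL(\B).\]
The key observation is that \(\Lambda(\sigma)\) sends each generator \(a \in \Sigma_{\A_i}\) of the coproduct to the same element \(a\) viewed in \(\Sigma_\B\); hence, under the identifications of the \(HL(\A_i)\) as submodules of both \(\Lambda(\Sigma_{\A_1 \otimes \cdots \otimes \A_n})\) and \(\Lambda(\Sigma_\B)\) adopted in the statement, \(\Lambda(\sigma)\) restricts to the identity on \(HL(\A_1) \wedge \cdots \wedge HL(\A_n)\). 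This yields \(HL(\A_1) \wedge \cdots \wedge HL(\A_n) \subseteq HL(\B)\) in \(\Lambda(\Sigma_\B)\).

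Finally I would observe that the inclusion \(\B \subseteq \A\) underlies a morphism of HDAs: the inclusion \(P_\B \hookrightarrow P_\A\) is a morphism of precubical sets preserving the initial state and the final states, the inclusion \((\Sigma_\B, D_\B) \hookrightarrow (\Sigma_\A, D_\A)\) is a morphism of concurrent alphabets because \(D_\B = D_\A \cap (\Sigma_\B \times \Sigma_\B)\), and the labeling condition holds trivially since \(\lambda_\B = \lambda_\A|_{(P_\B)_1}\). Its geometric realization is therefore a cubical dimap \((|\iota|, j) \colon \B \to \A\) with \(j\) the alphabet inclusion, so Proposition \ref{HLinc} gives \(\Lambda(j)(HL(\B)) \subseteq HL(\A)\). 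As \(\Lambda(j)\) is merely the inclusion \(\Lambda(\Sigma_\B) \hookrightarrow \Lambda(\Sigma_\A)\), this reads \(HL(\B) \subseteq HL(\A)\) under the identification of \(HL(\B)\) as a submodule of \(\Lambda(\Sigma_\A)\). Chaining the two inclusions finishes the argument.

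The induction and the verification that a sub-HDA inclusion is a cubical dimap are routine. The only point requiring care is the exterior-algebra bookkeeping in the middle step, namely checking that the identifications of each \(HL(\A_i)\) inside \(\Lambda(\Sigma_{\A_1 \otimes \cdots \otimes \A_n})\) and inside \(\Lambda(\Sigma_\B)\) are compatible with \(\Lambda(\sigma)\); but this is precisely what the hypothesis that \(\sigma\) is induced by the inclusions \(\Sigma_{\A_i} \hookrightarrow \Sigma_\B\) guarantees, so no genuine obstacle arises.
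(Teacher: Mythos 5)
Your proposal is correct and follows essentially the same route as the paper: the $n$-fold version of Proposition \ref{HLtensor} combined with the isomorphism onto $\B$ gives $HL(\B) = HL(\A_1)\wedge\cdots\wedge HL(\A_n)$, and Proposition \ref{HLinc} applied to the sub-HDA inclusion $\B\subseteq\A$ gives $HL(\B)\subseteq HL(\A)$. The extra bookkeeping you carry out (the induction, the action of $\Lambda(\sigma)$ on generators, and the verification that a sub-HDA inclusion is a cubical dimap) is exactly what the paper leaves implicit.
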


\begin{proof}
	It follows from Proposition \ref{HLtensor} that \(HL(\B) = HL(\A_1)\wedge \cdots \wedge HL(\A_n)\). By Proposition \ref{HLinc}, \(HL(\B) \subseteq HL(\A)\).	
\end{proof}

\begin{exs} \label{secex}

		(i) Consider again the HDA \(\A\) of Example \ref{toy} and the sub-HDAs \(\A_i\) of \(\A\) defined in Example \ref{toytensor}.
		We have \(HL(\A) = {\Lambda(\{a_1,a_2\})+\Lambda(\{a_3\})}\) and \(HL(\A_i) = \Lambda(\{a_i\})\). Since \(HL(\A_1)\wedge HL(\A_3) = \Lambda(\{a_1,a_3\}) \not \subseteq HL(\A)\), Proposition \ref{HLindep} implies that the HDAs \(\A_1\) and \(\A_3\) are not independent in \(\A\). For the same reason, \(\A_2\) and \(\A_3\) are not independent in \(\A\). The fact that \(HL(\A_1)\wedge HL(\A_2) = \Lambda(\{a_1,a_2\}) \subseteq HL(\A)\) suggests that \(\A_1\) and \(\A_2\) are independent in \(\A\). And indeed, \(\A_1\otimes \A_2\) is isomorphic in the required way to the sub-HDA \(\B\) of \(\A\) given by \((P_{\B})_0 = \{I_\A\}\), \((P_{\B})_1 = \{x_1,x_2\}\), \((P_\B)_2 = \{y\}\), and \(\Sigma_{\B} = \{a_1,a_2\}\).

	(ii) A small grocery store has three shopping baskets and two checkout counters. All customers behave the same:
	\begin{itemize}
		\item They wait until a basket is available and then start shopping.
		\item Once they have selected the products they wish to buy, they move to the checkouts and wait for their turn to pay. Since they are usually very polite and give others priority, the order in which they pay is unpredictable---even if there is a queue.
		\item Having paid, they return the basket and leave the store. 
		\item If they forgot to buy something, they repeat the procedure from the beginning.
	\end{itemize}  
	Focusing on the behavior of the customers with respect to the shared resources---the baskets and the checkouts---we may describe the shopping protocol as the program graph (in the sense of \cite{BaierKatoen}) depicted in Figure \ref{figex}. The actions modify two integer variables \(\mathtt{x}\) and \(\mathtt{y}\) counting the available shopping baskets and the free checkouts, respectively. They are defined as follows:
	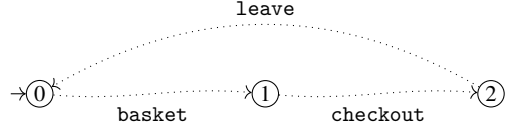
\begin{figure} 
		\center
		\begin{tikzpicture}[initial text={},on grid] 
		
		\node[state, initial by arrow, initial where=left, initial distance=0.2cm,minimum size=0pt,inner sep =1pt,fill=white] (q_0)  {\scalebox{0.85}{0}}; 
		
		\node[state,minimum size=0pt,inner sep =1pt,fill=white] (q_1) [right=of q_0,xshift=2cm] {\scalebox{0.85}{1}};

		\node[state,minimum size=0pt,inner sep =1pt,fill=white] (q_3) [right=of q_1,xshift=2cm] {\scalebox{0.85}{2}};

		\path[->] 
		(q_0) edge[below,dotted,inner sep =5pt,out=355,in=175] node {\scalebox{0.85}{\(\mathtt{basket}\)}} (q_1)
		(q_1) edge[below,dotted,inner sep =5pt,out=355,in=175] node {\scalebox{0.85}{\(\mathtt{checkout}\)}} (q_3)		
		(q_3) edge[above,dotted,out=150,in=30] node {\scalebox{0.85}{\(\mathtt{leave}\)}} (q_0);

		\end{tikzpicture}
		\caption{Program graph for Example \ref{secex} (ii)} \label{figex}
	\end{figure}
	\begin{itemize}
		\item \(\mathtt{basket}\): Wait until \(\mathtt{x>0}\), and then decrement \(\mathtt{x}\).
		\item \(\mathtt{checkout}\): Wait until \(\mathtt{y > 0}\), and then decrement \(\mathtt{y}\).
		\item \(\mathtt{leave}\): Increment both \(\mathtt{y}\) and \(\mathtt{x}\).
	\end{itemize}
	If we view \(\mathtt{x}\) and \(\mathtt{y}\) as semaphores and consider Dijkstra's \(\mathtt{P}\) and \(\mathtt{V}\) operations (see, \textit{e.g.}, \cite{DijkstraCSP}), then \(\mathtt{basket} = \mathtt{P(x)}\), \(\mathtt{checkout} = \mathtt{P(y)}\), and \(\mathtt{leave} = \mathtt{V(y); V(x)}\).
	
	Let us now consider a system of four customers executing the above protocol, and let us suppose that initially all shopping baskets and both checkout counters are free, \textit{i.e.}, \(\mathtt{x = 3}\) and \(\mathtt{y = 2}\). Assuming atomicity of the actions of the program graph, we may use the method of \cite{transhda}, implemented in the tool \(\mathtt{pg2hda}\) \cite{pg2hda}, to construct an HDA \(\A\) modeling the state space of the system. We do not need to know $\A$ in detail. Let us just mention that it is a 3-dimensional HDA with 563 cubes altogether and that its alphabet is the set 
	\[\Sigma_\A = \{\mathtt{basket_i}, \mathtt{checkout_i}, \mathtt{leave_i}\, |\, \mathtt{i} \in \{\mathtt{0},\mathtt{1},\mathtt{2},\mathtt{3}\}\}.\]
	The indexes of the labels are introduced to distinguish the four customers. 
	
	The homology language of \(\A\) with \(\Z_2\)-coefficients can be computed from \(\A\) with the aid of the software \texttt{CHomP} \cite{chomp}. It is clear that \(HL(\A)\) is generated by the unit of \(\Lambda(\Sigma_\A)\) in degree \(0\). In degree \(1\), \(HL(\A)\) is generated by the elements 
	\[\mathtt{basket_i} + \mathtt{checkout_i}+ \mathtt{leave_i},\quad \mathtt{i} \in \{\mathtt{0},\mathtt{1},\mathtt{2},\mathtt{3}\}.\]
	Each of these elements represents one of the customer processes executing alone. In degree \(2\), \(HL(\A)\) is generated by the products 
	\[(\mathtt{basket_i} + \mathtt{checkout_i}+ \mathtt{leave_i})\wedge (\mathtt{basket_j} + \mathtt{checkout_j}+ \mathtt{leave_j}),\quad \mathtt{i < j}.\]
	In view of Proposition \ref{HLindep}, this indicates that any two customers are independent and can proceed simultaneously without conflict if the other customers do nothing (or just talk). Since there are two checkouts, this is, of course, to be expected. Since there are no more than two checkouts, one would certainly also expect that no three customer processes are independent, despite the fact that there are three shopping baskets. And indeed, although \(\A\) has cubes of dimension 3, \(HL(\A)\) is trivial in degrees \(\geq 3\). We conclude that any two but no three customers are independent. Note that our analysis of the  independence structure of $\A$ has been carried out at the level of the homology language, without explicit mention of HDAs representing the customer processes. Note also that the homology language would have been the same for a store with only two baskets.
	
	As this example shows, the homology language does not necessarily uncover surprising features of concurrent systems. 
	Arguably, however, it encodes fundamental information on independence in HDAs. 
	
\end{exs}

\section{Weak equivalence} \label{secwe}

As pointed out in the introduction, weak equivalence is a coarse notion of equivalence for HDAs that focuses on a small number of fundamental features. Besides the trace language, the fundamental monoid, and the homology language, these are accessibility and coaccessibility. Weak equivalence is defined as the symmetric closure of a preorder called weak implementation. We show that both relations are compatible with the tensor product and, at least in the coaccessible case, the coproduct of HDAs. We also relate weak equivalence to the preorder of topological abstraction introduced in \cite{topabs} and adapt the results of that paper to provide conditions under which HDAs can be reduced to weakly equivalent smaller ones by collapsing and merging cubes.

\subsection{Accessible HDAs}

A state \(x\) of an HDA \(\A\) is called \emph{reachable} if there exists a path in \(\A\) from \(I_\A\) to \(x\). An HDA is called \emph{accessible} if all states are reachable. The proof of the following elementary fact is left to the reader: 

\begin{prop} \label{acc}
	Let \(\A\) and \(\B\) be two HDAs. The following statements are equivalent:
	\begin{enumerate}
		\item \(\A\) and \(\B\) are accessible.
		\item \(\A \otimes \B\) is accessible.
		\item \(\A + \B\) is accessible.
	\end{enumerate}
\end{prop}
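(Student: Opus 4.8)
The plan is to establish the two equivalences $(1) \Leftrightarrow (2)$ and $(1) \Leftrightarrow (3)$, which together yield the proposition. Everything rests on two complementary facts about paths in a tensor product. Since $(P_\A \otimes P_\B)_1 = (P_\A)_1 \times (P_\B)_0 \amalg (P_\A)_0 \times (P_\B)_1$, the uniqueness of the decomposition of a path into edges shows that every path of positive length in $P_\A \otimes P_\B$ is a concatenation $\omega = (x_1,y_1)_\sharp \cdots (x_n,y_n)_\sharp$ in which each edge is either an \emph{$\A$-edge} ($x_i \in (P_\A)_1$, $y_i \in (P_\B)_0$) or a \emph{$\B$-edge} ($x_i \in (P_\A)_0$, $y_i \in (P_\B)_1$); and the face formulas $d_i^k(x,y)$ for the tensor product show that an $\A$-edge leaves the $\B$-coordinate of the traversed vertices unchanged and a $\B$-edge leaves the $\A$-coordinate unchanged.

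For $(1) \Rightarrow (2)$ I would fix a vertex $(v,w)$ of $\A \otimes \B$, take paths $\alpha$ from $I_\A$ to $v$ in $\A$ and $\beta$ from $I_\B$ to $w$ in $\B$ (which exist by accessibility), and concatenate the lift of $\alpha$ obtained by fixing the second coordinate at $I_\B$ with the lift of $\beta$ obtained by fixing the first coordinate at $v$. This is precisely the path $\omega$ built in the proof of Proposition \ref{TLtensor}, and it runs from $(I_\A,I_\B)$ to $(v,w)$. The same construction proves $(1) \Rightarrow (3)$: a vertex of $\A + \B$ is either $(v,I_\B)$, reached by the lift of $\alpha$ inside $P_\A \otimes \{I_\B\}$, or $(I_\A,w)$, reached by the lift of $\beta$ inside $\{I_\A\} \otimes P_\B$; both lifts already lie in $P_{\A+\B}$.

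For the converses $(2) \Rightarrow (1)$ and $(3) \Rightarrow (1)$ I would project. Given a vertex $v$ of $\A$, the vertex $(v,I_\B)$ is reachable in $\A \otimes \B$, and likewise in $\A + \B$ because $P_{\A+\B} \subseteq P_\A \otimes P_\B$; so there is a path $\omega = (x_1,y_1)_\sharp \cdots (x_n,y_n)_\sharp$ from $(I_\A,I_\B)$ to $(v,I_\B)$. Erasing the $\B$-edges and keeping the subsequence of $\A$-edges $x_{i_1}, \dots, x_{i_m}$ yields a path in $\A$: since the erased $\B$-edges do not move the $\A$-coordinate, the back vertex of $x_{i_t}$ coincides with the front vertex of $x_{i_{t+1}}$, and the two endpoints of $\omega$ project to $I_\A$ and $v$. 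Hence $v$ is reachable in $\A$, and the symmetric argument applied to $(I_\A,w)$ shows that $\B$ is accessible.

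The step demanding the most care is the projection: one must verify that erasing the $\B$-edges genuinely leaves a path in $\A$, that is, that the retained $\A$-edges concatenate correctly and that the projected endpoints are indeed $I_\A$ and $v$. This is a routine but unavoidable bookkeeping with the boundary operators $d_i^k(x,y)$ of the tensor product, and it is the only place where one argues rather than merely quotes definitions. For the coproduct one may instead appeal to the decomposition of a path in $\A + \B$ into excursions into $\A$ and $\B$ meeting only at $(I_\A,I_\B)$, already used in the proof of Proposition \ref{picoprod}; the projection then simply collects the $\A$-excursions.
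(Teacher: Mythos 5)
Your proof is correct and complete; the paper itself gives no argument here (it explicitly leaves this ``elementary fact'' to the reader), and your lifting/projection argument via the decomposition of a path in \(P_\A\otimes P_\B\) into \(\A\)-edges and \(\B\)-edges is exactly the standard argument the author intends, consistent with the path constructions used in the proofs of Propositions \ref{TLtensor} and \ref{picoprod}. The one step you rightly flag as needing care --- that erasing \(\B\)-edges leaves a genuine path in \(\A\) because such edges fix the \(\A\)-coordinate --- is handled correctly by the boundary formulas for the tensor product.
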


\subsection{Coaccessible HDAs}

A state \(x\) is called \emph{coreachable} if there exists a path from \(x\) to a final state. An HDA is called \emph{coaccessible} if all states are coreachable. Coaccessibility guarantees the absence of very bad phenomena such as deadlocks. We omit the easy proof of the following proposition: 

\begin{prop} \label{coacc}
	Let \(\A\) and \(\B\) be two HDAs. If \(\A\) and \(\B\) are coaccessible, then so are \(\A \otimes \B\) and \(\A+ \B\). If \(\A \otimes \B\) is coaccessible, then so are \(\A\) and \(\B\).
\end{prop}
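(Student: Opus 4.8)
The statement splits into three implications, and the plan is to handle the two ``forward'' directions (coaccessibility of the factors yields coaccessibility of $\A \otimes \B$ and of $\A + \B$) by building paths in the composite out of paths in the factors, and the ``reverse'' direction ($\A \otimes \B$ coaccessible forces $\A$ and $\B$ coaccessible) by projecting a path in the tensor product back to each factor.

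For the tensor product I would start with an arbitrary vertex $(v,w) \in (P_{\A \otimes \B})_0 = (P_\A)_0 \times (P_\B)_0$. Coaccessibility of $\A$ and $\B$ provides a path $\alpha$ from $v$ to some $v_F \in F_\A$ and a path $\beta$ from $w$ to some $w_F \in F_\B$ (note that coaccessibility already forces $F_\A, F_\B \neq \emptyset$, so that $F_{\A \otimes \B} = F_\A \times F_\B \neq \emptyset$). Writing $\alpha = x_{1\sharp} \cdots x_{k\sharp}$ and $\beta = y_{1\sharp} \cdots y_{l\sharp}$ with the $x_i, y_j$ of degree $1$, I form, exactly as in the proof of Proposition~\ref{TLtensor}, the path
\[\omega = (x_1, w)_\sharp \cdots (x_k, w)_\sharp \cdot (v_F, y_1)_\sharp \cdots (v_F, y_l)_\sharp,\]
which first moves in the $\A$-direction from $(v,w)$ to $(v_F, w)$ and then in the $\B$-direction from $(v_F, w)$ to $(v_F, w_F)$; the concatenations are legitimate because $d_1^0x_{j+1} = d_1^1x_j$ in $\A$ and $d_1^0y_{j+1} = d_1^1y_j$ in $\B$. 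Since $(v_F, w_F) \in F_\A \times F_\B = F_{\A \otimes \B}$, the vertex $(v,w)$ is coreachable, so $\A \otimes \B$ is coaccessible. For the coproduct the argument is even shorter: every vertex of $\A + \B$ is either $(v, I_\B)$ or $(I_\A, w)$. In the first case I take a path in $\A$ from $v$ to a final state $v_F \in F_\A$ and push it forward along the morphism $j_\A$ (given by $j_\A(x) = (x, I_\B)$, as in Proposition~\ref{HLcoprod}) to obtain a path ending at $(v_F, I_\B) \in F_\A \times \{I_\B\} \subseteq F_{\A + \B}$; the case $(I_\A, w)$ is symmetric via $j_\B$. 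Hence $\A + \B$ is coaccessible.

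The reverse direction is the main obstacle, because the tensor product is not a categorical product and so there is no morphism of precubical sets $P_\A \otimes P_\B \to P_\A$ to project along. Assuming $\A \otimes \B$ coaccessible, I would fix a vertex $v \in (P_\A)_0$ and apply coaccessibility to $(v, I_\B)$, obtaining a path $\omega = z_{1\sharp} \cdots z_{m\sharp}$ to some $(v_F, w_F) \in F_\A \times F_\B$. Each edge $z_i$ is either an \emph{$\A$-edge} $(x_i, w_i) \in (P_\A)_1 \times (P_\B)_0$ or a \emph{$\B$-edge} $(v_i, y_i) \in (P_\A)_0 \times (P_\B)_1$, and the key observation is that along a $\B$-edge the $\A$-coordinate of the two endpoints agrees, since $d_1^k(v_i, y_i) = (v_i, d_1^k y_i)$. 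I would therefore extract the subsequence $x_{i_1}, \dots, x_{i_r}$ of $\A$-edges and verify that their $\A$-components concatenate into a genuine path in $\A$: the required equalities $d_1^1x_{i_j} = d_1^0x_{i_{j+1}}$ hold precisely because only $\B$-edges occur between consecutive $\A$-edges, so the $\A$-coordinate is unchanged there. This path runs from $v$ to $v_F \in F_\A$ (the degenerate case of no $\A$-edges simply gives $v = v_F$), so $v$ is coreachable; interchanging the roles of $\A$ and $\B$ shows that $\B$ is coaccessible as well.
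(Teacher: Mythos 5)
Your proof is correct and complete; the paper explicitly omits the proof of Proposition \ref{coacc} as easy, and your argument (building an interleaved path in the tensor product and the pushed-forward path in the coproduct for the forward directions, and extracting the $\A$-edges from a path starting at $(v,I_\B)$ for the converse) is exactly the intended elementary one. Nothing is missing.
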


\begin{rem}
	Unfortunately, coaccessibility of \(\A+\B\) does not in general imply coaccessibility of \(\A\) and \(\B\). Indeed, consider two one-vertex HDAs \(\A\) and \(\B\), and suppose that \(F_\A = \{I_\A\}\) and \(F_\B = \emptyset\). Then \(\A\) and \(\A+\B\) are coaccessible but \(\B\) is not.	
\end{rem}

\subsection{Weak implementation and weak equivalence} \label{wedef}
We say that an HDA \(\A\) \emph{weakly implements} an HDA \(\B\) and write \({\A \sqsubseteq \B}\) if the following three conditions are satisfied: 
\begin{enumerate}
	\item If \(\B\) is accessible, then so is \(\A\). If \(\B\) is coaccessible, then so is \(\A\). 
	\item \((\Sigma_\A,D_\A) = (\Sigma_\B, D_\B)\), \(\pi(\A) \subseteq \pi(\B)\), and \(TL(\A) \subseteq TL(\B)\).
	\item \(HL(\A) \subseteq HL(\B)\).
\end{enumerate}
It is clear that weak implementation is a preorder on the class of HDAs. We say that two HDAs \(\A\) and \(\B\) are \emph{weakly equivalent} and write \(\A \simeq \B\) if \(\A \sqsubseteq \B\) and \(\B \sqsubseteq \A\). 

\begin{prop} \label{simulation}
	Let \(\A\) and \(\B\) be two HDAs over the same concurrent alphabet. If there exists a cubical dimap of HDAs \((f,id)\colon \A \to \B\) such that for all \(x \in (P_\A)_0\), \(x\) is reachable if \(f_0(x)\) is reachable and \(x\) is coreachable if \(f_0(x)\) is coreachable, then \(\A \sqsubseteq \B\). 
\end{prop}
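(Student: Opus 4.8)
The plan is to verify the three defining conditions of weak implementation in turn, invoking the compatibility results already established earlier in the paper for conditions (2) and (3), and using the two reachability hypotheses directly for condition (1). Recall that a cubical dimap induces a vertex map $f_0 \colon (P_\A)_0 \to (P_\B)_0$, so the hypotheses make sense.

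First I would dispose of condition (1). Suppose $\B$ is accessible. For an arbitrary vertex $x \in (P_\A)_0$, its image $f_0(x)$ is a vertex of $\B$ and is therefore reachable. By hypothesis $x$ is reachable whenever $f_0(x)$ is, so $x$ is reachable; as $x$ was arbitrary, $\A$ is accessible. The coaccessible case is entirely analogous: if $\B$ is coaccessible, then $f_0(x)$ is coreachable for every $x \in (P_\A)_0$, hence so is $x$ by hypothesis, and thus $\A$ is coaccessible. This establishes condition (1).

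Next, condition (2). The equality $(\Sigma_\A, D_\A) = (\Sigma_\B, D_\B)$ holds by assumption. Since the alphabet component of the cubical dimap is $\sigma = id$, the inclusion $\pi(\A) \subseteq \pi(\B)$ is precisely the conclusion of Proposition \ref{piinc}, while $TL(\A) \subseteq TL(\B)$ is the special case $\sigma = id$ of Proposition \ref{TLinc}. For condition (3), the inclusion $HL(\A) \subseteq HL(\B)$ is the special case $\sigma = id$ of Proposition \ref{HLinc}. Having verified all three conditions, I conclude $\A \sqsubseteq \B$.

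As for the main obstacle, there is essentially none of substance, since the groundwork has already been laid: conditions (2) and (3) fall out immediately once one specializes the inclusion statements of Propositions \ref{piinc}, \ref{TLinc}, and \ref{HLinc} to the identity alphabet morphism. The only point requiring any attention is recognizing that the two reachability hypotheses on $f_0$ are tailored exactly to yield condition (1). In short, the proposition is a bookkeeping assembly of results proved earlier, and the proof should be correspondingly short.
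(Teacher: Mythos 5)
Your proof is correct and follows the same route as the paper, which simply cites Propositions \ref{TLinc}, \ref{piinc}, and \ref{HLinc} for conditions (2) and (3) and leaves the (co)accessibility check implicit. Your explicit verification of condition (1) from the hypotheses on $f_0$ is exactly the intended argument.
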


\begin{proof}
	This follows from Propositions \ref{TLinc}, \ref{piinc}, and \ref{HLinc}.
\end{proof}

\begin{theor}
	Let \(\A\), \(\A'\), \(\B\), and \(\B'\) be HDAs such that \(\A \sqsubseteq \A'\) and \(\B \sqsubseteq \B'\). Then \(\A \otimes \B \sqsubseteq \A'\otimes \B'\). If \(\A\) and \(\B\) are coaccessible, then also \(\A + \B \sqsubseteq \A' + \B'\). 
\end{theor}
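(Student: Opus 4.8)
The plan is to verify directly the three defining conditions of weak implementation in each of the two cases, exploiting the fact that each of $\pi$, $TL$, and $HL$ has been computed for tensor products and coproducts in Propositions \ref{pitensor}, \ref{TLtensor}, \ref{HLtensor}, \ref{picoprod}, \ref{TLcoprod}, and \ref{HLcoprod}. Since $\A \sqsubseteq \A'$ and $\B \sqsubseteq \B'$ already guarantee $(\Sigma_\A, D_\A) = (\Sigma_{\A'}, D_{\A'})$ and $(\Sigma_\B, D_\B) = (\Sigma_{\B'}, D_{\B'})$, the concurrent alphabets of the two composite HDAs agree in either case, so throughout one may regard $\pi$, $TL$, and $HL$ of the factors as living in a common ambient trace monoid, respectively exterior algebra. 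With that in place, conditions (2) and (3) reduce to monotonicity of the composition formulas, and only condition (1) requires a little care.

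For the tensor product I would first handle condition (1) using the equivalences of Propositions \ref{acc} and \ref{coacc}. If $\A' \otimes \B'$ is accessible, then $\A'$ and $\B'$ are accessible by Proposition \ref{acc}; condition (1) for $\A \sqsubseteq \A'$ and $\B \sqsubseteq \B'$ then forces $\A$ and $\B$ to be accessible, whence $\A \otimes \B$ is accessible, again by Proposition \ref{acc}. The coaccessibility implication is identical, using that Proposition \ref{coacc} yields coaccessibility of $\A'$ and $\B'$ from that of $\A' \otimes \B'$ and, in the other direction, coaccessibility of $\A \otimes \B$ from that of $\A$ and $\B$. Conditions (2) and (3) are then immediate: Propositions \ref{pitensor} and \ref{TLtensor} give $\pi(\A \otimes \B) = \pi(\A) \cdot \pi(\B)$ and $TL(\A \otimes \B) = TL(\A) \cdot TL(\B)$, and the factor inclusions $\pi(\A) \subseteq \pi(\A')$, $\pi(\B) \subseteq \pi(\B')$, $TL(\A) \subseteq TL(\A')$, $TL(\B) \subseteq TL(\B')$ yield the required containments for the products; likewise Proposition \ref{HLtensor} gives $HL(\A \otimes \B) = HL(\A) \wedge HL(\B) \subseteq HL(\A') \wedge HL(\B') = HL(\A' \otimes \B')$.

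For the coproduct, conditions (2) and (3) are wholly analogous: Proposition \ref{picoprod} identifies $\pi(\A + \B)$ with the image of $\pi(\A) \ast \pi(\B)$ inside $M(\Sigma_{\A+\B}, D_{\A+\B})$, so the factor inclusions give $\pi(\A + \B) \subseteq \pi(\A' + \B')$; feeding this together with $TL(\A) \subseteq TL(\A')$ and $TL(\B) \subseteq TL(\B')$ into the formula $TL(\A + \B) = \pi(\A + \B) \cdot TL(\A) \cup \pi(\A + \B) \cdot TL(\B)$ of Proposition \ref{TLcoprod} gives $TL(\A + \B) \subseteq TL(\A' + \B')$, and Proposition \ref{HLcoprod} handles $HL$ exactly as in the tensor case. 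The accessibility half of condition (1) again follows from the equivalence in Proposition \ref{acc}.

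The one genuinely delicate point, and the reason the coproduct statement carries the hypothesis that $\A$ and $\B$ be coaccessible, is the coaccessibility half of condition (1). Here the symmetric reasoning used for the tensor product breaks down: as the Remark following Proposition \ref{coacc} shows, coaccessibility of $\A' + \B'$ does not imply coaccessibility of $\A'$ and $\B'$, so one cannot descend from $\A' + \B'$ to the factors and climb back up to $\A + \B$. Instead the hypothesis is used directly, since $\A$ and $\B$ coaccessible makes $\A + \B$ coaccessible \emph{unconditionally} by Proposition \ref{coacc}; hence the implication ``if $\A' + \B'$ is coaccessible then $\A + \B$ is coaccessible'' holds simply because its conclusion is always true. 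This is the step I would isolate with the most care; the remaining verifications are routine monotonicity applied to the composition formulas.
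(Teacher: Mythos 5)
Your proposal is correct and takes essentially the same approach as the paper, whose proof is simply a citation of Propositions \ref{TLtensor}, \ref{pitensor}, \ref{picoprod}, \ref{TLcoprod}, \ref{HLtensor}, \ref{HLcoprod}, \ref{acc}, and \ref{coacc}; you have filled in the routine monotonicity arguments those citations leave implicit. Your isolation of the coaccessibility half of condition (1) for the coproduct --- noting that the hypothesis on \(\A\) and \(\B\) makes \(\A+\B\) coaccessible unconditionally, so the implication holds because its conclusion is always true --- is exactly the point where the symmetric argument breaks down and is handled correctly.
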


\begin{proof}
	This follows from Propositions \ref{TLtensor}, \ref{pitensor}, \ref{picoprod}, \ref{TLcoprod}, \ref{HLtensor}, \ref{HLcoprod}, \ref{acc}, and \ref{coacc}.  
\end{proof}

	\begin{cor}	
		Let \(\A\), \(\A'\), \(\B\), and \(\B'\) be HDAs such that \(\A \simeq \A'\) and \({\B \simeq \B'}\). Then \(\A \otimes \B \simeq \A'\otimes \B'\). If \(\A\), \(\A'\), \(\B\), and \(\B'\) are coaccessible, then also \({A + \B \simeq \A' + \B'}\).
	\end{cor}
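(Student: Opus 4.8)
The plan is to unwind the definition of weak equivalence as the symmetric closure of weak implementation and then apply the preceding theorem twice, once in each direction. By definition, \(\A \simeq \A'\) means \(\A \sqsubseteq \A'\) and \(\A' \sqsubseteq \A\), and likewise \(\B \simeq \B'\) means \(\B \sqsubseteq \B'\) and \(\B' \sqsubseteq \B\). Thus the data we are handed are really four implementations, and the corollary amounts to producing the two implementations that constitute each of the two claimed equivalences.

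For the tensor product, I would first feed the implementations \(\A \sqsubseteq \A'\) and \(\B \sqsubseteq \B'\) into the theorem to obtain \(\A \otimes \B \sqsubseteq \A' \otimes \B'\). Applying the theorem again to the reverse implementations \(\A' \sqsubseteq \A\) and \(\B' \sqsubseteq \B\) yields \(\A' \otimes \B' \sqsubseteq \A \otimes \B\). Since \(\simeq\) is precisely the conjunction of the two implementations, these together give \(\A \otimes \B \simeq \A' \otimes \B'\). No coaccessibility hypothesis is needed here, exactly as in the tensor part of the theorem.

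For the coproduct, the same two-step argument applies, but now each application of the theorem carries a coaccessibility requirement. The first application, yielding \(\A + \B \sqsubseteq \A' + \B'\), needs \(\A\) and \(\B\) to be coaccessible; the second application, yielding \(\A' + \B' \sqsubseteq \A + \B\), needs \(\A'\) and \(\B'\) to be coaccessible. This is exactly why the statement assumes all four HDAs to be coaccessible, rather than only the two on the left of a single implementation as in the theorem. Combining the two implementations gives \(\A + \B \simeq \A' + \B'\).

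There is no genuine obstacle in this argument; it is a direct consequence of the theorem together with the definition of \(\simeq\). The only point deserving care is the bookkeeping of the coaccessibility hypotheses in the coproduct case: because weak equivalence is a two-sided relation, one must check that the coaccessibility needed for \emph{both} directions is available, and this is precisely the content of the four-fold coaccessibility assumption.
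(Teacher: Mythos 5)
Your proof is correct and is exactly the intended argument: the paper leaves the corollary without proof because it follows immediately from the preceding theorem applied in both directions, together with the definition of \(\simeq\) as the conjunction of the two weak implementations. Your bookkeeping of the coaccessibility hypotheses in the coproduct case is also right (one could even note that, by condition (1) of weak implementation, \(\A\) coaccessible and \(\A\simeq\A'\) already force \(\A'\) to be coaccessible, so the four-fold hypothesis is slightly redundant but harmless).
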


\begin{rems} \label{bisimrem} 
	(i) Weak equivalence has been designed to be a coarse congruence for the tensor product and (as far as possible) the coproduct such that the trace language and the homology language are invariants. In certain situations, it might be convenient to modify the definition of weak equivalence. For instance, if the compatibility with the coproduct is not considered essential, the requirement on the fundamental monoid may be dropped. Another possible modification concerns accessibility. According to our definition, an HDA with unreachable states cannot be weakly equivalent to its accessible part. This is adequate if unreachable states are interpreted as representing problems such as dead code (see, \textit{e.g.}, \cite[p. 22]{FGHMR}). However, one might as well see unreachable states as just unreachable from the chosen initial state and prefer to define an equivalence where an HDA is always equivalent to its accessible part. To do so, one could define two HDAs to be equivalent if their accessible parts are weakly equivalent in the sense of this paper. For accessible HDAs, this concept of equivalence would coincide with the concept of weak equivalence proposed here.
	
	(ii) By Proposition \ref{simulation}, two HDAs over the same concurrent alphabet are weakly equivalent if there exist well-behaved cubical dimaps between them in both directions. Just as morphisms of HDAs, cubical dimaps may be seen as simulations, and so, from this point of view, weak equivalence is coarser than a kind of simulation equivalence. It should be pointed out in this context that HDAs that are history-preserving bisimilar in the sense of \cite{vanGlabbeek} need not be weakly equivalent. Consider, for example, an HDA \(\A\) with only one vertex and one edge, labeled \(a\). Then \(\A\) is history-preserving bisimilar to its unfolding \(\B\), which consists of an infinite sequence of edges, all labeled \(a\). On the other hand, since in degree 1, \(HL(\A)\) is generated by \(a\) but \(HL(\B) = 0\), \(\A\) and \(\B\) are not weakly equivalent. Consequently, these two HDAs are also not simulation equivalent in the above sense, and indeed, there is no cubical dimap from \(\A\) to \(\B\). If one wishes to define a concept of bisimilarity that is stronger than this notion of simulation equivalence, one possibility is to consider \(\mathsf{P}\)-bisimilarity in the sense of Joyal, Nielsen, and Winskel \cite{JoyalNielsenWinskel} where \(\mathsf{P}\) is the wide subcategory of the category of HDAs and cubical dimaps of the form \((f,id)\) whose morphisms are inclusions of sub-HDAs. It should be noted, though, that although it is not isomorphism, this concept of bisimilarity is very strong. 
\end{rems}

\subsection{Topological abstraction}

In \cite{topabs}, a preorder for HDAs has been introduced, called \emph{topological abstraction}. Roughly speaking, an HDA \(\A\) is a topological abstraction of an HDA \(\B\) if there exists a cubical dimap \(\A \to \B\) that is a homotopy equivalence inducing an isomorphism of trace categories and an isomorphism of homology graphs. The \emph{homology graph} of an HDA \(\A\) is the directed graph where the vertices are the homology classes of \(\A\) and there is an edge from a homology class \(\upsilon\) to a homology class \(\nu\) if there exist precubical subsets \(U, V \subseteq P_\A\) such that  \(\upsilon \in \im\, H_*(U \hookrightarrow P_\A)\), \(\nu \in \im\, H_*(V \hookrightarrow P_\A)\), and for all vertices \(u \in U_0\) and \(v \in V_0\), there exists a path from \(u\) to \(v\) \cite{hgraph}. The \emph{trace category} of an HDA \(\A\) is the category \(TC(\A)\) whose objects are the initial state, the final states, the minimal vertices (\textit{i.e.}, vertices without incoming edges), and the maximal vertices (\textit{i.e.}, vertices without outgoing edges) and whose morphisms are the dihomotopy classes of paths between these states \cite{weakmor}. A cubical dimap \((f,\sigma) \colon \A \to \B\) that preserves minimal and maximal vertices induces a functor \(f_*\colon TC(\A) \to TC(\B)\), which sends an object \(x\) to \(f_0(x)\) and a dihomotopy class \([\omega]\) to \([f^{\mathbb I}(\omega)]\). By the following proposition, topological abstraction is often stronger than weak equivalence:

\begin{prop}
	Let \(\A\) and \(\B\) be two accessible and coaccessible HDAs over the same concurrent alphabet, and let \((f, id) \colon \A \to \B\) be a cubical dimap that is a homotopy equivalence. Suppose that \(f\) preserves minimal and maximal vertices and that the functor \(f_*\colon TC(\A) \to TC(\B)\) is an isomorphism. Then \(\A \simeq \B\).
\end{prop}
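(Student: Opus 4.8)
The plan is to establish both weak implementations \(\A \sqsubseteq \B\) and \(\B \sqsubseteq \A\). Since \(\A\) and \(\B\) are both accessible and coaccessible, condition (1) in the definition of weak implementation holds vacuously in either direction, and the equality of concurrent alphabets required by condition (2) is part of the hypothesis. Because \(f\) is a homotopy equivalence and \(\sigma = id\), Proposition \ref{HLinc} gives \(HL(\A) = HL(\B)\) outright, so condition (3) holds in both directions. Applying Propositions \ref{TLinc} and \ref{piinc} to \((f,id)\) yields \(TL(\A) \subseteq TL(\B)\) and \(\pi(\A) \subseteq \pi(\B)\), which completes \(\A \sqsubseteq \B\). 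The heart of the proof is therefore to derive the reverse inclusions \(\pi(\B) \subseteq \pi(\A)\) and \(TL(\B) \subseteq TL(\A)\) from the assumption that \(f_*\colon TC(\A) \to TC(\B)\) is an isomorphism.

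The key observation is that, since \(\sigma = id\), every path \(\omega\) in \(\A\) satisfies \(\overline{\lambda}_\B(f^{\mathbb I}(\omega)) = \overline{\lambda}_\A(\omega)\); as dihomotopic paths have congruent labels, the assignment \([\omega] \mapsto [\overline{\lambda}_\A(\omega)]\) is a well-defined map on the morphisms of \(TC(\A)\), compatible with \(f_*\) and the analogous map for \(\B\). Moreover \(f_*(I_\A) = f_0(I_\A) = I_\B\), so, being an isomorphism of categories, \(f_*\) restricts to a bijection between the hom-sets of \(TC(\A)\) and \(TC(\B)\); in particular it is bijective on the endomorphism monoids of \(I_\A\) and \(I_\B\).

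For the fundamental monoid, observe that \(\pi(\A)\) is precisely the image of \(\mathrm{End}_{TC(\A)}(I_\A)\) under the labeling map above, and similarly for \(\pi(\B)\). Given a loop \(\nu\) at \(I_\B\), its class lies in \(\mathrm{End}_{TC(\B)}(I_\B)\); by surjectivity of \(f_*\) on endomorphisms there is a loop \(\omega\) at \(I_\A\) with \([f^{\mathbb I}(\omega)] = [\nu]\), whence \([\overline{\lambda}_\B(\nu)] = [\overline{\lambda}_\A(\omega)] \in \pi(\A)\). Thus \(\pi(\B) \subseteq \pi(\A)\), and combined with the inclusion already obtained, \(\pi(\A) = \pi(\B)\).

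For the trace language, take \(v \in TL(\B)\) together with a path \(\nu\) such that \(\nu(0) = I_\B\) and \(v \preceq [\overline{\lambda}_\B(\nu)]\). The main obstacle is that \(\nu\) need not terminate at an object of \(TC(\B)\); this is where coaccessibility of \(\B\) is used. Extending \(\nu\) by a path from \(\nu(\length_\nu)\) to a final state \(F\) gives a path \(\nu'\) from \(I_\B\) to \(F\) with \(v \preceq [\overline{\lambda}_\B(\nu)] \preceq [\overline{\lambda}_\B(\nu')]\), and \([\nu']\) is now a morphism of \(TC(\B)\). Letting \(G\) be the object of \(TC(\A)\) with \(f_*(G) = F\) under the object bijection, surjectivity of \(f_*\) on \(\mathrm{Hom}_{TC(\A)}(I_\A, G)\) furnishes a path \(\omega'\) in \(\A\) with \(\omega'(0) = I_\A\) and \([f^{\mathbb I}(\omega')] = [\nu']\), so that \([\overline{\lambda}_\A(\omega')] = [\overline{\lambda}_\B(\nu')]\). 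Then \(v \preceq [\overline{\lambda}_\A(\omega')]\), giving \(v \in TL(\A)\); the type of the object \(G\) is immaterial, since only \(\omega'(0) = I_\A\) is needed. This establishes \(TL(\B) \subseteq TL(\A)\), completing \(\B \sqsubseteq \A\) and hence \(\A \simeq \B\).
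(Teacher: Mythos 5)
Your proposal is correct and follows essentially the same route as the paper: it reduces everything to the reverse inclusions \(\pi(\B) \subseteq \pi(\A)\) and \(TL(\B) \subseteq TL(\A)\) via Propositions \ref{TLinc}, \ref{piinc}, and \ref{HLinc}, lifts loops at \(I_\B\) through the fullness of \(f_*\), and uses coaccessibility of \(\B\) to extend a witnessing path to a final state so that its dihomotopy class becomes a morphism of \(TC(\B)\) that can be lifted. The only cosmetic difference is that you phrase fullness as bijectivity on hom-sets of the isomorphism \(f_*\), which is equivalent to what the paper uses.
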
  

\begin{proof}
	By Propositions \ref{TLinc}, \ref{piinc}, and \ref{HLinc}, we only have to show that \(\pi(\B) \subseteq \pi(\A)\) and \(TL(\B) \subseteq TL(\A)\). Consider first a loop \(\beta \in P_\B^{\mathbb I}\) such that \(\beta(0) = I_\B\). Since \(f_0(I_\A) = I_\B\) and \(f_*\) is full, there exists a loop \(\alpha \in P_\A^{\mathbb I}\) such that \(\alpha(0) = I_\A\) and \(f_*([\alpha]) = [\beta]\). Hence \(f^{\mathbb I}(\alpha) \sim \beta\) and therefore \(\overline{\lambda}_\A (\alpha) = \overline{\lambda}_\B(f^{\mathbb I}(\alpha)) \equiv \overline{\lambda}_\B (\beta)\). Thus \([\overline{\lambda}_\B (\beta)] \in \pi(\A)\). 
	
	Consider now an element \(v\in TL(\B)\). Let \(\beta \in P_\B^{\mathbb I}\) be a path such that \(\beta (0) = I_\B\) and \(v \preceq [\overline{\lambda}_\B (\beta)]\). Since \(\B\) is coaccessible, we may suppose that \(\beta\) ends in a final state \(b\). Since \(f_*\) is full and surjective on objects, there exists a path \(\alpha \in P_\A^{\mathbb I}\) such that \(\alpha (0) = I_\A\) and \(f_*([\alpha]) = [\beta]\). As before, this implies that \(\overline{\lambda}_\A (\alpha) = \overline{\lambda}_\B(f^{\mathbb I}(\alpha)) \equiv \overline{\lambda}_\B (\beta)\). Hence \(v \preceq [\overline{\lambda}_\A (\alpha)]\) and therefore \(v \in TL(\A)\).
\end{proof}  

\begin{rem}
	An important difference between topological abstraction and weak equivalence is that weakly equivalent HDAs need not be homotopy equivalent and may have certain topological differences. Indeed, weak equivalence ignores zero-labeled homology classes (at least of dimension \(\geq 2\)). Examples of such classes include  torsion classes, differences of classes with the same label, classes that are noise (\textit{e.g.}, classes given by differences of cubes with the same boundary), and classes given by virtual boundaries (\textit{i.e.}, cycles that become boundaries in larger HDAs). In contexts where zero-labeled homology classes are essential, weak equivalence is too weak a notion of equivalence.   
\end{rem}

\subsection{Cube collapses} \label{collapse}
We shall now provide conditions under which collapsing a cube in an HDA yields a weakly equivalent HDA. We will consider elementary and vertex-star collapses. The definition of these concepts is based on the following construction: the \emph{star} of an element \(x\) of a precubical set \(P\) is the graded set \(\star(x)\) defined by
\[\star(x)_n = \{y \in P_n\,|\, x \in y_{\sharp}(\lbrbrak 0,1\rbrbrak ^{\otimes n})\}.\]
Thus \(\star(x)\) consists of \(x\) and all elements having \(x\) in their iterated boundary. The graded set \(P\setminus \star(x)\) is a precubical subset of \(P\). We say that a face \(d^k_ix\) of a regular cube \(x\) of \(P\) is \emph{free} if \(\star(d^k_ix) = \{x,d^k_ix\}\). In this case, the inclusion \(|P \setminus \star(d^k_ix)| \hookrightarrow |P|\) is a homotopy equivalence, and we say that \(P \setminus \star(d^k_ix)\) has been obtained from \(P\) through an \emph{elementary collapse} 
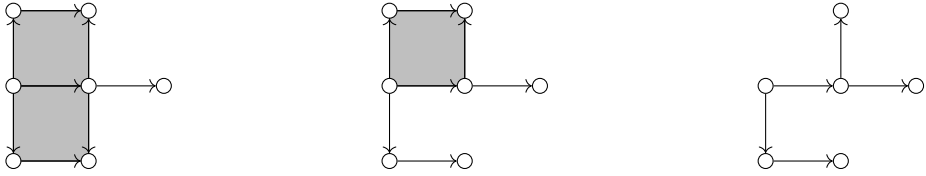
\begin{figure} 
	\center
	\begin{tikzpicture}[initial text={},on grid] 	
	\path[draw, fill=lightgray] (0,0)--(1,0)--(1,-1)--(0,-1)--cycle;     
	
	\path[draw, fill=lightgray] 
	
	(0,0)--(1,0)--(1,1)--(0,1)--cycle;

	\path[draw, fill=lightgray] 
	
	(5,0)--(6,0)--(6,1)--(5,1)--cycle;

	\node[state,minimum size=0pt,inner sep =2pt,fill=white] (p_0) at (0,0)  {}; 
	
	\node[state,minimum size=0pt,inner sep =2pt,fill=white] (p_6) [above=of p_0,xshift=0cm] {};
	
	\node[state,minimum size=0pt,inner sep =2pt,fill=white] (p_2) [right=of p_0,xshift=0cm] {};
	
	\node[state,minimum size=0pt,inner sep =2pt,fill=white] (p_1) [above=of p_2,xshift=0cm] {}; 
	
	\node[state,minimum size=0pt,inner sep =2pt,fill=white] (p_4) [right=of p_2,xshift=0cm] {};
	
	\node[state,minimum size=0pt,inner sep =2pt,fill=white] [below=of p_0, yshift=0cm] (p_3)   {};

	\node[state,minimum size=0pt,inner sep =2pt,fill=white] (p_5) [right=of p_3,xshift=0cm] {};

	\node[state,minimum size=0pt,inner sep =2pt,fill=white] (q_0) at (5,0)  {}; 
	
	\node[state,minimum size=0pt,inner sep =2pt,fill=white] (q_6) [above=of q_0,xshift=0cm] {};
	
	\node[state,minimum size=0pt,inner sep =2pt,fill=white] (q_2) [right=of q_0,xshift=0cm] {};
	
	\node[state,minimum size=0pt,inner sep =2pt,fill=white] (q_1) [above=of q_2,xshift=0cm] {};
	
	\node[state,minimum size=0pt,inner sep =2pt,fill=white] (q_4) [right=of q_2,xshift=0cm] {};
	
	\node[state,minimum size=0pt,inner sep =2pt,fill=white] [below=of q_0, yshift=0cm] (q_3)   {};

	\node[state,minimum size=0pt,inner sep =2pt,fill=white] (q_5) [right=of q_3,xshift=0cm] {};

	\node[state,minimum size=0pt,inner sep =2pt,fill=white] (r_0) at (10,0)  {};

	\node[state,minimum size=0pt,inner sep =2pt,fill=white] (r_2) [right=of r_0,xshift=0cm] {};
	
	\node[state,minimum size=0pt,inner sep =2pt,fill=white] (r_1) [above=of r_2,xshift=0cm] {};
	
	\node[state,minimum size=0pt,inner sep =2pt,fill=white] (r_4) [right=of r_2,xshift=0cm] {};
	
	\node[state,minimum size=0pt,inner sep =2pt,fill=white] [below=of r_0, yshift=0cm] (r_3)   {};

	\node[state,minimum size=0pt,inner sep =2pt,fill=white] (r_5) [right=of r_3,xshift=0cm] {};

	\path[->] 
	(p_2) edge[above] node {} (p_1)
	(p_6) edge[above] node {} (p_1)
	(p_0) edge[above] node {} (p_2)
	(p_0) edge[above] node {} (p_6)
	(p_3) edge[below]  node {} (p_5)
	(p_0) edge[left]  node {} (p_3)
	(p_2) edge[right]  node {} (p_5)
	(p_2) edge[right]  node {} (p_4);

	\path[->] 
	(q_2) edge[above] node {} (q_1)
	(q_6) edge[above] node {} (q_1)
	(q_0) edge[above] node {} (q_2)
	(q_0) edge[above] node {} (q_6)
	(q_3) edge[below]  node {} (q_5)
	(q_0) edge[left]  node {} (q_3)
	(q_2) edge[left]  node {} (q_4)
	;

	\path[->] 
	(r_2) edge[above] node {} (r_1)
	(r_0) edge[above] node {} (r_2)
	(r_3) edge[below]  node {} (r_5)
	(r_0) edge[left]  node {} (r_3)
	(r_2) edge[left]  node {} (r_4)
	;
	
	\end{tikzpicture}
	\caption{An elementary collapse followed by a vertex-star collapse}\label{ecollapse}
\end{figure}
(see Figure \ref{ecollapse} for a picture). If \(x\in P\) is a regular cube of degree \(n \geq 2\) and \(k_1, \dots, k_n \in\{0,1\}\) are indexes such that at least one \(k_i = 0\), at least one $k_i = 1$, and \(\star(d^{k_n}_1\cdots d^{k_1}_1x) \subseteq x_{\sharp}(\lbrbrak 0,1 \rbrbrak ^{\otimes n})\), then the inclusion   \(|P \setminus \star(d^{k_n}_1\cdots d^{k_1}_1x)| \hookrightarrow |P|\) is a homotopy equivalence and we say that \({P \setminus \star(d^{k_n}_1\cdots d^{k_1}_1x)}\) has been obtained from \(P\) through a \emph{vertex-star collapse}
(see Figure \ref{ecollapse}).

In degrees \(\geq 3\), elementary collapses always yield weakly equivalent HDAs:

\begin{prop}
	Let \(\A\) be an HDA, and let \(x\) be a regular cube of degree \(n \geq 3\) with free face \(d^k_ix\). Consider the sub-HDA \(\B\subseteq \A\) defined by \(P_\B = P_\A \setminus \star(d^k_ix)\) and \(\Sigma_\B = \Sigma_\A\). Then \(\A \simeq \B\).
\end{prop}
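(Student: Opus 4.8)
The plan is to verify the two weak implementations \(\A \sqsubseteq \B\) and \(\B \sqsubseteq \A\) by checking, in both directions, the three defining conditions: preservation of accessibility and coaccessibility, equality of concurrent alphabet together with the inclusions for \(\pi\) and \(TL\), and the inclusion for \(HL\). The single observation that drives everything is a dimension count. Since \(d^k_ix\) is a free face, \(\star(d^k_ix) = \{x, d^k_ix\}\), so the collapse removes exactly the two cubes \(x\) (of degree \(n\)) and \(d^k_ix\) (of degree \(n-1\)). Because \(n \geq 3\), both removed cubes have degree \(\geq 2\), and therefore \(P_\B\) and \(P_\A\) coincide in degrees \(0\) and \(1\); that is, \((P_\B)_0 = (P_\A)_0\) and \((P_\B)_1 = (P_\A)_1\). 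This is precisely where the hypothesis \(n \geq 3\) is used.

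Next I would exploit the fact that paths, and hence the trace language, fundamental monoid, and (co)accessibility, depend only on the \(1\)-skeleton. Since a path is a morphism out of a precubical interval \(\lbrbrak 0,l\rbrbrak\), which has cubes only in degrees \(0\) and \(1\), and since \(\B\) and \(\A\) share their vertices and edges, we have \(P_\A^{\mathbb I} = P_\B^{\mathbb I}\). The labels agree as well: \((\Sigma_\B, D_\B) = (\Sigma_\A, D_\A)\) by construction, and \(\lambda_\B = \lambda_\A|_{(P_\B)_1} = \lambda_\A\) on edges, so \(\overline{\lambda}_\B = \overline{\lambda}_\A\) on the common set of paths. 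Moreover \(I_\B = I_\A\) and \(F_\B = F_\A \cap (P_\B)_0 = F_\A\). Consequently \(TL(\A) = TL(\B)\) and \(\pi(\A) = \pi(\B)\) hold verbatim from their definitions, and a state is reachable (resp. coreachable) in \(\A\) exactly when it is in \(\B\). This settles conditions (1) and (2) of weak implementation, as equalities, in both directions.

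For the homology language I would use the inclusion directly. The inclusion of precubical sets \(j\colon P_\B \hookrightarrow P_\A\) underlies a morphism of HDAs \((j, id)\colon \B \to \A\) (the alphabets agree and \(\lambda_\B\) is the restriction of \(\lambda_\A\)), so \((|j|, id)\) is a cubical dimap of HDAs with \(\sigma = id\). By the definition of an elementary collapse, the inclusion \(|j|\colon |P_\B| \hookrightarrow |P_\A|\) is a homotopy equivalence. Applying the homotopy-equivalence clause of Proposition \ref{HLinc} to \((|j|, id)\) then yields \(HL(\B) = HL(\A)\), which gives condition (3) in both directions.

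Combining the three paragraphs, all three conditions hold with equalities for both \(\A \sqsubseteq \B\) and \(\B \sqsubseteq \A\), whence \(\A \simeq \B\). I do not expect a genuine obstacle here: the content is concentrated in recognizing that the degree bound \(n \geq 3\) leaves the \(1\)-skeleton untouched, which trivializes the trace-language, fundamental-monoid, and (co)accessibility requirements and isolates the only substantive point — invariance of the homology language — to the homotopy equivalence supplied by the elementary collapse and handled by Proposition \ref{HLinc}.
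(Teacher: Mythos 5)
Your proof is correct and follows exactly the paper's (much terser) argument: the paper's proof simply cites Proposition \ref{HLinc} together with the observation that \(\A\) and \(\B\) agree in degrees \(\leq 1\), which is precisely the dimension count and the homotopy-equivalence application you spell out. No gaps; you have just made explicit what the paper leaves to the reader.
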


\begin{proof}
	This follows from Proposition \ref{HLinc} and the fact that \(\A\) and \(\B\) agree in degrees \(\leq 1\). 
\end{proof}

Elementary 2-cube collapses are more delicate. We first deal with the case where the free face is a back face:

\begin{theor} \label{TL2}
	Let \(\A\) be an HDA, and let \(x\) be a regular \(2\)-cube with free face \(d^1_ix\) \((i \in \{1,2\})\). Consider the sub-HDA \(\B \subseteq \A\) defined by \(P_\B = P_\A \setminus \star(d^1_ix)\) and \(\Sigma_\B = \Sigma_\A\). Suppose that there exists an edge \(y\) in \(\B\) such that \(d_1^{0}y = d_1^{0}d_i^1x\), and suppose that for every path \(\omega \in P_\B^ {\mathbb I}\) with \(\omega (\length_\omega) = d^{0}_1d^1_ix\) and \({\omega (0) \in \{I_\A, d^1_1d^ 1_1x, d_1^1y\}}\), there exists a path \(\nu \in P_\B^{\mathbb I}\) such that \(\nu (0) = \omega(0)\), \(\nu (\length_\nu) = d^0_1d^0_1x\), and \(\nu \cdot (d^0_{3-i}x)_\sharp \sim \omega\) in \(\A\). Then \(\A \simeq \B\). 
\end{theor}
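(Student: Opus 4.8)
The plan is to verify $\A \sqsubseteq \B$ and $\B \sqsubseteq \A$ after discarding the parts that are automatic. Since $\B$ is a sub-HDA of $\A$ with $(P_\B)_0 = (P_\A)_0$, $I_\B = I_\A$, $F_\B = F_\A$, and $\Sigma_\B = \Sigma_\A$, the concurrent alphabets coincide, every path of $\B$ is a path of $\A$, whence $TL(\B) \subseteq TL(\A)$ and $\pi(\B) \subseteq \pi(\A)$, and accessibility (resp.\ coaccessibility) of $\B$ immediately yields that of $\A$. Because $d^1_ix$ is a free face, the inclusion $|P_\B| \hookrightarrow |P_\A|$ is a homotopy equivalence; being the geometric realization of the inclusion morphism $\B \hookrightarrow \A$ with $\sigma = id$, it is a cubical dimap, so the homotopy-equivalence case of Proposition~\ref{HLinc} gives $HL(\A) = HL(\B)$. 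Thus it remains to prove $TL(\A) \subseteq TL(\B)$, $\pi(\A) \subseteq \pi(\B)$, and the implications that $\A$ accessible (resp.\ coaccessible) forces $\B$ accessible (resp.\ coaccessible).

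Everything left hinges on the following \emph{rerouting} statement: every path $\omega$ in $\A$ with $\omega(0) \in \{I_\A,\, d^1_1d^1_ix,\, d^1_1y\}$ is dihomotopic in $\A$ to a path that lies entirely in $\B$. I would prove this by induction on the number of occurrences of the removed edge $d^1_ix$ in $\omega$. If there are none, $\omega$ already lies in $\B$. Otherwise write $\omega = \omega_0 \cdot (d^1_ix)_\sharp \cdot \omega_1$, splitting at the first occurrence, so that $\omega_0$ lies in $\B$, ends at $d^0_1d^1_ix$, and has $\omega_0(0) = \omega(0)$ among the admissible base points. The hypothesis supplies a path $\nu$ in $\B$ from $\omega_0(0)$ to $d^0_1d^0_1x$ with $\nu \cdot (d^0_{3-i}x)_\sharp \sim \omega_0$ in $\A$. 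Concatenating with $(d^1_ix)_\sharp$ and then applying the elementary dihomotopy determined by $x$, namely
\[
\nu \cdot (d^0_{3-i}x)_\sharp \cdot (d^1_ix)_\sharp \;\sim\; \nu \cdot (d^0_ix)_\sharp \cdot (d^1_{3-i}x)_\sharp,
\]
gives $\omega_0 \cdot (d^1_ix)_\sharp \sim \nu \cdot (d^0_ix)_\sharp \cdot (d^1_{3-i}x)_\sharp$ in $\A$. The right-hand path lies in $\B$, since its edges $d^0_ix$ and $d^1_{3-i}x$ are faces of the regular cube $x$ distinct from $x$ and from $d^1_ix$, hence are not in $\star(d^1_ix) = \{x, d^1_ix\}$; moreover it ends at $d^1_1d^1_ix$. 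As $\omega_1$ starts at $d^1_1d^1_ix$ and contains one fewer occurrence of $d^1_ix$, the induction hypothesis reroutes it inside $\B$, and compatibility of dihomotopy with concatenation reroutes all of $\omega$.

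Granting this, the outstanding claims follow. Applying the rerouting statement to paths and loops based at $I_\A$ produces dihomotopic paths in $\B$ with the same endpoints; since dihomotopic paths have congruent labels and $\overline{\lambda}_\B$ is the restriction of $\overline{\lambda}_\A$, the label classes are preserved, yielding $TL(\A) \subseteq TL(\B)$, $\pi(\A) \subseteq \pi(\B)$, and ``$\A$ accessible $\Rightarrow$ $\B$ accessible''. For coaccessibility I would first reroute coreaching paths issuing from the admissible base points $d^1_1d^1_ix$ and $d^1_1y$, concluding that both vertices are coreachable in $\B$; as $y$ is an edge of $\B$ from $d^0_1y = d^0_1d^1_ix$ to $d^1_1y$, the vertex $d^0_1d^1_ix$ is then coreachable in $\B$. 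Finally, any vertex $v$ that is coreachable in $\A$ admits a coreaching path that either lies in $\B$ already or, truncated before its first use of $d^1_ix$, gives a path in $\B$ from $v$ to $d^0_1d^1_ix$; either way $v$ is coreachable in $\B$.

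The principal obstacle is the rerouting statement, and inside it the bookkeeping of admissible base points: after each detour across the square the path resumes at $d^1_1d^1_ix$, and the coaccessibility step additionally needs the exit edge $y$ and the base point $d^1_1y$, so that the theorem's hypothesis must cover exactly the three points $I_\A$, $d^1_1d^1_ix$, $d^1_1y$. It is equally essential that the free face is a \emph{back} face: this is what keeps the substitute edges $d^0_ix$ and $d^1_{3-i}x$ in $\B$, allowing the square dihomotopy to be performed without leaving $\B$.
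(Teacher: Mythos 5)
Your proof is correct and follows essentially the same route as the paper's: decompose a path at the occurrences of the removed edge $d^1_ix$, use the hypothesis to replace each segment ending at $d^0_1d^1_ix$ followed by $(d^1_ix)_\sharp$ with the detour $\nu\cdot(d^0_ix)_\sharp\cdot(d^1_{3-i}x)_\sharp$ across the square $x$, and handle coaccessibility through the auxiliary edge $y$. The only differences are organizational (you package the replacement as an induction on the number of occurrences of $d^1_ix$ and reroute the entire coreaching path from $d^1_1y$, where the paper writes out the full decomposition at once and splices shortest initial and terminal segments), and these do not change the substance of the argument.
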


\begin{proof}
	We adapt arguments given in the proofs of \cite[Lemma 4.4.3, Thm. 6.3.1]{topabs}. Since the inclusion \(|P_\B| \hookrightarrow |P_\A|\) is a homotopy equivalence, \(HL(\A) = HL(\B)\). By Propositions \ref{TLinc} and \ref{piinc}, we have \(TL(\B) \subseteq TL(\A)\) and \(\pi(\B) \subseteq \pi(\A)\). For the reverse inclusions, it suffices to show that every path \(\omega \in P_\A^{\mathbb I}\) with \({\omega (0)} = I_\A\) is dihomotopic to a path  \(\omega'\in P_\B^{\mathbb I}\). So consider \(\omega \in P_\A^{\mathbb I}\) with \({\omega (0)} = I_\A\). We may suppose that \(\omega \notin P_\B^{\mathbb I}\). Write \(\omega\) as a concatenation \[\omega = \omega_0\cdot (d^1_ix)_{\sharp}\cdot  \omega_1 \cdots \omega_{r-1}\cdot (d^1_ix)_{\sharp} \cdot \omega_{r}\]
	where each \(\omega_j\) is a path in \(P_\B\). By our assumptions, there exist paths \(\bar \omega_j \in P_\B^ {\mathbb I}\) \((0\leq j < r)\) such that \(\bar \omega_j(0) = \omega_j(0)\), \(\bar \omega_j(\length_{\bar \omega_j}) = d_1^0d_1^0x\), and \(\bar \omega_j\cdot (d^0_{3-i}x)_\sharp \sim \omega_j\). 
	Set
	\[\omega' = \bar \omega_0\cdot (d^0_{i}x)_{\sharp}\cdot (d^1_{3-i}x)_{\sharp}\cdot  \bar \omega_1 \cdots \bar \omega_{r-1}\cdot (d^0_{i}x)_{\sharp}\cdot (d^1_{3-i}x)_{\sharp} \cdot \omega_{r}.\]
	Since \(x\) is regular, \(\omega'\in P_\B^{\mathbb I}\). It is clear that \(\omega' \sim \omega\). 
	
	Since \(\A\) and \(\B\) have the same vertices and the same initial and final states and every path in \(\B\) is a path in \(\A\), it is clear that \(\A\) is (co)accessible  if \(\B\) is (co)accessible. Suppose that \(\A\) is accessible, and consider a vertex \(b \in (P_\B)_0\). Then there exists a path \(\omega \in P_\A^{\mathbb I}\) from \(I_\A\) to \(b\). As shown above, there exists a path  \(\omega'\in P_\B^{\mathbb I}\) such that \(\omega' \sim \omega\). It follows that \(\B\) is accessible. Suppose that \(\A\) is coaccessible. By our hypothesis, there exists an edge \(y \not=d^1_ix\) starting in \(d^0_1d^1_ix\) such that for every path \(\omega \in P_\B^ {\mathbb I}\) with \(\omega (\length_\omega) = d^{0}_1d^1_ix\) and \({\omega (0) = d_1^1y}\), there exists a path \(\nu \in P_\B^{\mathbb I}\) such that \(\nu (0) = d_1^1y\), \(\nu (\length_\nu) = d^0_1d^0_1x\), and \(\nu \cdot (d^0_{3-i}x)_\sharp \sim \omega\). We show first that \(d^1_1y\) is coreachable in \(\B\). Since \(\A\) is coaccessible, there exists a path \(\gamma \in P_\A^{\mathbb I}\) from \(d^1_1y\) to a vertex \(c \in F_\A = F_\B\). If \(\gamma \in P_\B^{\mathbb I}\), we have  nothing to show. If \(\gamma \notin P_\B^{\mathbb I}\), it begins with a path from \(d^1_1y\) to \(d^0_1d^1_ix\). Let \(\omega\) be a shortest such path. Then \(\omega\in P_\B^{\mathbb I}\), and so we may choose a path \(\nu \in P_\B^{\mathbb I}\) such that \(\nu (0) = d_1^1y\), \(\nu (\length_\nu) = d^0_1d^0_1x\), and \(\nu\cdot (d^0_{3-i}x)_\sharp \sim  \omega\). Since \(\gamma \notin P_\B^{\mathbb I}\), \(\gamma\) terminates with a path from \(d^1_1d^1_1x\) to \(c\). Let \(\beta \) be a shortest such path. Then \(\beta\in P_\B^{\mathbb I}\). The concatenation \(\nu \cdot (d^0_ix)_{\sharp}\cdot (d^1_{3-i}x)_{\sharp}\cdot \beta\) is a path in \(\B\) from \(d^1_1y\) to \(c\). Hence \(d^1_1y\) is coreachable in \(\B\). Consider now an arbitrary vertex \(b \in (P_\B)_0\). Then there exists a path \(\alpha \in P_\A^{\mathbb I}\) from \(b\) to a final state. If \(\alpha \in P_\B^{\mathbb I}\), \(b\) is coreachable in \(\B\). If \(\alpha \notin P_\B^{\mathbb I}\), it begins with a path in \(\B\) from \(b\) to \(d^0_1d^1_ix = d^0_1y\). Hence there exists a path in \(\B\) from \(b\) to \(d^1_1y\). Since \(d^1_1y\) is coreachable in \(\B\), it follows that \(b\) is coreachable in \(\B\). Hence \(\B\) is coaccessible if \(\A\) is coaccessible.	
\end{proof}

For elementary collapses of 2-cubes with a free front face, we state the following fact, which is proved by adapting the arguments given in the proof of Theorem \ref{TL2}:

\begin{theor} \label{TL2b}
	Let \(\A\) be an HDA, and let \(x\) be a regular \(2\)-cube with free face \(d^0_ix\) \((i \in \{1,2\})\). Suppose that \(d^1_1d^0_ix \notin \{I_\A\}\cup F_\A\) and that there exists an edge \(y \not= d^0_ix\) such that \(d_1^{1}y = d_1^{1}d_i^0x\). Suppose also that there is no edge \(z \not= d^1_{3-i}x\) such that \(d_1^0z = d_1^1d^0_ix\). Consider the sub-HDA \(\B\subseteq \A\) defined by \(P_\B = P_\A \setminus \star(d^0_ix)\) and \(\Sigma_\B = \Sigma_\A\). Then \(\A \simeq \B\). 
\end{theor}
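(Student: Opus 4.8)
The plan is to verify directly the three defining conditions of weak equivalence for \(\A\) and \(\B\), reducing as much as possible to the propositions already available. Since \(d^0_ix\) is a free face, the inclusion \(|P_\B| \hookrightarrow |P_\A|\) is a homotopy equivalence, so Proposition \ref{HLinc} gives \(HL(\A) = HL(\B)\) outright. As this inclusion is a cubical dimap of HDAs with \(\sigma = \mathrm{id}\), Propositions \ref{TLinc} and \ref{piinc} yield \(TL(\B) \subseteq TL(\A)\) and \(\pi(\B) \subseteq \pi(\A)\). Because \(\A\) and \(\B\) have the same vertices and every path in \(\B\) is a path in \(\A\), the implications ``\(\B\) (co)accessible \(\Rightarrow\) \(\A\) (co)accessible'' are immediate. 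What remains is the pair of reverse inclusions \(TL(\A) \subseteq TL(\B)\) and \(\pi(\A) \subseteq \pi(\B)\) together with the implications ``\(\A\) (co)accessible \(\Rightarrow\) \(\B\) (co)accessible'', and I would establish all of these by a rerouting argument in the spirit of the proof of Theorem \ref{TL2}.

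Write \(u = d^0_1d^0_ix\) and \(w = d^1_1d^0_ix\), so that the only removed edge, \(d^0_ix\), runs from \(u\) to \(w\), and by hypothesis \(d^1_{3-i}x\) is the unique edge starting at \(w\). Thus a path lies in \(\B\) if and only if it never traverses \((d^0_ix)_\sharp\). The central observation is that whenever a path traverses \((d^0_ix)_\sharp\) and then continues, its continuation is forced to begin with \((d^1_{3-i}x)_\sharp\), since that is the unique edge out of \(w\); the resulting subpath \((d^0_ix)_\sharp \cdot (d^1_{3-i}x)_\sharp\) is one side of the elementary dihomotopy across \(x\), and I would replace it by the other side \((d^0_{3-i}x)_\sharp \cdot (d^1_ix)_\sharp\). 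The latter avoids \(d^0_ix\) and, by regularity of \(x\), genuinely lies in \(\B\). Performing this replacement at every traversal turns any path into a dihomotopic \(\B\)-path with the same endpoints and congruent label.

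For \(\pi(\A) \subseteq \pi(\B)\), a loop at \(I_\A\) never terminates at \(w\) (as \(w \neq I_\A\)), so each traversal of \((d^0_ix)_\sharp\) is interior and followed by \((d^1_{3-i}x)_\sharp\); the whole loop then reroutes into a dihomotopic \(\B\)-loop, preserving \([\overline{\lambda}_\A(\cdot)]\). For \(TL(\A) \subseteq TL(\B)\), given \(v \preceq [\overline{\lambda}_\A(\omega)]\) with \(\omega(0) = I_\A\), the only obstruction is a path whose last edge is \(d^0_ix\), i.e.\ one ending at \(w\); I would first extend such a path by the unique outgoing edge \((d^1_{3-i}x)_\sharp\), which only enlarges the label and so keeps \(v\) as a prefix, and then reroute. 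Coaccessibility is handled identically: since \(w \notin F_\A\), a path from any vertex to a final state never stops at \(w\), so every traversal of \((d^0_ix)_\sharp\) is again interior, reroutes, and preserves the endpoints.

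The step I expect to be the main obstacle is accessibility, namely re-establishing that \(w\) is reachable in \(\B\) after its incoming edge \(d^0_ix\) has been deleted; this is exactly where the two remaining hypotheses interact. Let \(y \neq d^0_ix\) be an edge with \(d^1_1y = w\). I would first argue that \(d^0_1y \neq w\): otherwise \(y\) would be an edge starting at \(w\) and different from \(d^1_{3-i}x\) (different because \(d^1_{3-i}x\) ends at \(d^1_1d^1_{3-i}x \neq w\) by regularity), contradicting the hypothesis that \(d^1_{3-i}x\) is the only edge starting at \(w\). Consequently the last edge of any \(\A\)-path from \(I_\A\) to \(d^0_1y\) differs from \(d^0_ix\), so all of its traversals of \((d^0_ix)_\sharp\) are interior and the rerouting above yields a \(\B\)-path from \(I_\A\) to \(d^0_1y\); appending \(y\) produces a \(\B\)-path to \(w\). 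Finally, for an arbitrary vertex \(b\), an \(\A\)-path from \(I_\A\) to \(b\) either has last edge \(\neq d^0_ix\), in which case it reroutes to a \(\B\)-path to \(b\), or ends with \(d^0_ix\), forcing \(b = w\), which is the case just settled. Hence \(\B\) is accessible whenever \(\A\) is, completing the verification that \(\A \simeq \B\).
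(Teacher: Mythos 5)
Your proposal is correct, and it is precisely the adaptation of the proof of Theorem \ref{TL2} that the paper invokes without writing out: the paper gives no explicit proof of Theorem \ref{TL2b}, stating only that it ``is proved by adapting the arguments given in the proof of Theorem \ref{TL2}.'' Your rerouting of each interior occurrence of \((d^0_ix)_\sharp\cdot(d^1_{3-i}x)_\sharp\) across the square \(x\), the use of \(w=d^1_1d^0_ix\notin\{I_\A\}\cup F_\A\) to rule out paths stopping at \(w\), and the use of the edge \(y\) (together with the observation \(d^0_1y\neq w\)) to restore reachability of \(w\) in \(\B\) correctly account for all of the theorem's hypotheses and match the intended argument.
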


Regarding vertex-star collapses, we have the following result:

\begin{theor} \label{TLvertex}
	Let \(x\) be a regular cube of degree \(n\geq 2\) of an HDA \(\A\), and let \(k_1, \dots, k_n \in\{0,1\}\) such at least one \(k_i = 0\), at least one \(k_i = 1\), \(d^{k_n}_1\cdots d^{k_1}_1x \notin \{I_\A\} \cup F_\A\), and \(\star(d^{k_n}_1\cdots d^{k_1}_1x) \subseteq x_{\sharp}(\lbrbrak 0,1 \rbrbrak ^{\otimes n})\). Consider the sub-HDA \(\B\subseteq \A\) defined by \(P_\B = {P_\A \setminus \star(d^{k_n}_1\cdots d^{k_1}_1x)}\) and \(\Sigma_\B = \Sigma_\A\). Then \(\A \simeq \B\).	
\end{theor}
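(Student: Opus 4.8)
The plan is to follow the pattern of the proof of Theorem \ref{TL2}, the extra room inside an $n$-cube making the rerouting argument cleaner. Write $v = d^{k_n}_1\cdots d^{k_1}_1x$ for the collapsed vertex and $w = d^{k_n}_1\cdots d^{k_1}_1\iota_n \in \lbrbrak 0,1\rbrbrak^{\otimes n}$, so that $v = x_\sharp(w)$ and $w$ has ``coordinates'' $(k_1,\dots,k_n)$. Since $\Sigma_\B = \Sigma_\A$ we have $(\Sigma_\B,D_\B) = (\Sigma_\A,D_\A)$, and since $v \notin \{I_\A\}\cup F_\A$ we get $I_\B = I_\A$ and $F_\B = F_\A$. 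The geometric realization of the inclusion $P_\B\hookrightarrow P_\A$ is a cubical dimap $\B\to\A$ which, for a vertex-star collapse, is a homotopy equivalence; hence Proposition \ref{HLinc} gives $HL(\A) = HL(\B)$, and Propositions \ref{TLinc} and \ref{piinc} give $TL(\B)\subseteq TL(\A)$ and $\pi(\B)\subseteq\pi(\A)$. It remains to prove the reverse inclusions $TL(\A)\subseteq TL(\B)$ and $\pi(\A)\subseteq\pi(\B)$ together with the four (co)accessibility implications.

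The technical core is the following rerouting lemma: every path $\omega$ in $P_\A$ with $\omega(0)\neq v$ and $\omega(\length_\omega)\neq v$ is dihomotopic in $\A$ to a path $\omega'\in P_\B^{\mathbb I}$ with the same endpoints. First I would observe that, since $\star(v)_0=\{v\}$ and the edges in $\star(v)$ are exactly those incident to $v$, a path lies in $P_\B$ iff it never visits $v$; moreover, by the hypothesis $\star(v)\subseteq x_\sharp(\lbrbrak 0,1\rbrbrak^{\otimes n})$ together with regularity (hence injectivity of $x_\sharp$), every edge incident to $v$ is the image under $x_\sharp$ of an edge of the cube incident to $w$, incoming precisely in the directions $i$ with $k_i = 1$ and outgoing in the directions $j$ with $k_j = 0$. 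At an intermediate visit of $\omega$ to $v$, the incoming edge has some direction $i$ with $k_i=1$ and the outgoing edge some direction $j$ with $k_j = 0$; the mixed-sign hypothesis forces such an $i,j$ to exist and to satisfy $i\neq j$. The $2$-face of $x$ spanned by directions $i,j$ (the remaining coordinates fixed at their $k$-values) is a genuine $2$-cube $z$ of $P_\A$ whose all-$0$ and all-$1$ corners are the two neighbours of $v$ on $\omega$ and whose corner opposite to $v$ is the vertex $v^\ast = x_\sharp(w^\ast)$ obtained by flipping the $i$th and $j$th coordinates of $w$; this $z$ provides an elementary dihomotopy replacing the sub-path through $v$ by the sub-path through $v^\ast$. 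Since $v^\ast\neq v$ and the two new edges are incident only to $v^\ast$ and the two neighbours of $v$, they all lie in $P_\B$, so the reroute removes one visit to $v$ without creating a new one. Iterating over the finitely many visits proves the lemma.

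With the lemma in hand the remaining inclusions follow quickly. For $\pi(\A)\subseteq\pi(\B)$, a loop at $I_\A$ neither starts nor ends at $v$ (as $I_\A\neq v$), so it reroutes to a loop at $I_\A$ in $P_\B$ with congruent label. For $TL(\A)\subseteq TL(\B)$, given a witnessing $\omega$ with $\omega(0)=I_\A$ and $u\preceq[\overline{\lambda}_\A(\omega)]$: if $\omega$ does not end at $v$ I reroute it directly; if it does, I first extend $\omega$ by one outgoing edge of $v$ (one exists because some $k_j = 0$), which preserves $u\preceq[\overline{\lambda}_\A(\cdot)]$ and moves the endpoint off $v$, and then reroute. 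In each case the rerouted path lies in $P_\B$, still starts at $I_\A$, and has congruent label, so $u\in TL(\B)$. The (co)accessibility implications are then routine: paths between states of $\B$ reroute into $P_\B$, giving ``$\A$ (co)accessible $\Rightarrow$ $\B$ (co)accessible''; conversely $v$ is reachable in $\A$ via an incoming edge from an already-reachable neighbour and coreachable via an outgoing edge to an already-coreachable neighbour, giving the other direction. Combining everything yields $\A\simeq\B$.

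I expect the main obstacle to be the rerouting lemma, and specifically the bookkeeping showing that each reroute genuinely stays inside $P_\B$ and strictly decreases the number of visits to $v$. This is where all four hypotheses (regularity, the star condition, the mixed-sign condition, and $v\notin\{I_\A\}\cup F_\A$) must be used, and where one must match the chosen $2$-face of $x$ to the elementary-dihomotopy data $d^0_1z,d^1_1z,d^0_2z,d^1_2z$ correctly for both possible orderings of $i$ and $j$.
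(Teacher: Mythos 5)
Your proof is correct and its overall skeleton coincides with the paper's: homotopy equivalence of the inclusion gives \(HL(\A)=HL(\B)\), Propositions \ref{TLinc} and \ref{piinc} give the easy inclusions, and everything else reduces to a rerouting lemma stating that paths with endpoints off the collapsed vertex are dihomotopic to paths avoiding it. The genuine difference is that the paper simply cites this rerouting lemma from \cite[Thm.\ 4.6.1]{topabs} and \cite{Misamore}, whereas you prove it from scratch, and your argument is sound: the star condition plus regularity force every edge at \(v=x_\sharp(w)\) to be the image of an edge of the \(n\)-cube at \(w\), incoming in directions with \(k_i=1\) and outgoing in directions with \(k_j=0\); the mixed-sign hypothesis then guarantees that at each intermediate visit the incoming and outgoing directions are distinct, so the \(2\)-face of \(x\) they span supplies an elementary dihomotopy across the opposite corner \(v^\ast\neq v\), and injectivity of \(x_\sharp\) ensures the two replacement edges avoid \(v\), so each reroute strictly decreases the number of visits. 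Your self-contained version buys a proof readable without the external references (at the cost of the \(i<j\) versus \(j<i\) bookkeeping you flag, which does work out: in either ordering the corner path through \(v\) is one of \((d^0_2z)_\sharp\cdot(d^1_1z)_\sharp\) or \((d^0_1z)_\sharp\cdot(d^1_2z)_\sharp\) and the elementary dihomotopy swaps it for the other). One further small point in your favor: for \(TL(\A)\subseteq TL(\B)\) a witnessing path may terminate at \(v\), a case not literally covered by a lemma about paths with endpoints in \(\B\); your trick of first appending an outgoing edge of \(v\) handles this explicitly, where the paper leaves it implicit.
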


\begin{proof}
	Since there exists a path in \(\A\) from \(d^{0}_1\cdots d^{0}_1x\) to \(d^{k_n}_1\cdots d^{k_1}_1x\), \(\A\) is accessible if \(\B\) is accessible. Since there exists a path in \(\A\) from  \(d^{k_n}_1\cdots d^{k_1}_1x\) to \(d^{1}_1\cdots d^{1}_1x\), \(\A\) is coaccessible if \(\B\) is coaccessible. By \cite[Thm. 4.6.1]{topabs}, \cite{Misamore}, every path \(\omega \in P_\A^{\mathbb I}\) with endpoints in \(\B\) is dihomotopic to a path \(\omega' \in P_\B^{\mathbb I}\). Since \(\A\) and \(\B\) have the same initial and final states, this implies that \(\B\) is (co)accessible if \(\A\) is (co)accessible. By Propositions \ref{TLinc} and \ref{piinc}, it also follows that \(TL(\A) = TL(\B)\) and \(\pi(\A) = \pi(\B)\). Since the inclusion \(|P_\B| \hookrightarrow |P_\A|\) is a homotopy equivalence, \(HL(\A) = HL(\B)\).   
\end{proof}

\subsection{Cube merging} \label{merging}

Let \(\A\) and \(\B\) be two HDAs over the same concurrent alphabet. If \(\B\) is weakly regular and \(\A\) is obtained from \(\B\) by merging cubes by means of a subdivision homeomorphism (see Section \ref{cubdi}), then \(\A\) and \(\B\) are weakly equivalent. More precisely, we have the following theorem:

\begin{theor} \label{merge}
	Let \((f, id) \colon \A \to \B\) be an elementary cubical dimap such that \(f\colon |P_\A| \to |P_\B|\) is a subdivision homeomorphism and \(f_0(F_\A) = F_\B\). If \(\B\) is weakly regular, then \(\A \simeq \B\).
\end{theor}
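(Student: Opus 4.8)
The plan is to verify, in both directions, the three defining conditions of weak implementation; since \(\sigma = id\) we automatically have \((\Sigma_\A, D_\A) = (\Sigma_\B, D_\B)\), and several of the conditions follow at once from propositions already proved. First I would observe that a subdivision homeomorphism \(f\) is in particular a homotopy equivalence, so Proposition \ref{HLinc} gives \(HL(\A) = HL(\B)\); this settles the homology language in both directions. Next, since \((f, id)\colon \A \to \B\) is a cubical dimap, Propositions \ref{TLinc} and \ref{piinc} yield \(TL(\A) \subseteq TL(\B)\) and \(\pi(\A) \subseteq \pi(\B)\). Thus \(\A \sqsubseteq \B\) will follow once the accessibility conditions are checked, and the whole theorem will follow once I also establish the reverse inclusions \(TL(\B) \subseteq TL(\A)\) and \(\pi(\B) \subseteq \pi(\A)\) together with the reverse accessibility statements.

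The heart of the argument is a path-lifting lemma: every path \(\nu\) in \(\B\) whose endpoints are images under \(f_0\) of vertices of \(\A\) is dihomotopic to \(f^{\mathbb I}(\omega)\) for some path \(\omega\) in \(\A\). To prove it I would pull \(\nu\) back through the homeomorphism, obtaining the directed path \(f^{-1}\circ|\nu|\) in \(|P_\A|\) between two vertices of \(P_\A\), and then straighten this dipath to the realization of a genuine precubical path \(\omega\) of \(\A\) within the cubes it traverses. Here the weak regularity of \(\B\) is exactly what guarantees that the cubes of \(\B\) sit inside the cubes of \(\A\) as honest sub-boxes, so that the inverse of the subdivision is well behaved on paths; this is where I expect the real work to lie, and it is the step for which the hypothesis on \(\B\) is needed. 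It can be carried out by adapting the subdivision arguments of \cite{weakmor}. Because \(f^{\mathbb I}\) is compatible with dihomotopy and \(\sigma = id\), one then gets \(\overline{\lambda}_\B(\nu) \equiv \overline{\lambda}_\B(f^{\mathbb I}(\omega)) = \overline{\lambda}_\A(\omega)\), so \(\nu\) and \(\omega\) carry the same trace. Applying the lemma to loops at \(I_\B\) gives \(\pi(\B) \subseteq \pi(\A)\); applying it to a path witnessing \(v \in TL(\B)\), after first extending that path monotonically to the terminal corner of the last cube it meets so that its endpoint becomes a vertex of \(\A\) and its original label becomes a prefix of the extended one, gives \(v \in TL(\A)\) and hence \(TL(\B) \subseteq TL(\A)\).

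Finally I would treat accessibility and coaccessibility. Since \(f\) is injective, \(f_0\) is injective on vertices and maps vertices to vertices, so reachability and coreachability of a vertex \(a\) of \(\A\) correspond, via the lifting lemma and the map \(f^{\mathbb I}\), to those of \(f_0(a)\) in \(\B\); moreover every vertex of \(\B\) lies in some cube of \(\A\) and is joined to that cube's initial and terminal corners by monotone grid paths inside the subdivided cube, so reachability and coreachability pass between the two HDAs in both directions. The condition \(f_0(F_\A) = F_\B\) is used precisely here, to ensure that coaccessing paths match up, i.e.\ that the final states of \(\B\) are exactly the \(f_0\)-images of the final states of \(\A\). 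Combining these observations shows that \(\A\) is accessible if and only if \(\B\) is and that \(\A\) is coaccessible if and only if \(\B\) is, completing the verification of both \(\A \sqsubseteq \B\) and \(\B \sqsubseteq \A\), whence \(\A \simeq \B\). The main obstacle is the path-lifting lemma and its reliance on weak regularity; everything else is bookkeeping built on the earlier propositions.
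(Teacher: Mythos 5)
Your proposal follows essentially the same route as the paper's proof: the forward inclusions via Propositions \ref{TLinc}, \ref{piinc}, and \ref{HLinc}, a path-lifting lemma for the subdivision (which the paper obtains from \cite{weakmor} after noting that weak regularity of \(\B\) transfers to \(\A\) via \cite{topabs}), the extension of a trace-witnessing path to the terminal corner of its carrier cube, and the transfer of (co)accessibility, which the paper simply cites from \cite{topabs}. The argument is correct and matches the paper's in structure and in where the real work (and the weak regularity hypothesis) is located.
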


\begin{proof}
	The fact that \(\A\) is (co)accessible if and only if \(\B\) is (co)accessible is shown in \cite[Thm. 6.2.3]{topabs}. Since \(f\) is a homotopy equivalence, \(HL(\A) = HL(\B)\). By Propositions \ref{TLinc} and \ref{piinc}, it remains to prove that \(\pi(\B) \subseteq \pi(\A)\) and \(TL(\B) \subseteq TL(\B)\). By \cite[Prop 3.4.1]{topabs}, \(\A\) is weakly regular. By \cite[Prop. 4.7.4]{weakmor}, it follows that for every path \(\beta \in P_\B^{\mathbb I}\) from \(I_\B\) to a vertex of the form \(f_0(a)\) where \(a \in (P_\A)_0\), there exists a path \(\alpha \in P_\A^{\mathbb I}\) from \(I_\A\) to \(a\) such that \(f^{\mathbb I}(\alpha) \sim \beta\). This immediately implies that \(\pi(\B) \subseteq \pi(\A)\). Consider \(m \in TL(\B)\). Let \(\omega \in P_\B^{\mathbb I}\) be a path starting in \(I_\B\) such that \(m \preceq [\overline{\lambda}_\B(\omega)]\). Consider the vertex \(b = \omega(\length_\omega)\). Then there exist an integer \(n \geq 0\) and an element \(c(b) \in (P_\A)_n\), called the \emph{carrier} of \(b\), such that \(f([c(b),u]) = [b,()]\) for some \(u \in \mathopen] 0,1\mathclose[^{n}\). Since \(f\) is a subdivision homeomorphism, there exist integers \(l_1, \dots, l_n\geq 1\), a morphism of precubical sets \(\chi \colon \bigotimes_{i=1}^n \lbrbrak 0,l_i\rbrbrak \to P_\B\), and increasing homeomorphisms \(\phi_i \colon [0,1] \to  [0,l_i ]\) \((i \in \{1, \dots, n\})\) such that the following diagram commutes: 
		\[\xymatrix{
			|\lbrbrak	0, 1\rbrbrak^{\otimes n}| \ar[r]^(0.55){\approx} \ar[d]_{|c(b)_{\sharp}|} & [0,1]^n \ar[rr]^{\phi_1 \times \cdots \times \phi_n} && \prod \limits_{i=1}^n [0,l_i]  \ar[r]^(0.45){\approx} & |\bigotimes \limits_{i=1}^n \lbrbrak 0,l_i\rbrbrak | \ar[d]^{|\chi|}\\
			|P_\A| \ar[rrrr]_{f} &&&& |P_\B|
		}\]
		Let us write \(\phi\) to denote the upper horizontal composite. Let \(r \in \N\), \(x \in (\bigotimes \limits_{i=1}^n \lbrbrak 0,l_i\rbrbrak)_r\), and \(v \in \mathopen]0,1\mathclose[^{r}\) be the uniquely determined elements such that \(\phi([\iota_n,u]) = [x,v]\). Since \([\chi(x),v] = f([c(b),u]) = [b,()]\), \(x\) is a vertex and \(\chi(x) = b\). Let \(\nu\) be a path in \(\bigotimes \limits_{i=1}^n \lbrbrak 0,l_i\rbrbrak\) from \(x\) to \((l_1, \dots, l_n)\). Since the \(\phi_i\) are increasing homeomorphisms, \(\phi([(1, \dots, 1),()]) = [(l_1, \dots , l_n),()]\). Hence  \([\chi(l_1,\dots,l_n),()] = f([d^1_1\cdots d^1_1c(b),()])\) and therefore \(\chi(l_1, \dots, l_n) = f_0(d^1_1\cdots d^1_1c(b))\). Thus \(\beta = \omega\cdot (\chi \circ \nu)\) is a path from \(I_\B\) to \(f_0(d^1_1\cdots d^1_1c(b))\). It follows that there exists a path \(\alpha \in P_\A^{\mathbb I}\) from \(I_\A\) to \(d^1_1\cdots d^1_1c(b)\) such that \(f^{\mathbb I}(\alpha) \sim \beta\). Since \(m \preceq [\overline{\lambda}_\B(\omega)]\), also \(m \preceq [\overline{\lambda}_\B(\beta)] = [\overline{\lambda}_\A(\alpha)]\). It follows that \(TL(\B) \subseteq TL(\A)\). 
\end{proof}

\subsection{Example} 

\begin{figure}[t]
	\center
	\begin{tikzpicture}[initial text={},on grid]

	\path[draw, fill=lightgray] (0,0)--(1,0)--(1,-1)--(0,-1)--cycle;    
	
	\path[draw, fill=lightgray] (1,0)--(2,0)--(2,-1)--(1,-1)--cycle;
	
	\path[draw, fill=lightgray] (2,0)--(4,0)--(3,-1)--(2,-1)--cycle;
	
	\path[draw, fill=lightgray] (0,-1)--(1,-1)--(1,-3)--(0,-3)--cycle;
	
	\path[draw, fill=lightgray] (1,-1)--(2,-1)--(2,-2)--(1,-3)--cycle;
	
	\path[draw, fill=lightgray] (3,-1)--(4,0)--(4,-2)--(3,-2)--cycle;
	
	\path[draw, fill=lightgray] (4,0)--(5,0)--(5,-2)--(4,-2)--cycle;
	
	\path[draw, fill=lightgray] (2,-2)--(3,-2)--(3,-3)--(1,-3)--cycle;
	
	\path[draw, fill=lightgray] (3,-2)--(4,-2)--(4,-3)--(3,-3)--cycle;
	
	\path[draw, fill=lightgray] (4,-2)--(5,-2)--(5,-3)--(4,-3)--cycle;
	
	\draw[] (3,-1) to [out=225,in=45] (-1.35,-1.65); 
	
	\draw[->] (-1.35,-1.65) to [out=225,in=180] (-0.1,-4);
	
	\node at (-2,-3) {\scalebox{0.85}{$\mathtt{t\!\coloneqq_0\! 1}$}};
	
	\draw[] (2,-2) to [out=45,in=225] (6.35,-1.35); 
	
	\draw[->] (6.35,-1.35) to [out=45,in=0] (5.1,1);

	\node at (7,0) {\scalebox{0.85}{$\mathtt{t\!\coloneqq_1\! 0}$}};

	\node[state,minimum size=0pt,inner sep =2pt,fill=white] (q_0) at (0,0)  {}; 
	
	\node[state,minimum size=0pt,inner sep =2pt,fill=white,accepting, initial,initial where=above,initial distance=0.2cm] (q_1) [right=of q_0,xshift=0cm] {};
	
	\node[state,minimum size=0pt,inner sep =2pt,fill=white] (q_2) [right=of q_1,xshift=0cm] {};
	
	\node[state,minimum size=0pt,inner sep =2pt,fill=white] (q_3) [right=of q_2,xshift=1cm] {};
	
	\node[state,minimum size=0pt,inner sep =2pt,fill=white] (q_4) [right=of q_3,xshift=0cm] {};
	
	\node[state,minimum size=0pt,inner sep =2pt,fill=white] (q_5) [below=of q_0,xshift=0cm] {};  
	
	\node[state,minimum size=0pt,inner sep =2pt,fill=white] (q_6) [right=of q_5,xshift=0cm] {};
	
	\node[state,minimum size=0pt,inner sep =2pt,fill=white] (q_7) [right=of q_6,xshift=0cm] {};
	
	\node[state,minimum size=0pt,inner sep =2pt,fill=white] (q_8) [right=of q_7,xshift=0cm] {};
	
	\node[state,minimum size=0pt,inner sep =2pt,fill=white] (q_9) [below=of q_7,xshift=0cm] {};
	
	\node[state,minimum size=0pt,inner sep =2pt,fill=white] (q_10) [right=of q_9,xshift=0cm] {};
	
	\node[state,minimum size=0pt,inner sep =2pt,fill=white] (q_11) [right=of q_10,xshift=0cm] {};
	
	\node[state,minimum size=0pt,inner sep =2pt,fill=white] (q_12) [right=of q_11,xshift=0cm] {};
	
	\node[state,minimum size=0pt,inner sep =2pt,fill=white] (q_17) [below=of q_12,xshift=0cm] {};
	
	\node[state,minimum size=0pt,inner sep =2pt,fill=white,accepting] (q_16) [left=of q_17,xshift=0cm] {};
	
	\node[state,minimum size=0pt,inner sep =2pt,fill=white] (q_15) [left=of q_16,xshift=0cm] {};
	
	\node[state,minimum size=0pt,inner sep =2pt,fill=white] (q_14) [left=of q_15,xshift=-1cm] {};
	
	\node[state,minimum size=0pt,inner sep =2pt,fill=white] (q_13) [left=of q_14,xshift=0cm] {};
	
	\node[state,minimum size=0pt,inner sep =2pt,fill=white] (p_0) [above=of q_4,xshift=0cm] {};
	
	\node[state,minimum size=0pt,inner sep =2pt,fill=white] (p_1) [below=of q_13,xshift=0cm] {};

	\path[->] 
	(p_0) edge[right] node[] {\scalebox{0.85}{$\mathtt{crit_0}$}} (q_4)
	
	(q_0) edge[below] node[] {\scalebox{0.85}{$\mathtt{b_1\!\!\coloneqq_1\!\! 0}$}} (q_1)
	(q_1) edge[below] node[] {\scalebox{0.85}{$\mathtt{b_1\!\!\coloneqq_1\!\! 1}$}} (q_2)
	(q_2) edge[below] node[] {\scalebox{0.85}{$\mathtt{t\!\!\coloneqq_1\!\! 0}$}} (q_3)
	(q_4) edge[above] node[] {\scalebox{0.85}{$\mathtt{b_0\!\!\coloneqq_0\!\! 0}$}} (q_3)
	
	(q_0) edge[left,] node[] {\scalebox{0.85}{$\mathtt{b_0\!\!\coloneqq_0\!\! 1}$}} (q_5)
	(q_1) edge[above] node {} (q_6)
	(q_2) edge[above] node {} (q_7)
	(q_3) edge[above] node {} (q_8)
	
	(q_5) edge[above] node {} (q_6)
	(q_6) edge[above] node {} (q_7)
	(q_7) edge[above] node {} (q_8)
	
	(q_7) edge[above] node {} (q_9)
	(q_10) edge[above] node {} (q_8)
	(q_11) edge[above] node {} (q_3)
	(q_12) edge[right] node[] {\scalebox{0.85}{$\mathtt{t\!\!\coloneqq_1\!\! 0}$}} (q_4)
	
	(q_10) edge[above] node {} (q_9)
	(q_11) edge[above] node {} (q_10)
	(q_12) edge[above] node {} (q_11)
	
	(q_5) edge[left] node[] {\scalebox{0.85}{$\mathtt{t\!\!\coloneqq_0\!\! 1}$}} (q_13)
	(q_6) edge[above] node {} (q_14)
	(q_14) edge[above] node {} (q_9)
	(q_15) edge[above] node {} (q_10)
	(q_16) edge[above] node {} (q_11)
	(q_17) edge[right] node[] {\scalebox{0.85}{$\mathtt{b_1\!\!\coloneqq_1\!\! 1}$}} (q_12)
	
	(q_13) edge[below] node[] {\scalebox{0.85}{$\mathtt{b_1\!\!\coloneqq_1\!\! 0}$}} (q_14)
	(q_15) edge[above] node[] {\scalebox{0.85}{$\mathtt{t\!\!\coloneqq_0\!\! 1}$}} (q_14)
	(q_16) edge[above] node[] {\scalebox{0.85}{$\mathtt{b_0\!\!\coloneqq_0\!\! 1}$}} (q_15)
	(q_17) edge[above] node[] {\scalebox{0.85}{$\mathtt{b_0\!\!\coloneqq_0\!\! 0}$}} (q_16)
	
	(p_1) edge[left] node[] {\scalebox{0.85}{$\mathtt{crit_1}$}} (q_13)
	
	(q_3) edge[above, bend right] node[] {\scalebox{0.85}{$\mathtt{crit_1}$}} (q_0)
	(q_14) edge[below, bend right] node[] {\scalebox{0.85}{$\mathtt{crit_0}$}} (q_17)
	;

	\end{tikzpicture}
	\caption{HDA for Peterson's algorithm (parallel arrows have the same label)}\label{peterHDA}
\end{figure}
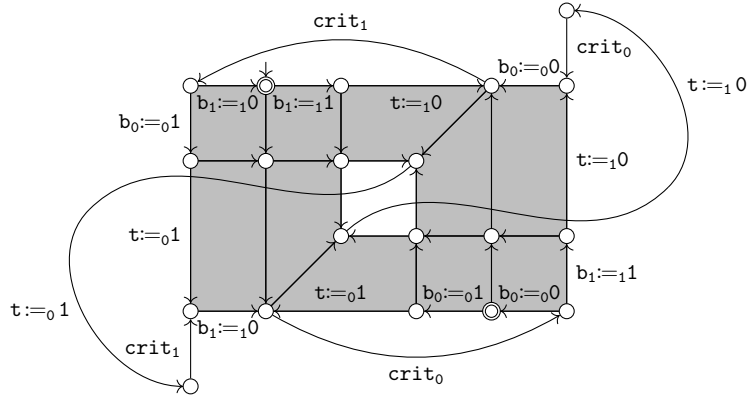

An HDA modeling the accessible part of the system given by
Peterson's mutual exclusion algorithm \cite{Peterson} is depicted in Figure \ref{peterHDA}. The concurrent alphabet is the pair \((\Sigma, D)\) where \(\Sigma\) is the set of edge labels and \(D\) is the canonical dependence relation (see Section \ref{HDAdef}). Peterson's algorithm is based on three shared variables---namely, the boolean variables \(\mathtt{b_0}\) and \(\mathtt{b_1}\) and the turn variable \(\mathtt{t}\), whose possible values are the process IDs, say \(\mathtt{0}\) and \(\mathtt{1}\). Process \(\mathtt{i}\)  executes the following protocol: 
\begin{itemize}
	\item Set \(\mathtt{b_i}\) to \(\mathtt{1}\) to indicate the intention to enter the critical section.
	\item Set \(\mathtt{t}\) to \(\mathtt{1-i}\) to give priority to the other process.
	\item Wait until \(\mathtt{b_{1-i} = 0}\) or \(\mathtt{t = i}\), and then enter the critical section. 
	\item Leave the critical section setting \(\mathtt{b_i}\) to \(\mathtt{0}\).
	\item Repeat the procedure from the beginning. 
\end{itemize}	
As explained in more detail in \cite{topabs}, the HDA for Peterson's algorithm may be reduced to the one depicted in Figure \ref{peterHDAmin} by collapsing and merging cubes in the way discussed in Sections 6.5 and 6.6 and, more precisely, using Theorems \ref{TL2}, \ref{TL2b}, \ref{TLvertex}, and \ref{merge}. Consequently, the two HDAs are weakly equivalent. 

\begin{figure}[t]
	\center
	\begin{tikzpicture}[initial text={},on grid]

	\draw[] (4,0) to [out=225,in=45] (-0.35,-1.65); 
	
	\draw[->] (-0.35,-1.65) to [out=225,in=180] (0.9,-3);
	
	\node[align=left] at (-1,-2) {\scalebox{0.85}{$\mathtt{b_0\!\!\coloneqq_0\!\! 1;}$}\\\scalebox{0.85}{$\mathtt{t\!\!\coloneqq_0\!\! 1;}$}\\\scalebox{0.85}{$\mathtt{crit_1;}$}\\\scalebox{0.85}{$\mathtt{b_1\!\!\coloneqq_1\!\! 0}$}};

	\draw[] (1,-3) to [out=45,in=225] (5.35,-1.35); 
	
	\draw[->] (5.35,-1.35) to [out=45,in=0] (4.1,0);
	
	\node[align=left] at (6.1,-1) {\scalebox{0.85}{$\mathtt{b_1\!\!\coloneqq_1\!\! 1;}$}\\\scalebox{0.85}{$\mathtt{t\!\!\coloneqq_1\!\! 0;}$}\\\scalebox{0.85}{$\mathtt{crit_0;}$}\\\scalebox{0.85}{$\mathtt{b_0\!\!\coloneqq_0\!\! 0}$}};
	
	\node[align=left] at (1.6,-1.8) {\scalebox{0.85}{$\mathtt{b_0\!\!\coloneqq_0\!\! 1;}$}\\\scalebox{0.85}{$\mathtt{t\!\!\coloneqq_0\!\! 1}$}}; 
	
	\node[align=left] at (3.4,-1.3) {\scalebox{0.85}{$\mathtt{b_1\!\!\coloneqq_1\!\! 1;}$}\\\scalebox{0.85}{$\mathtt{t\!\!\coloneqq_1\!\! 0}$}};

	\node[state,minimum size=0pt,inner sep =2pt,fill=white,accepting,initial,initial where=above,initial distance=0.2cm] (q_1) at (1,0) {};

	\node[state,minimum size=0pt,inner sep =2pt,fill=white] (q_3) [right=of q_2,xshift=1cm] {};

	\node[state,minimum size=0pt,inner sep =2pt,fill=white,accepting] (q_16) [left=of q_17,xshift=0cm] {};

	\node[state,minimum size=0pt,inner sep =2pt,fill=white] (q_14) [left=of q_15,xshift=-1cm] {};

	\path[->] 
	
	(q_1) edge[below] node {\scalebox{0.85}{$\mathtt{b_1\!\!\coloneqq_1\!\! 1;t\!\!\coloneqq_1\!\! 0}$}} (q_3)

	(q_1) edge[right] node{}(q_14)

	(q_16) edge[right] node {} (q_3)

	(q_16) edge[above] node {\scalebox{0.85}{$\mathtt{b_0\!\!\coloneqq_0\!\! 1;t\!\!\coloneqq_0\!\! 1}$}} (q_14)

	(q_3) edge[above, bend right] node {\scalebox{0.85}{$\mathtt{crit_1;b_1\!\!\coloneqq_1\!\! 0}$}} (q_1)
	(q_14) edge[below, bend right] node {\scalebox{0.85}{$\mathtt{crit_0;b_0\!\!\coloneqq_0\!\! 0}$}} (q_16)
	;
	\end{tikzpicture}
	\caption{Reduced HDA for Peterson's algorithm}\label{peterHDAmin} 
\end{figure}
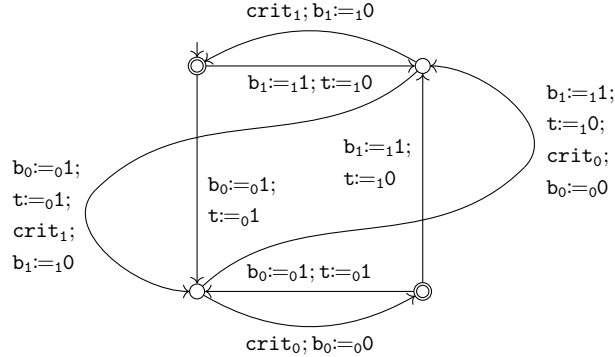

Since weakly equivalent HDAs have the same trace language, they have the same saturated safety properties (see Proposition  \ref{TLsatsafe}). In the case of Peteron's algorithm, such properties may thus be verified for the small HDA in Figure \ref{peterHDAmin} instead of for the bigger one depicted in Figure \ref{peterHDA}. This applies in particular to mutual exclusion, which is the saturated safety property given by the set 
\[\bigcup _{\mathtt{i=0}}^\mathtt{1}\Sigma ^* \cdot \{\mathtt{crit_i}\}\cdot (\Sigma\setminus \{\mathtt{b_i\!\!\coloneqq_i\!\!0}\})^ * \cdot \{\mathtt{crit_{1-i}}\}.\]
Another important feature of Peterson's algorithm is starvation freedom: a process that requests access to the critical section will eventually obtain it. This liveness property can be inferred from the saturated safety properties given by the sets 
\[\Sigma^* \cdot \{a\} \cdot (\Sigma\setminus \{\mathtt{crit_0},\mathtt{crit_1}\})^*\cdot \{a\} \quad (a \in \Sigma \setminus \{\mathtt{crit_0},\mathtt{crit_1}	\})\]
and 
\[\Sigma ^* \cdot \{\mathtt{b_i\!\!\coloneqq_i\!\!1}\} \cdot (\Sigma \setminus \{\mathtt{crit_i}\})^*\cdot \{\mathtt{crit_{1-i}}\} \cdot (\Sigma\setminus \{\mathtt{crit_i}\})^*\cdot \{\mathtt{crit_{1-i}}\} \quad (\mathtt{i} \in \{\mathtt{0},\mathtt{1}\}).\]
Starvation freedom of Peterson's algorithm can thus be established using any HDA weakly equivalent to the one of Figure \ref{peterHDA}. It should be noted, however, that the trace language only contains information on saturated safety properties (see Proposition  \ref{TLsatsafe}) and that therefore weak equivalence does not preserve liveness properties in general.

\section{Concluding remarks}
This paper introduced weak equivalence, a coarse notion of equivalence for higher-dimensional automata. Although equivalences for HDAs do not really fit into van Glabbeek's linear time - branching time spectrum \cite{vanGlabbeekSpectrumI}, one might want to know how weak equivalence compares with trace equivalence, the coarsest equivalence in the spectrum. What can be said is that two HDAs over the same concurrent alphabet will have the same trace language if their underlying automata are trace equivalent. On the other hand, the underlying automata of weakly equivalent HDAs will normally only be trace equivalent up to congruence. Thus, ignoring the higher-dimensional structure of HDAs and comparing only what is comparable, weak equivalence may be considered weaker than trace equivalence. 

As we have pointed out, history-preserving bisimilar HDAs (see \cite{vanGlabbeek}) need not be weakly equivalent. It would be interesting to know under which conditions history-preserving bisimilarity implies weak equivalence.

We have shown that weak equivalence is a congruence with respect to the tensor product and, at least in the coaccessible case, the coproduct of HDAs. This fact and our results on the reduction of HDAs provide means to establish that two HDAs are weakly equivalent. A fundamental problem in this context is whether weak equivalence is decidable for finite HDAs. Given the undecidability of the equivalence problem for regular trace languages \cite{AalbersbergHoogeboom}, it seems likely that weak equivalence is undecidable as well. 

The homology language of an HDA \(\A\) has been defined as a graded submodule of the exterior algebra \(\Lambda(\Sigma_\A)\). An interesting question is which submodules of exterior algebras may actually arise as homology languages of HDAs. It seems possible to show that \(HL(\A)\) is necessarily a graded subcoalgebra of \(\Lambda(\Sigma_\A)\). Assuming that  this is true, the question becomes: Which subcoalgebras of an exterior algebra are homology languages?

The term \emph{weak equivalence} has a particular meaning in homotopy theory. A natural question is thus whether there exists a homotopy theory of HDAs such that two HDAs are weakly equivalent in the sense of this paper if and only if they are weakly equivalent in the homotopy theory.

\newcommand{\etalchar}[1]{$^{#1}$}
\providecommand{\bysame}{\leavevmode\hbox to3em{\hrulefill}\thinspace}
\providecommand{\MR}{\relax\ifhmode\unskip\space\fi MR }
\providecommand{\MRhref}[2]{%
  \href{http://www.ams.org/mathscinet-getitem?mr=#1}{#2}
}
\providecommand{\href}[2]{#2}


\begin{thebibliography}{FGH{\etalchar{+}}16}

\bibitem[AH89]{AalbersbergHoogeboom}
IJ.J. Aalbersberg and H.J. Hoogeboom, \emph{{Characterizations of the
  Decidability of Some Problems for Regular Trace Languages}}, Mathematical
  Systems Theory \textbf{22} (1989), 1--19.

\bibitem[AR88]{AalbersbergRozenberg}
IJ.J. Aalbersberg and G.~Rozenberg, \emph{{Theory of traces}}, Theoretical
  Computer Science \textbf{60} (1988), no.~1, 1--82.

\bibitem[BK08]{BaierKatoen}
C.~Baier and J.-P. Katoen, \emph{{Principles of Model Checking}}, The MIT
  Press, 2008.

\bibitem[Bou74]{BourbakiAlgI}
N.~Bourbaki, \emph{{Algebra I}}, Addison-Wesley, 1974.

\bibitem[Die90]{Diekert}
V.~Diekert, \emph{{Combinatorics on Traces}}, Lecture Notes in Computer
  Science, vol. 454, Springer, 1990.

\bibitem[Dij68]{DijkstraCSP}
E.W. Dijkstra, \emph{{Cooperating sequential processes}}, {Programming
  Languages: NATO Advanced Study Institute} (F.~Genuys, ed.), Academic Press,
  1968, pp.~43--112.

\bibitem[DM97]{DiekertMetivier}
V.~Diekert and Y.~M\'etivier, \emph{{Partial Commutation and Traces}},
  {Handbook of Formal Languages} (Rozenberg G. and Salomaa A., eds.), vol.~3,
  Springer, 1997, pp.~457--533.

\bibitem[DM11]{DiekertMuscholl}
V.~Diekert and A.~Muscholl, \emph{{Trace Theory}}, {Encyclopedia of Parallel
  Computing} (D.~Padua, ed.), Springer, 2011, pp.~2071--2079.

\bibitem[Dol80]{Dold}
A.~Dold, \emph{{Lectures on Algebraic Topology}}, second ed., Grundlehren der
  mathematischen Wissenschaften, vol. 200, Springer-Verlag, 1980.

\bibitem[FGH{\etalchar{+}}16]{FGHMR}
L.~Fajstrup, E.~Goubault, E.~Haucourt, S.~Mimram, and M.~Raussen,
  \emph{{Directed Algebraic Topology and Concurrency}}, Springer, 2016.

\bibitem[For98]{FormanMorseTheory}
R.~Forman, \emph{{Morse theory for cell complexes}}, Adv. Math. \textbf{134}
  (1998), no.~1, {90--145}.

\bibitem[Gau08]{GaucherProcess}
P.~Gaucher, \emph{{Towards a homotopy theory of process algebra}}, Homology,
  Homotopy and Applications \textbf{10} (2008), no.~1, 353--388.

\bibitem[Gla01]{vanGlabbeekSpectrumI}
R.J.~van Glabbeek, \emph{{The Linear Time - Branching Time Spectrum I. The
  Semantics of Concrete, Sequential Processes}}, {Handbook of Process Algebra}
  (J.A Bergstra, A.~Ponse, and S.A. Smolka, eds.), Elsevier, 2001, pp.~3--99.

\bibitem[Gla06]{vanGlabbeek}
\bysame, \emph{{On the expressiveness of higher dimensional automata}},
  Theoretical Computer Science \textbf{356} (2006), no.~3, 265--290.

\bibitem[Gla10]{vanGlabbeekCoarsest}
\bysame, \emph{{The Coarsest Precongruences Respecting Safety and Liveness
  Properties}}, Theoretical Computer Science. TCS 2010 (C.S. Calude and
  V.~Sassone, eds.), IFIP Advances in Information and Communication Technology,
  vol. 323, Springer, 2010, pp.~32--52.

\bibitem[GM12]{GoubaultMimram}
E.~Goubault and S.~Mimram, \emph{{Formal relationships between geometrical and
  classical models for concurrency}}, Electronic Notes in Theoretical Computer
  Science \textbf{283} (2012), 77--109.

\bibitem[God96]{Godefroid}
P.~Godefroid, \emph{{Partial-Order Methods for the Verification of Concurrent
  Systems: An Approach to the State-Explosion Problem}}, Lecture Notes in
  Computer Science, vol. 1032, Springer, 1996.

\bibitem[Gra09]{GrandisBook}
M.~Grandis, \emph{{Directed Algebraic Topology: Models of Non-Reversible
  Worlds}}, New Mathematical Monographs, vol.~13, Cambridge University Press,
  2009.

\bibitem[Hat01]{Hatcher}
A.~Hatcher, \emph{{Algebraic Topology}}, Cambridge University Press, 2001.

\bibitem[JNW96]{JoyalNielsenWinskel}
A.~Joyal, M.~Nielsen, and G.~Winskel, \emph{{Bisimulation from Open Maps}},
  Information and Computation \textbf{127} (1996), 164--185.

\bibitem[Kah14a]{hgraph}
T.~Kahl, \emph{{The homology graph of a precubical set}}, Homology, Homotopy
  and Applications \textbf{16} (2014), no.~1, 119--138.

\bibitem[Kah14b]{weakmor}
\bysame, \emph{{Weak morphisms of higher dimensional automata}}, Theoretical
  Computer Science \textbf{536} (2014), 42--61.

\bibitem[Kah16]{topabs}
\bysame, \emph{{Topological abstraction of higher-dimensional automata}},
  Theoretical Computer Science \textbf{631} (2016), 97--117.

\bibitem[Kah18a]{labels}
\bysame, \emph{{Labeled homology of higher-dimensional automata}}, Journal of
  Applied and Computational Topology \textbf{2} (2018), no.~3-4, 271--300.

\bibitem[Kah18b]{pg2hda}
\bysame, \emph{{pg2hda \text{\normalfont [Computer software]}}},
  \url{{http://w3.math.uminho.pt/~kahl/}}, 2018.

\bibitem[Kah19]{transhda}
\bysame, \emph{{Higher-dimensional automata modeling shared-variable systems}},
  Logical Methods in Computer Science \textbf{15} (2019), no.~3, 28:1--28:21.

\bibitem[LV95]{LynchVaandrager}
N.~Lynch and F.~Vaandrager, \emph{{Forward and Backward Simulations: I. Untimed
  Systems}}, Information and Computation \textbf{121} (1995), 214--233.

\bibitem[Maz87]{Mazurkiewicz}
A.~Mazurkiewicz, \emph{{Trace theory}}, {Petri Nets: Applications and
  Relationships to Other Models of Concurrency} (W.~Brauer, W.~Reisig, and
  G.~Rozenberg, eds.), Lecture Notes in Computer Science, vol. 255, Springer,
  1987, pp.~279--324.

\bibitem[Maz95]{Mazurkiewicz2}
\bysame, \emph{{Introduction to Trace Theory}}, {The Book of Traces}
  (V.~Diekert and G.~Rozenberg, eds.), World Scientific, 1995, pp.~3--41.

\bibitem[Mis15]{Misamore}
M.D. Misamore, \emph{{Computing path categories of finite directed cubical
  complexes}}, Applicable Algebra in Engineering, Communication and Computing
  \textbf{26} (2015), no.~1-2, 151--164.

\bibitem[Pel93]{Peled}
D.~Peled, \emph{{All from One, One for All: on Model Checking Using
  Representatives}}, {Proc. of CAV'93}, Lecture Notes in Computer Science, vol.
  697, Springer, 1993, pp.~409--423.

\bibitem[Pet81]{Peterson}
G.L. Peterson, \emph{{Myths about the mutual exclusion problem}}, Information
  Processing Letters \textbf{12} (1981), no.~3, 115--116.

\bibitem[Pil18]{chomp}
P.~Pilarczyk, \emph{{CHomP \text{\normalfont [Computer software]}}},
  \url{{http://chomp.rutgers.edu/Projects/Computational_Homology/OriginalCHomP/software/}},
  2002-2018.

\bibitem[Pra91]{Pratt}
V.~Pratt, \emph{{Modeling Concurrency with Geometry}}, {POPL '91, Proceedings
  of the 18th ACM SIGPLAN-SIGACT symposium on Principles of programming
  languages}, ACM New York, NY, USA, 1991, pp.~311--322.

\bibitem[Ste67]{SteenrodConvenient}
N.E. Steenrod, \emph{{A convenient category of topological spaces}}, Michigan
  Math. J. \textbf{14} (1967), 133--152.

\bibitem[WN95]{WinskelNielsen}
G.~Winskel and M.~Nielsen, \emph{{Models for concurrency}}, {Handbook of Logic
  in Computer Science (vol. 4): Semantic Modelling} (S.~S.~Abramsky, D.M.
  Gabbay, and T.S.E. Maibaum, eds.), Oxford University Press, 1995, pp.~1--148.

\end{thebibliography}
\end{document}